\title{Nonparametric Bayesian label prediction on a large graph using truncated Laplacian 
regularization}
\author{Jarno Hartog and Harry van Zanten}
\date{April 10, 2018}
\newcommand{\given}{\,|\,}
\newcommand{\RR}{\mathbb{R}}
\theoremstyle{plain}
\newtheorem{lem}{Lemma}
\begin{document}

\maketitle

\begin{abstract}
This article describes an implementation of a nonparametric Bayesian approach to solving binary classification problems on graphs. We consider a hierarchical Bayesian approach with a prior that is 
constructed by truncating a series expansion of the soft label function using the graph Laplacian eigenfunctions as basisfunctions. 
We compare our truncated prior to the untruncated Laplacian based prior in simulated and real data examples to illustrate the improved scalability in terms of size of the underlying graph. 
\end{abstract}

\section{Introduction}

Classification problems on graphs arise in various contexts, including for instance prediction in protein-protein interaction graphs and graph-based semi-supervised learning. 
In this paper we consider problems in which the graph is considered as given and 
binary labels, say $0$ or $1$, 
are given at some of the vertices. The goal is to predict the missing labels. A statistical approach is to view this as a binary regression problem where from the available data, first the binary regression function, or soft label function $\ell$ is estimated which gives for every vertex $i$ the probability $\ell(i)$ that the vertex has label $1$. Subsequently, the estimated soft label function can be  used for prediction by thresholding it.

In many cases it is not natural, or desirable, to postulate a certain parametric form 
for the soft label function. Instead it is common to explicitly or implicitly 
assume some form of  `smoothness' of the function.  
The main idea behind nonparametric methods for this problem is to employ some form of 
regularization that exploits the fact that  the graph geometry should somehow help 
to predict the correct labels, in the sense that vertices that are `close' should 
have `similar' labels. 
Various methods  have
been considered in the literature,  including penalized least squares regression using a Laplacian-based penalty
(e.g.\ \cite{ando2007, belkin2004, kolaczyk2009, smola2003, zhu2005}), penalization using the total variation norm (e.g.\ \cite{ryan}) 
and Bayesian regularization (e.g.\ \cite{jarno}, \cite{andrew}, \cite{me}).

In this article we consider an extension to the method proposed in \cite{jarno} in the context of binary classification problems on graphs. We have noisy observations of the labels 
of part of the vertices of a large given graph and the goal is to classify all vertices correctly, including those for which there is no observation available. In \cite{jarno}, an implementation is provided of nonparametric Bayesian prediction on graphs using Gaussian priors based on the Laplacian on the graph. Using the eigendecomposition of the Laplacian, we can view this prior as a Gaussian series prior
\begin{equation}\label{eq: series}
f = \sum_{i=1}^{n} g_i u^{(i)},
\end{equation}
where $n$ is the number of vertices of the gaph,  $u^{(i)}$ are the eigenvectors of the graph Laplacian and $g_i$ Gaussian random variables for $i=1, \ldots, n$. 
As indicated in \cite{jarno}, using the full Laplacian, i.e.\ all $n$ eigenvectors, 
is computationally demanding and limits the applicability of a Bayes procedure
with this natural prior for very large graphs. 
In the present paper, we address this issue by truncating the series at a random point for computational efficiency. This leads to a number of practical issues regarding 
prior choices etcetera, which we address in a simulation study. We illustrate the 
improved scalability by considering an example involving a graph with $90,000$ nodes.

Another advantage of truncating the series \eqref{eq: series}
at a random point is that it yields a more flexible prior in terms of adaptation to smoothness. 
Theoretical results for random inverse-gamma scaling of series priors with Gaussian coefficients 
and random truncation are given in \cite{vanwaaij2016} in the context of signal in white noise and estimating the drift function of a diffusion process. In these contexts it was shown that the truncated series prior with a geometric or Poisson prior on the truncation level achieves the optimal posterior contraction rate. Although in this work we are in a different setup where the results of \cite{vanwaaij2016} do not directly apply, we will also use a geometric prior and our proposed method will be a reversible jump Markov chain Monte Carlo algorithm similar to the method in \cite{vandermeulen2014} in the context of diffusion processes.

%

The rest of this paper is organized as follows. In the next section a more precise description of the problem setting and of the priors we consider are given. A sampling scheme to draw from the posterior distribution is given in Section 3 and some computational aspects are discussed in Section 4. In Section 5 we present numerical experiments. We first apply our method on a simple example on the path graph to illustrate the impact of the prior on the truncation level. In the next example we use 
data from the MNIST dataset to illustrate how the truncated prior is  more attractive than an untruncated prior in terms of computation time at a similar level of prediction accuracy. We also illustrate the impact of implicit regularization during the construction of the graph on the posterior draws and robustness of the prediction performance to changes in the construction of the graph. As a final example we apply our algorithm to a simple object tracking problem in a noisy environment, to further illustrate the improved scalability achieved by truncation. 
In Section 6 some concluding remarks are given.

\section{Observation model and priors}

\subsection{Observation model, latent variables and missing labels}

The context of our problem setup is the same as in \cite{jarno}. We have  
a given connected, simple graph $G = (V, E)$, with $\# V = n$ vertices,  denoted 
for simplicity by $V = \{1, 2, \ldots, n\}$. 
Associated to every vertex $i$ is a random, `hard label' $y_i \in \{0,1\}$. 
We assume that the variables $y_i$ are independent, so that their joint 
distribution is determined by the  
unobserved `soft label function' $\ell:V \to (0, 1)$ given by 
\[
\ell(i) = P(y_i = 1) = 1 - P(y_i = 0).
\]
The observed data is  $D = \{(i, y_i): i \in I^\text{obs}\}$, where $I^\text{obs}\subset V$ is drawn from an arbitrary distribution $\mu$ on the collection $2^V$ of subsets of vertices. The exact sampling mechanism $\mu$ is not important for the algorithm we propose, only that the subset is
independent of the labels.

Throughout we use the well-known latent variable perspective on this model
(cf. \cite{albert1993}). This is simply the observation that we can sample Bernoulli 
variables $y_1, \ldots,  y_n$ with succes probabilities $\ell(1), \ldots, \ell(n)$
using an intermediate layer of latent Gaussian variables. Indeed, 
let $\Phi$ be the probit link, i.e.\ the cdf of the standard normal distribution.
Then if $f: V\to \RR$ is given, sampling independent Bernoulli variables
$y_i$ with succes probabilities $\ell(i) = \Phi(f(i))$ can be achieved by susequently
  sampling independent Gaussian variables $z_i$ with mean $f(i)$ and variance $1$
and then setting $y_i = 1_{z_i > 0}$ for $i = 1, \ldots, n$.

\subsection{Prior on $f$}

The idea proposed in \cite{jarno} is essentially to 
 achieve a form of Bayesian  Laplacian regularization  in this problem by 
 putting a Gaussian prior on the function $f$ that determines the distribution 
 of the hard labels, with a precision matrix (inverse covariance) given 
 by a power of the graph Laplacian $L$. 
 The Laplacian is given by $L = A-D$, with $A$ the adjacency matrix of the graph 
 and $D$ the diagonal matrix of vertex degrees. It is a symmetric, non-negative 
 definite matrix. Since it always has eigenvalue $0$ however, it is not invertible, 
 so it has to be slight adapted before it can serve as a precision matrix. 
In \cite{jarno}  we made the matrix $L$ invertible by adding a small number $1/n^2$ 
 to the diagonal,  motivated by the result that the smallest nonzero eigenvalue of the 
 Laplacian is at least $4/n^2$ (Theorem 4.2 of \cite{mohar1991}). Adding a multiplicative scale parameter $c > 0$ and a hyperparameter $q \ge 0$ as well, \cite{jarno} proposed to employ the prior
\[
f \given c \sim N(0, (c(L + n^{-2}I)^q)^{-1}).
\]
Using the eigendecomposition of the Laplacian matrix $L = U \Lambda U^T$, with $\Lambda$ 
the matrix of Laplacian eigenvalues and $U$ the orthogonal matrix containing the corresponding 
eigenvectors, we can write $f = Ug$ for some vector $g$ and write the prior proposed 
in \cite{jarno} in series form as 
\[
f\given c  \sim \sum_{i=1}^{n} g_i u^{(i)},
\]
where $u^{(i)}$ is the $i$th eigenvector of $L$ and 
\[
g \given c \sim N(0, (c(\Lambda + n^{-2}I)^q)^{-1}).
\]

In the present paper we propose a prior that is 
more flexible and that improves scalability with the graph size $n$.  
We  truncate the series above at a random point $k$ that we will endow with an appropriate 
prior. Specifically, the prior we use in this paper can then be written as
\[
f\given k,c \sim \sum_{i=1}^k g_i u^{(i)},
\]
which depends on the random truncation level $k$ and random scale parameter $c$ via $g$. The prior on $g$ given $c$ and $k$ is in this case
\[
g \given k, c \sim N(0, (c(\Lambda_k + n^{-2}I)^q)^{-1}),
\]
where $\Lambda_k$ denotes the left upper block matrix given by the first $k$ rows and columns of $\Lambda$.

\subsection{Prior on $k$}

As we wish to express some preference for small models, i.e.\ low values for $k$, we use an exponential prior with rate $\gamma$ with probability mass function
\begin{equation}\label{eq: k}
P(k = l) \propto e^{-\gamma l}, \quad l = 1, \ldots, n.
\end{equation}
The rate $\gamma$ controls how strongly we prefer small models over large models, with 
the limiting case $\gamma \to 0$ giving uniform mass $1/n$ to all possible values $l=1, \ldots, n$. It can be seen that for every $l \in \{1, \ldots, n\}$, the prior \eqref{eq: k} on $k$ assigns 
mass
\[
(1-e^{-\gamma l})\frac{e^{\gamma n}}{e^{\gamma n} - 1}
\]
 to $\{1, \ldots, l\}$. For large graphs this is approximately $1-e^{-\gamma l}$ and can be used to set $\gamma$ in such a way that the prior is mostly concentrated on the first $l$ eigenvectors, possibly relieving the computational burden of having to compute all the eigenvalues. In some cases this might result in oversmoothing, but for large graphs it might simply be prohibitive to compute all the eigenvectors. 
 
 In our numerical experiments ahead we use the rule-of-thumb
 of setting $\gamma = 20/n$, unless otherwise stated.
 This corresponds to concentrating the  prior mass on the first eigenvectors. Specifically, 
 for this choice it holds that approximately $63\%$ of the prior mass is on the first $5\%$ of the eigenvectors, $86\%$ is on the first $10\%$ and $98\%$ is on the first $20\%$. 
Simulations indicate that this is an appropriate choice in many situations.

\subsection{Prior on $c$}

We use the natural choice of prior for $c$, which is a gamma prior with density
\[
p(c) \propto c^{a-1}e^{-bc}, \quad c>0
\]
for certain $a, b > 0$. This choice is motivated by the normal-inverse gamma partial conjugacy (see e.g. \cite{choudhuri2007, liang2007} in the context of our setting) and the positive results in the numerical experiments in \cite{jarno}. We can even choose the improper prior 
corresponding to $a=b=0$, in which case $p(c)\propto 1/c$.

\subsection{Full hierarchical model}

All in all, the full hierarchical scheme we work with is the following:
\begin{equation}\label{eq:fullmodel}
\begin{split}
D &= \{(i, y_i): i\in I^\text{obs}\},\\
I^\text{obs} &\sim \mu,\\
y_i &= 1_{z_i > 0}, \quad i = 1, \ldots, n,\\
z \given f &\sim N(f, I),\\
f &= \sum_{i=1}^k g_i u^{(i)},\\
g \given k, c &\sim N(0, (c(\Lambda_k + n^{-2}I)^q)^{-1}),\\
p(k) &\propto e^{-\gamma k},\\
p(c) &\propto c^{a-1}e^{-bc}.
\end{split}
\end{equation}
Our goal is to compute $f \given D$ and use it to predict the unobserved labels.

\section{Sampling scheme}

We will use a reversible jump Markov chain Monte Carlo algorithm (\cite{green1995}) to sample from $f \given D$ in the setup (\ref{eq:fullmodel}). This involves sampling repeatedly from the conditionals $p(z \given D, g, k, c)$,  $p(g, k \given D, z, c)$, and $p(c \given D, z, g, k)$. The joint move in $g$ and $k$ is the reversible jump step as $k$ is the dimension of $g$. We detail these three steps in the following subsections.

\subsection{Sampling from $p(z \given D, g, k, c)$}

As we identify $f = \sum_{i=1}^k g_i u^{(i)}$, we see that $z$ has the same full conditional as in the setup in \cite{jarno}. Given $D$, $f$ and $c$, the $z_i$'s are independent and 
\[
z_i \given D, f, c \sim
\begin{cases}
N(f_i, 1), & \text{if $i \not \in I^\text{obs}$}, \\
N_+(f_i, 1), & \text{if $i  \in I^\text{obs}$ and $y_i = 1$},\\
N_-(f_i, 1), & \text{if $i  \in I^\text{obs}$ and $y_i = 0$}.
\end{cases}
\]
Where $N_+$ and $N_-$ denote the normal distribution, conditioned to be positive or negative, respectively. Generating variables from these distribution can for example be done by a simple rejection algorithm or inversion (e.g. \cite{devroye1986}, see \cite{chopin} 
for a more refined analysis).

\subsection{Sampling from $p(g, k \given D, z, c)$}\label{sec:samplegk}

Since given $z$ we know all the $y_i$'s and $I^\text{obs}$ is independent of all other elements of the model, we have $p(g, k \given D, z, c) = p(g, k \given z, c)$. Due to the role of $k$ in the model we do not have conjugacy to draw from the exact conditional. Instead we use a reversible jump step. To this end we choose a proposal density $s(k' \given k)$. To generate a new draw for $g, k$ we propose the following steps:

\begin{itemize}
\item draw a proposal $k' \sim s(\cdot \given k)$;
\item draw an independent uniform random variable $v$ on $(0,1)$;
\item if 
\[
v \leq \frac{p(z \given k', c)p(k')s(k \given k')}{p(z \given k, c)p(k)s(k' \given k)},
\]
then accept the new proposal $k'$ and for $i = 1, \ldots k'$ draw
\[
g_i \sim N\left(\frac{z^Tu^{(i)}}{1 + c(\lambda_i + 1/n^2)^q}, \frac{1}{1 + c(\lambda_i + 1/n^2)^q}\right),
\]
otherwise retain the old draws $g$ and $k$.
\end{itemize}

We may choose a symmetric proposal distribution $s$, where, for example, the dimension can move a few steps up or down from the current level in a uniform, triangular or  binomial way. This is similar to a random walk proposal. In that case the ratio $s(k \given k')/s(k' \given k) = 1$. We may integrate to see that 
\begin{align*}
p(z|k, c) & = \int p(z|g, k, c)p(g|k, c)dg \\
& = (2\pi)^{-n/2} \left(\prod_{i=1}^k \frac{c (\lambda_i + 1/n^2)^q}{1 + c(\lambda_i + 1/n^2)^q} \right)^{1/2}e^{-\frac{1}{2}z^Tz + \frac{1}{2}\sum_{i=1}^k \frac{(z^Tu^{(i)})^2}{1 + c(\lambda_i + 1/n^2)^q}},
\end{align*}
resulting in the following three cases:

\begin{equation*}
\frac{p(z \given k', c)}{p(z \given k, c)} = 
\begin{cases}
\left(\prod_{i = k' + 1}^k \frac{1 + c(\lambda_i + 1/n^2)^q}{c(\lambda_i + 1/n^2)^q}\right)^{1/2} e^{-\frac{1}{2}\sum_{i = k'+1}^k \frac{z^Tu^{(i)}}{1 + c(\lambda_i + 1/n^2)^q}} & \text{if $k' < k$}, \\
1 &  \text{if $k' = k$}, \\
\left(\prod_{i = k + 1}^{k'} \frac{c(\lambda_i + 1/n^2)^q}{1 + c(\lambda_i + 1/n^2)^q}\right)^{1/2} e^{\frac{1}{2}\sum_{i = k+1}^{k'} \frac{z^Tu^{(i)}}{1 + c(\lambda_i + 1/n^2)^q}} & \text{if $k' > k$}.
\end{cases}
\end{equation*}

In our numerical experiments, we use $s(k' \given k) = k - 2 + s$, where $s \sim \text{Binom}(4, 0.5)$. In the following lemma we show detailed balance for this move, this implies that our proposed Markov chain has the correct stationary distribution (see e.g. \cite{brooks2011}).

\begin{lem}
The above proposed steps satisfy the relation
\[
p(g, k \given z, c)p((g, k) \to (g', k')) = p(g', k' \given z, c)p((g', k') \to (g, k)),
\]
where $p(A \to B)$ denotes the transition density from state $A$ to state $B$.
\end{lem}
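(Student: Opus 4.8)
The plan is to read the move as a Metropolis--Hastings step on the dimension $k$ combined with a full-conditional redraw of the coefficient vector $g$; detailed balance then follows from the standard Metropolis--Hastings cancellation, the Jacobian of the dimension-matching being trivial because no coordinates of the old $g$ are reused.

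First I would write the transition kernel explicitly. Its absolutely continuous part (with respect to Lebesgue measure on $\RR^{k'}$ together with counting measure in $k'$) describes a move $(g,k)\to(g',k')$ that (i) proposes $k'$ with probability $s(k'\given k)$, (ii) accepts with probability $\alpha(k,k'):=\min\bigl(1,R(k,k')\bigr)$, where
\[
R(k,k')=\frac{p(z\given k',c)\,p(k')\,s(k\given k')}{p(z\given k,c)\,p(k)\,s(k'\given k)},
\]
and (iii) on acceptance draws $g'$ with independent coordinates $g'_i\sim N\bigl(z^Tu^{(i)}/(1+c(\lambda_i+1/n^2)^q),\,1/(1+c(\lambda_i+1/n^2)^q)\bigr)$. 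Writing $h(g'\given k',z,c)$ for this product of Gaussians, the continuous part is
\[
p\bigl((g,k)\to(g',k')\bigr)=s(k'\given k)\,\alpha(k,k')\,h(g'\given k',z,c);
\]
a rejected proposal only contributes an atom at $(g,k)$, on which detailed balance is automatic. The one computation I would actually perform here is to complete the square in the exponent of $p(z\given g,k,c)\,p(g\given k,c)$, using orthonormality of the $u^{(i)}$, to verify that $h(g'\given k',z,c)$ is exactly the full conditional $p(g'\given k',z,c)$, and to note that $\alpha(k,k')$ depends on $k,k'$ alone.

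Next I would factor the target as $p(g,k\given z,c)=p(k\given z,c)\,p(g\given k,z,c)$ with $p(k\given z,c)=p(z\given k,c)\,p(k)/p(z\given c)$, substitute both sides of the claimed identity, and cancel the common factor $p(g\given k,z,c)\,p(g'\given k',z,c)/p(z\given c)$. What remains is the scalar Metropolis--Hastings balance relation for the move on $k$,
\[
p(z\given k,c)\,p(k)\,s(k'\given k)\,\min\bigl(1,R(k,k')\bigr)=p(z\given k',c)\,p(k')\,s(k\given k')\,\min\bigl(1,R(k',k)\bigr),
\]
which holds since $R(k',k)=1/R(k,k')$ and $r\min(1,1/r)=\min(r,1)$, so both sides equal $p(z\given k,c)\,p(k)\,s(k'\given k)\,\min\bigl(1,R(k,k')\bigr)$. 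This also covers $k'=k$ (then $R=1$ and the proposal is accepted with probability one). No symmetry of $s$ is required; if $s$ is symmetric then $s(k\given k')/s(k'\given k)=1$ and $R$ simplifies to the ratio displayed in the algorithm.

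I expect the only real subtlety to be the trans-dimensional bookkeeping in the first step: one must be explicit about the reference measure and observe that, because $g'$ is generated afresh rather than by a deterministic bijection applied to $g$, the dimension-matching requirement of the reversible jump construction \cite{green1995} is met with unit Jacobian, so that the fact that $h(\cdot\given k',z,c)=p(\cdot\given k',z,c)$ integrates to $1$ over $\RR^{k'}$ is all that is needed. Everything beyond that is the routine cancellation above.
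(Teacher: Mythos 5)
Your proposal is correct and follows essentially the same route as the paper's proof: write the transition density as proposal times acceptance probability times the full conditional of $g'$, factor $p(g,k\given z,c)=p(g\given z,k,c)\,p(k\given z,c)$, and let the $\min$ terms cancel via $R(k',k)=1/R(k,k')$. You simply make explicit several steps the paper leaves implicit (the verification that the Gaussian redraw is the full conditional of $g$, the triviality of the rejection atom, and the unit Jacobian of the dimension change), all of which are consistent with the paper's argument.
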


\begin{proof}
The transition density from $(g, k)$ to $(g', k')$ is 
\[
p((g, k) \to (g', k')) = \min \left\{1, \frac{p(z \given k', c)p(k')s(k \given k')}{p(z \given k, c)p(k)s(k' \given k)}\right\} s(k' \given k) p(g' \given z, k', c).
\]
Note that if the minimum is less than $1$, the opposite move has a minimum larger than one. Using 
\[
p(g, k \given z, c) = p(g \given z, k, c) p(k \given z, c),
\]
and that the priors for $k$ and $c$ are independent, the assertion is verified. In case the minimum is greater than $1$ can be dealt with in a similar way.
\end{proof}

\subsection{Sampling from $p(c \given D, z, g, k)$}

We see that given $g$, $c$ is independent of the rest of the variables. In this case we have the usual normal-inverse gamma conjugacy giving
\[
c \given g, k \sim \Gamma\left(a + \frac{k}{2}, b + \frac{1}{2}\sum_{i = 1}^k (\lambda_i + 1/n^2)^q g_i^2\right).
\]

\subsection{Overview of sampling scheme}

For convenience we summarize our sampling scheme.

\begin{algorithm}[H]
\caption{Sampling scheme.}
\label{alg:mcmc}
\begin{algorithmic}[1]
\REQUIRE Data $D= \{(i, y_i) : i \in I^\text{obs}\}$, initial values $g = g^{(0)}$, $k = k^{(0)}$ and $c = c^{(0)}$.
\ENSURE MCMC sample from the joint posterior $p(z, g, k, c| D)$.
\REPEAT
\STATE 
Compute $f = \sum_{i=1}^k g_i u^{(i)}$ and for $i =1, \ldots, n$, draw independent
\[
z_i \sim \begin{cases}
N(f_i, 1), & \text{if $i \not \in I^\text{obs}$}, \\
N_+(f_i, 1), & \text{if $i  \in I^\text{obs}$ and $y_i = 1$},\\
N_-(f_i, 1), & \text{if $i  \in I^\text{obs}$ and $y_i = 0$}.
\end{cases}
\]
\STATE 
Draw a proposal $k' \sim s(\cdot \given k)$ and a uniform $v$ on $(0, 1)$.
\STATE 
\IF{ 
\[
v \leq e^{-\gamma(k'-k)}\frac{s(k \given k')}{s(k' \given k)}
\begin{cases}
\left(\prod_{i = k' + 1}^k \frac{1 + c(\lambda_i + 1/n^2)^q}{c(\lambda_i + 1/n^2)^q}\right)^{1/2} e^{-\frac{1}{2}\sum_{i = k'+1}^k \frac{z^Tu^{(i)}}{1 + c(\lambda_i + 1/n^2)^q}} & \text{if $k' < k$}, \\
1 &  \text{if $k' = k$}, \\
\left(\prod_{i = k + 1}^{k'} \frac{c(\lambda_i + 1/n^2)^q}{1 + c(\lambda_i + 1/n^2)^q}\right)^{1/2} e^{\frac{1}{2}\sum_{i = k+1}^{k'} \frac{z^Tu^{(i)}}{1 + c(\lambda_i + 1/n^2)^q}} & \text{if $k' > k$}.
\end{cases}
\]
}
\STATE
Set $k=k'$ and for $i = 1, \ldots k$ draw
\[
g_i \sim N\left(\frac{z^Tu^{(i)}}{1 + c(\lambda_i + 1/n^2)^q}, \frac{1}{1 + c(\lambda_i + 1/n^2)^q}\right),
\]
\ELSE
\STATE
Retain $g$ and $k$.
\ENDIF
\STATE 
Draw
\[
c \sim \Gamma\left(a + \frac{k}{2}, b + \frac{1}{2}\sum_{i = 1}^k (\lambda_i + 1/n^2)^q g_i^2\right). 
\]
\UNTIL{You have a large enough sample.}
\end{algorithmic}
\end{algorithm}

\section{Computational aspects}

If the underlying function $f$ is smooth enough that we can approximate it with only a few $k \ll n$ eigenvectors, then the proposed algorithm needs an initial investment of $O(kn^2)$ to compute the first $k$ eigenvalues and eigenvectors, in case 
these are not explicitly know for the graph under consideration. 
Step 6 in Algorithm \ref{alg:mcmc} has complexity $O(kn)$ and is the most expensive step. In principle, it could be that $k=n$ and our method would be as complex as the algorithm proposed in \cite{jarno}. However, for very large graphs it could be prohibitive to calculate the full eigendecomposition. One could compute a fixed number of eigenvalues and eigenvectors and if the Markov chain is about to step beyond this number one could either compute the next eigenvalue-eigenvector pair on the fly or reject the proposed $k$.

\section{Numerical results}

In this section we numerically assess scalability of the method and 
the sensitivity to the choice of the truncation level.

\subsection{Impact of the truncation level}

To assess the impact of the truncation level $\gamma$ we first consider a basic example of simulated data on the path graph with $n = 500$ vertices. In this case, the eigenvalues of the Laplacian matrix are $\lambda_k = 4 \sin^2(\pi (k-1) / (2n))$ with corresponding eigenvectors given by
\begin{equation}\label{eq:eigenvector}
u_i^{(k)} = \begin{cases}
\frac{\sqrt{2}}{\sqrt{n}} \cos \left(\frac{\pi(i-\frac{1}{2})k}{n}\right) & k = 2,\ldots, n,\\
\frac{1}{\sqrt{n}} & k=1,
\end{cases}
\end{equation}
for $i=1, \ldots, n$. We construct a function $f_0$ on the graph representing the ground truth 
by setting
\[
f_0 = \sum_{k=1}^n w_k u^{(k)},
\]
where we choose $w_k = \sqrt{n}(k-1)^{-1.5}\sin (k-1)$ for $k > 1$ and $w_1=0$. We simulate noisy labels $Y_i$ on the graph vertices satisfying $P(Y_i=1) = \ell_0(i) = \Phi(f_0(i))$, where $\Phi$ is the cdf of the standard normal distribution. Finally, we remove at random $20\%$ of the labels to generate the set of observed labels $Y^\text{obs}$. Figure \ref{fig:path} shows the resulting soft label function $\ell_0$ and the simulated noisy labels $Y_i$.

\begin{figure}[H]
\centering
\begin{minipage}{0.5\textwidth}
\includegraphics[width=\textwidth]{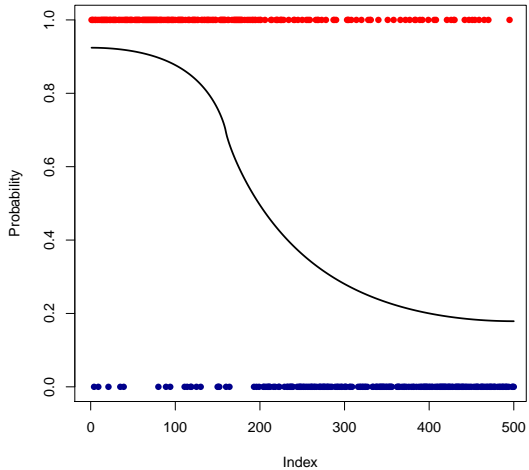}
\end{minipage}
\caption{Soft label function $\ell_0 = \Phi(f_0)$ and simulated noisy label on a path graph with $n=500$ nodes.}\label{fig:path}
\end{figure}

From the construction of our prior, we would like to spread out the mass in the prior on $c$ and perhaps favor low values in the prior on $k$ for computational efficiency. We can for example choose $a=b=0$, corresponding to an improper prior $p(c) \propto 1/c$ (as in \cite{choudhuri2007} and \cite{jarno}) or $a=1$ and $b=0$ so that $p(c) \propto 1$. From the construction of the prior on $k$ we see that high values for the parameter $\gamma$ correspond to more prior mass on low values of $k$ and low values spread out the prior mass over all possible values of $k$ with limiting case $\gamma = 0$ corresponding to $p(k)\propto 1$. In Figure \ref{fig:gamma1} we visualize the posterior for the soft label function $\ell$ for various values for $\gamma$. We have used the $a=b=0$ and proposal probabilities $(0.0625, 0.25, 0.375, 0.25, 0.0625)$ for $k-2, \ldots, k+2$, this corresponds to a $\text{Binom}(4, 0.5)$ proposal as mentioned in Section \ref{sec:samplegk}. 

The blue line is the posterior mean and the gray area depicts point-wise $95\%$ credible intervals. The bottom plots are the posterior draws for $k$. We observe that a high $\gamma$ results in low values for $k$, as expected. However, if we choose $\gamma$ too high, we might be oversmoothing as a result of taking too few eigenvectors. If we compare the cases $\gamma=0$ and $\gamma=0.1$ we observe only a little difference in the estimation performance, whereas the number of eigenvectors used in case of $\gamma=0.1$ is only a fraction of the number of eigenvectors used in case of $\gamma=0$.

\begin{figure}[H]
\begin{minipage}{0.33\textwidth}
\includegraphics[width=\textwidth]{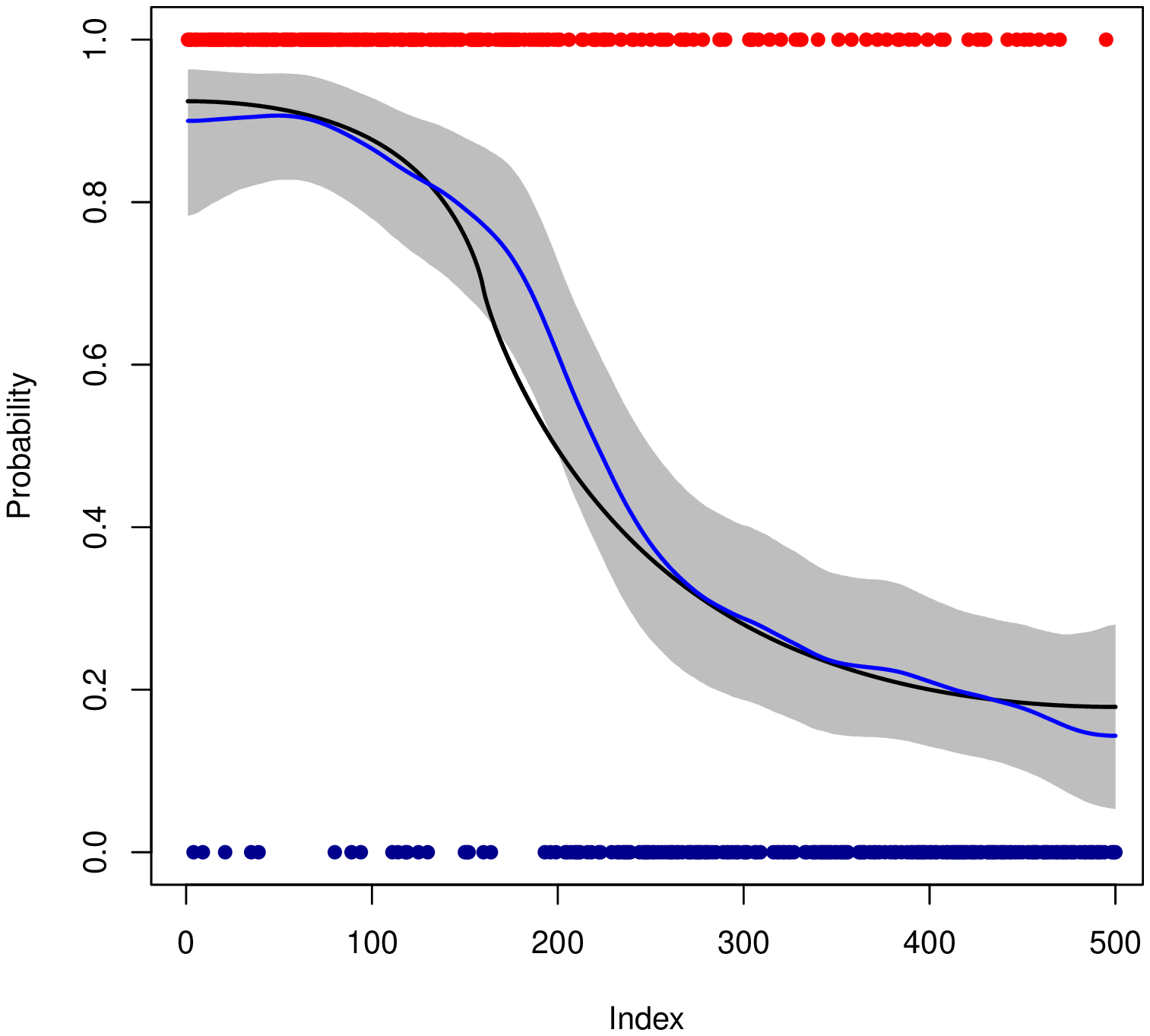}
\includegraphics[width=\textwidth]{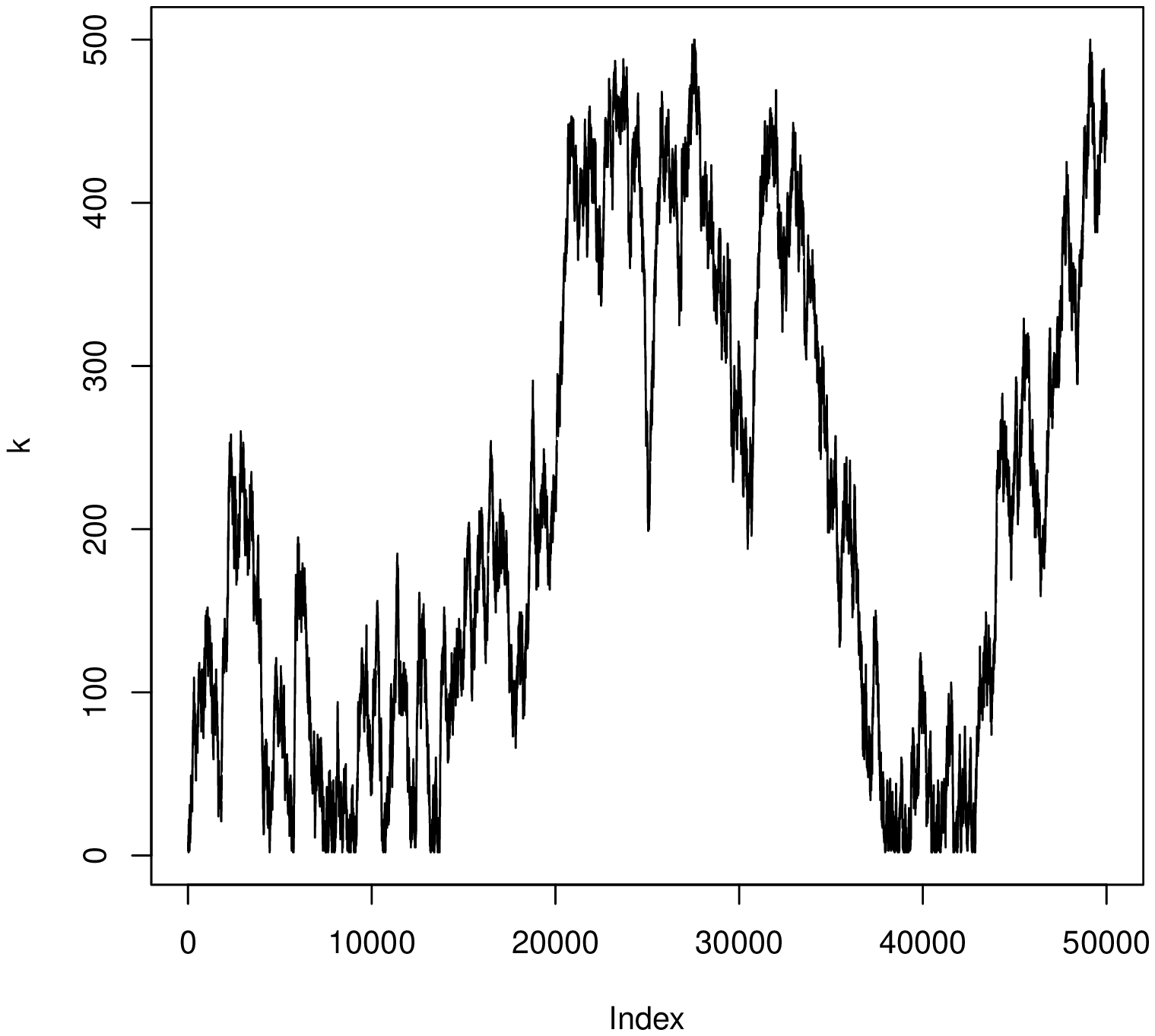}
\end{minipage}
\begin{minipage}{0.33\textwidth}
\includegraphics[width=\textwidth]{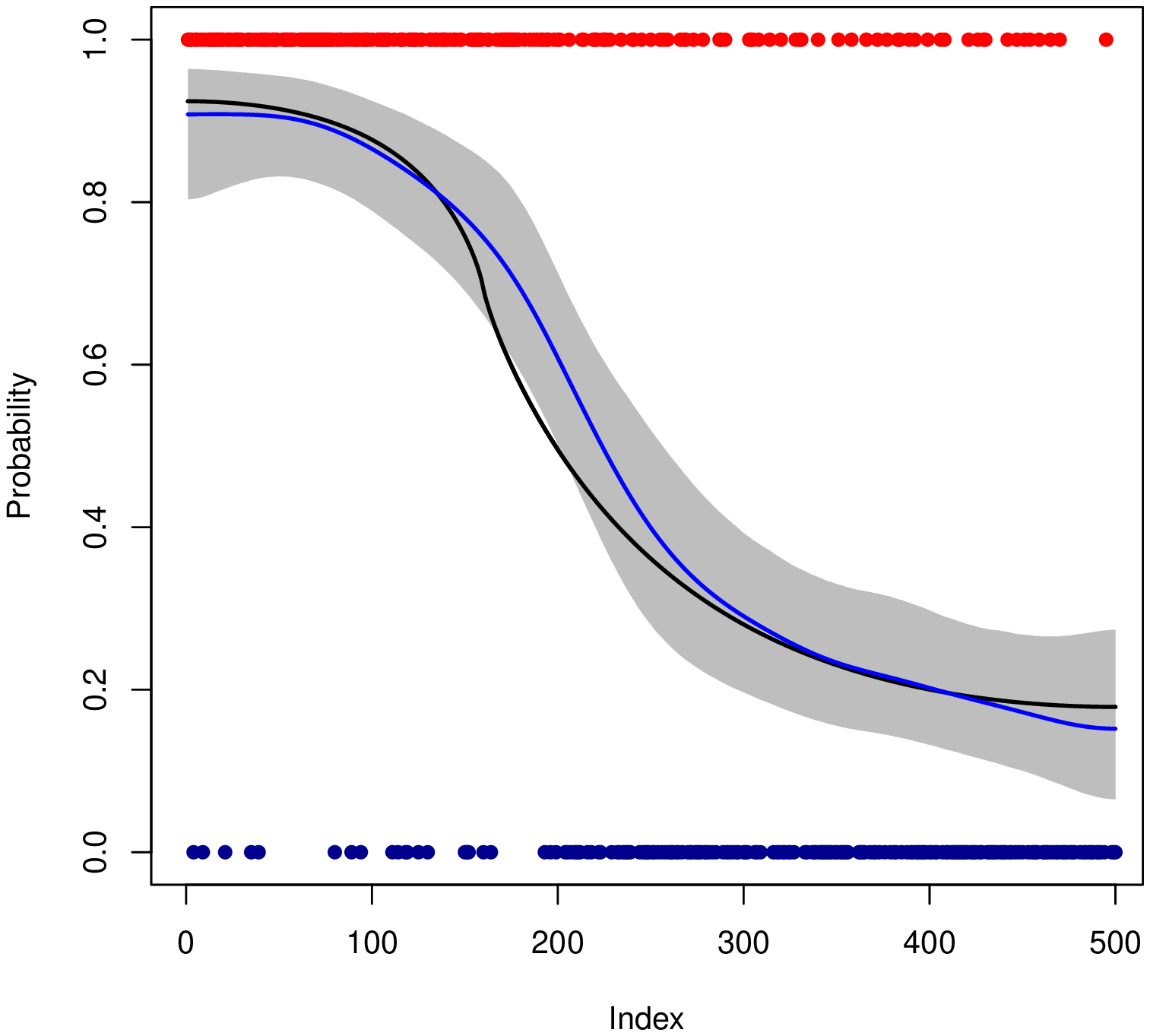}
\includegraphics[width=\textwidth]{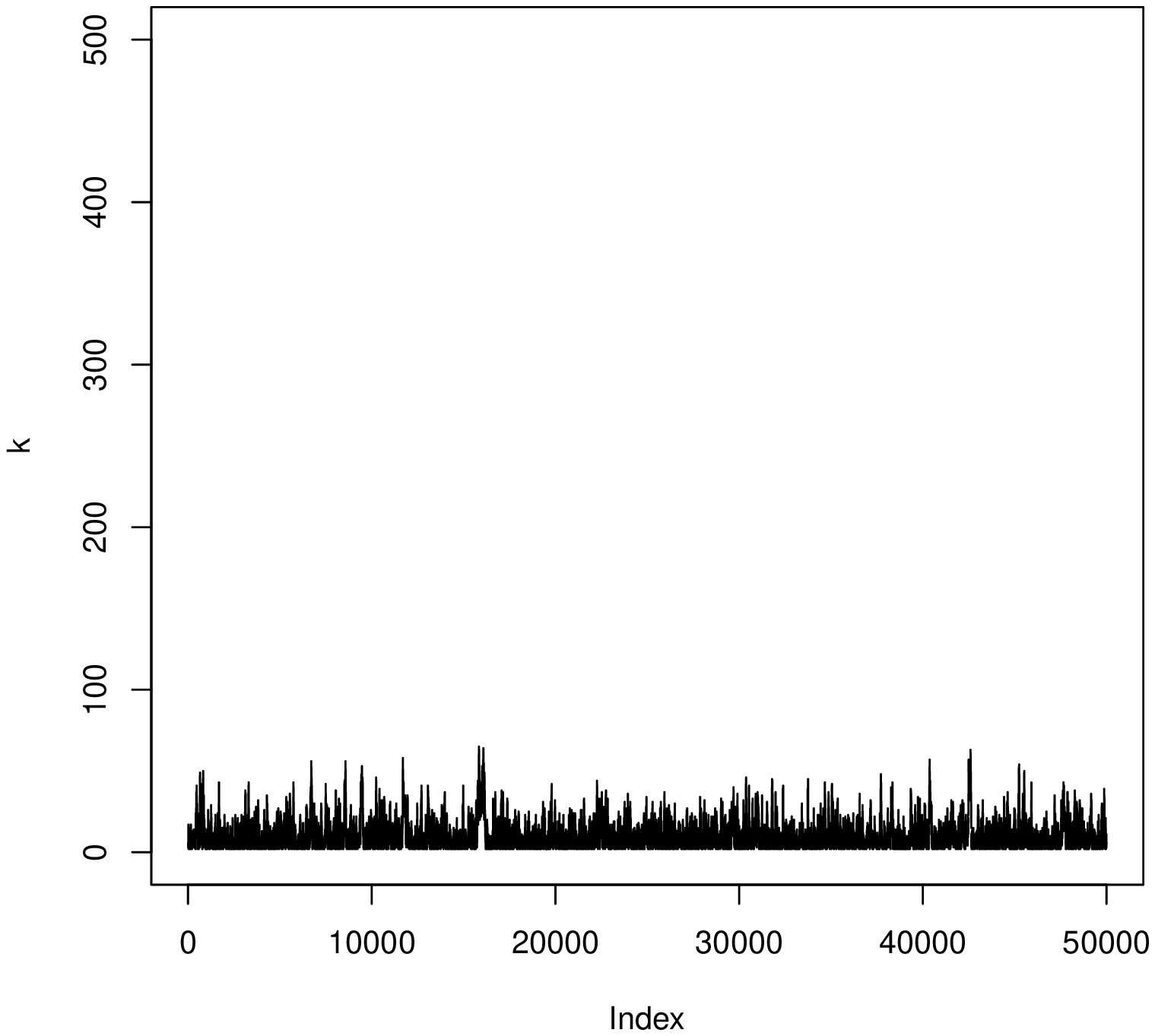}
\end{minipage}
\begin{minipage}{0.33\textwidth}
\includegraphics[width=\textwidth]{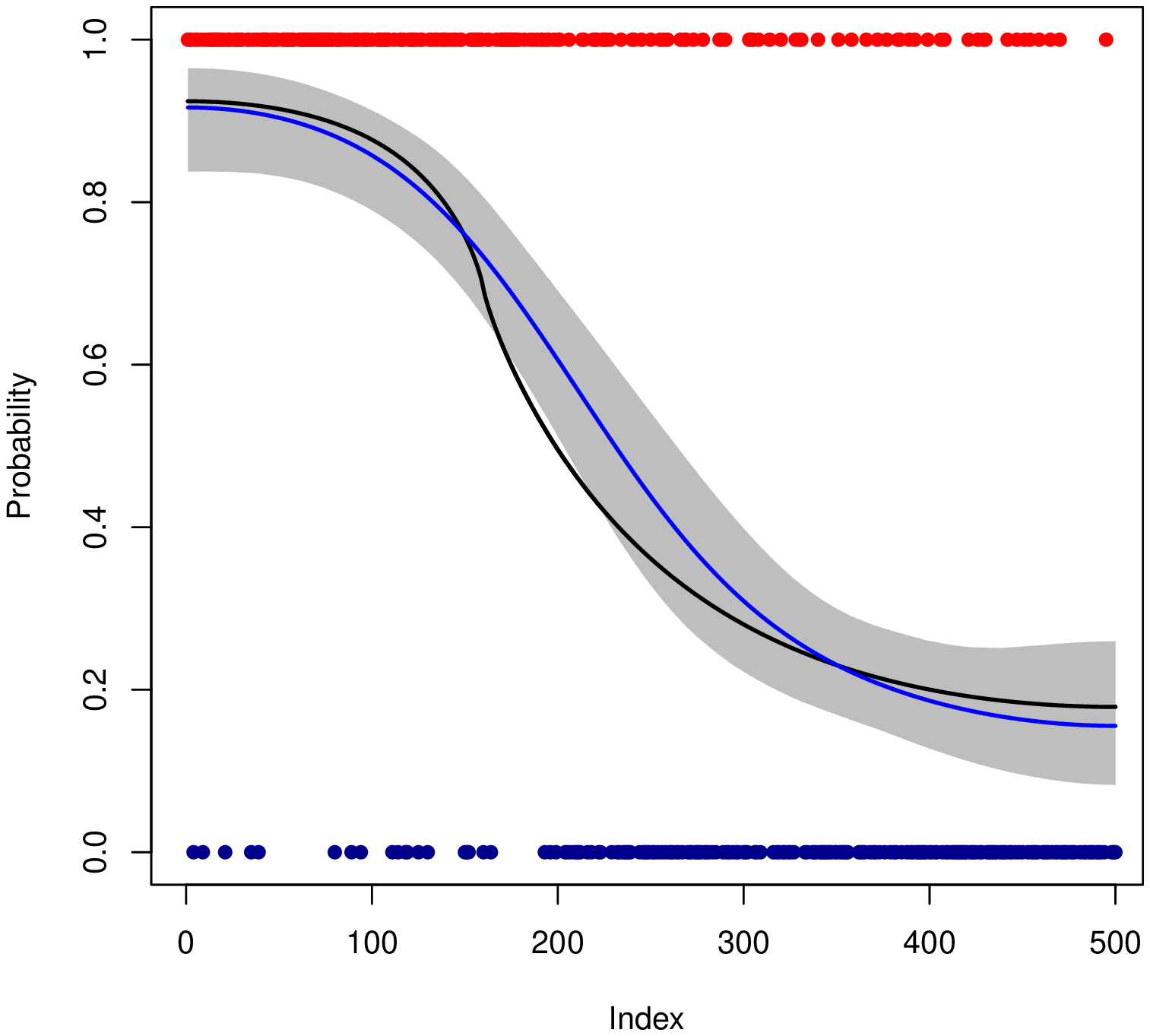}
\includegraphics[width=\textwidth]{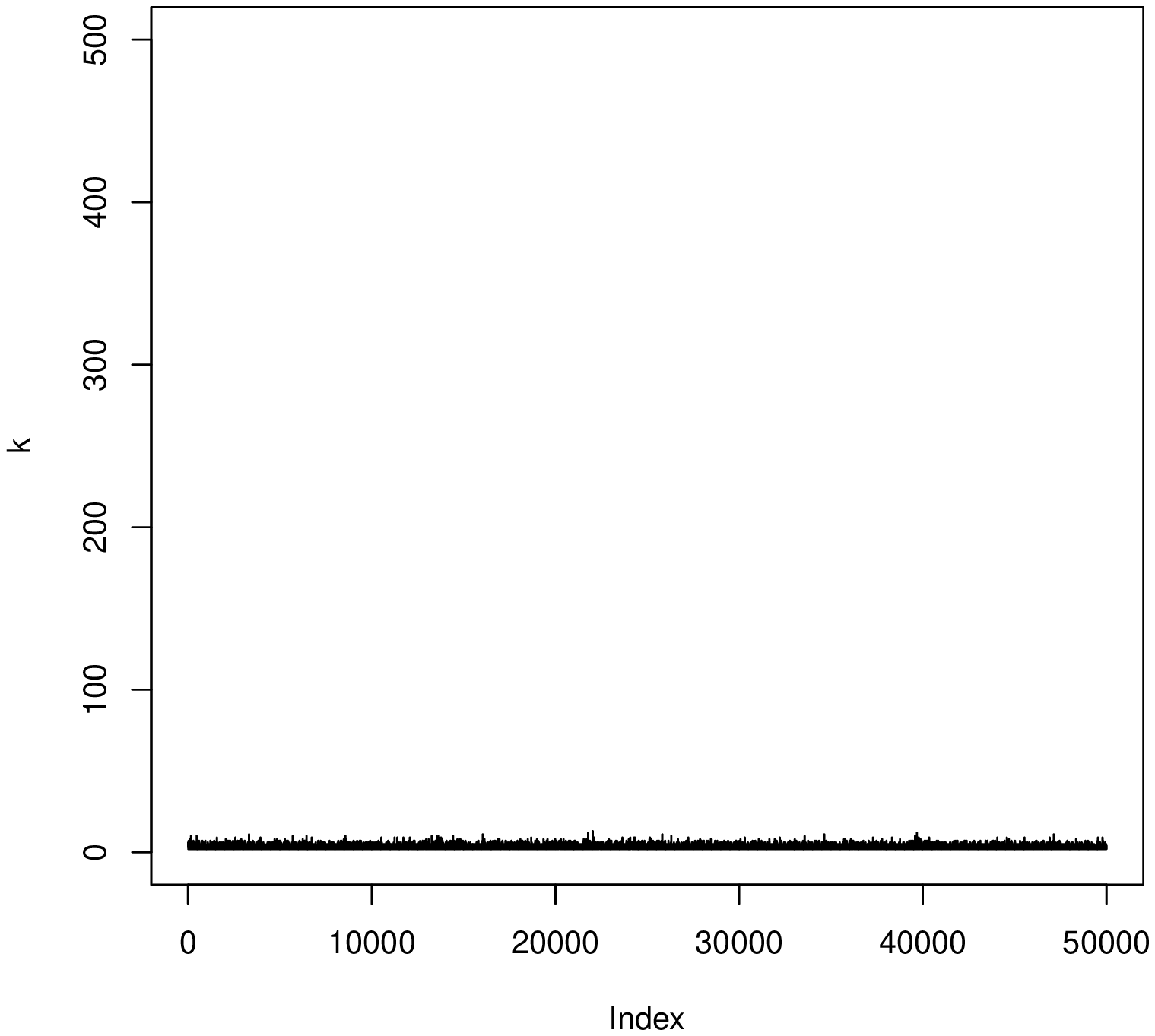}
\end{minipage}
\caption{Top: Posteriors for the soft label function for $\gamma = 0, 0.1, 1$. Bottom: The corresponding draws from the posterior of $k$.}\label{fig:gamma1}
\end{figure}

We also consider a simulated example on a small-world graph obtained as a realization of the Watts-Strogatz model (\cite{watts1998}). The graph is obtained by first considering a ring graph of $1000$ nodes. Then we loop through the nodes and uniformly rewire each edge with probability $0.25$. We keep the largest connected component and delete multiple edges and loops, resulting in a graph with $848$ nodes as shown in Figure 
\ref{fig:ws}. We use the same construction of the observed data on the graph as in the previous example on the path graph. Our suggested rule-of-thumb of setting $\gamma=20/n$ corresponds in the previous examples to $\gamma=0.04$ and $\gamma=0.024$ which in both cases ends up in using only a small fraction of the total number of eigenvectors, but it does not oversmooth too much.

\begin{figure}[H]
\centering
\begin{minipage}{0.5\textwidth}
\includegraphics[width=\textwidth]{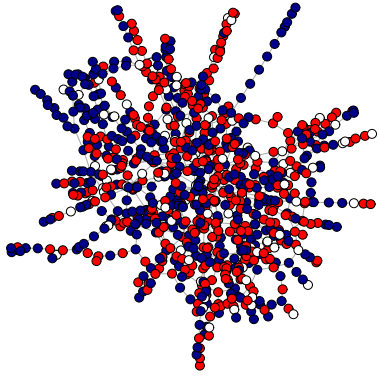}
\end{minipage}
\caption{Small world graph with two types of labels. White nodes represent unobserved labels.}
\label{fig:ws}
\end{figure}

\begin{figure}[H]
\begin{minipage}{0.33\textwidth}
\includegraphics[width=\textwidth]{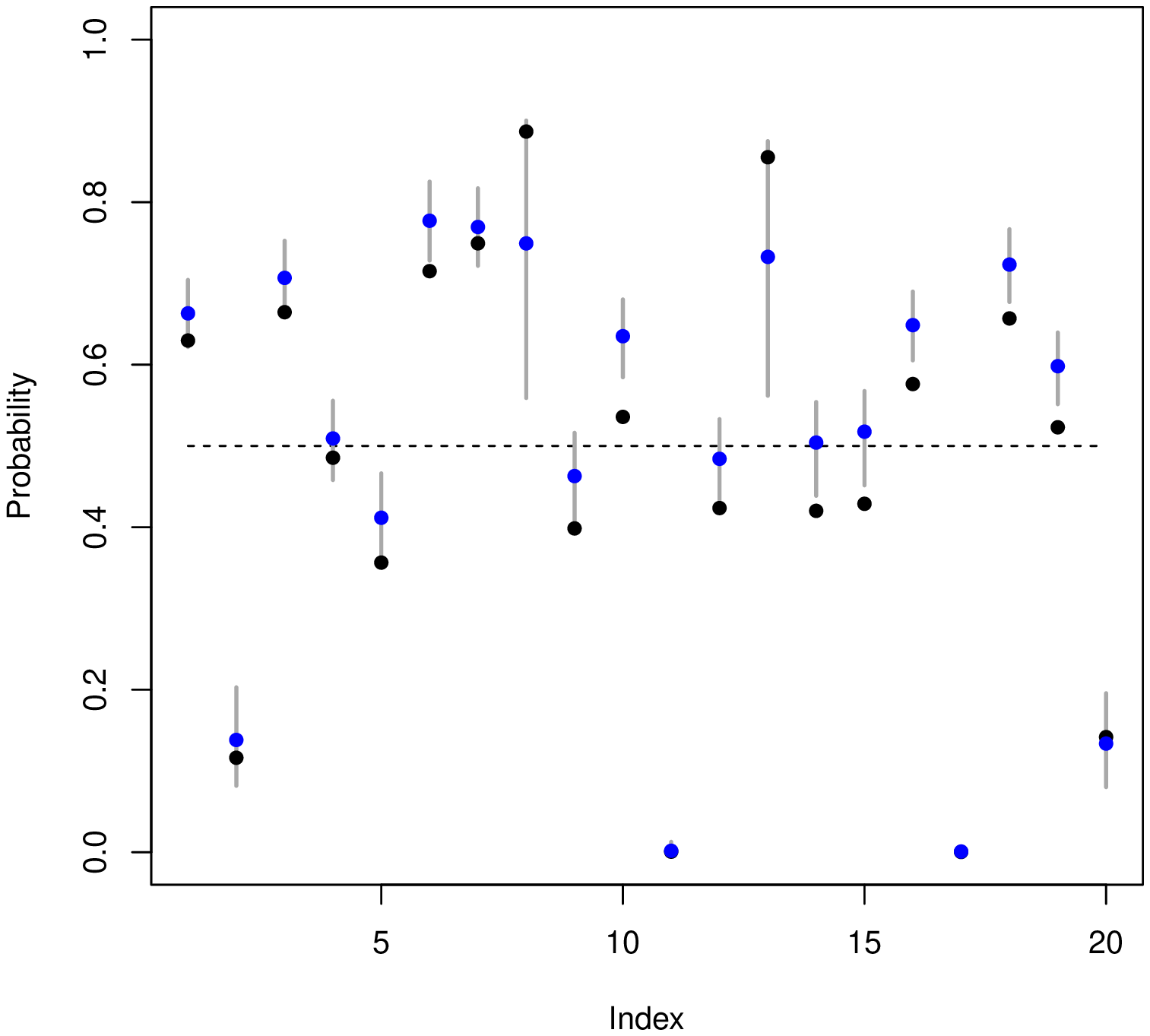}
\includegraphics[width=\textwidth]{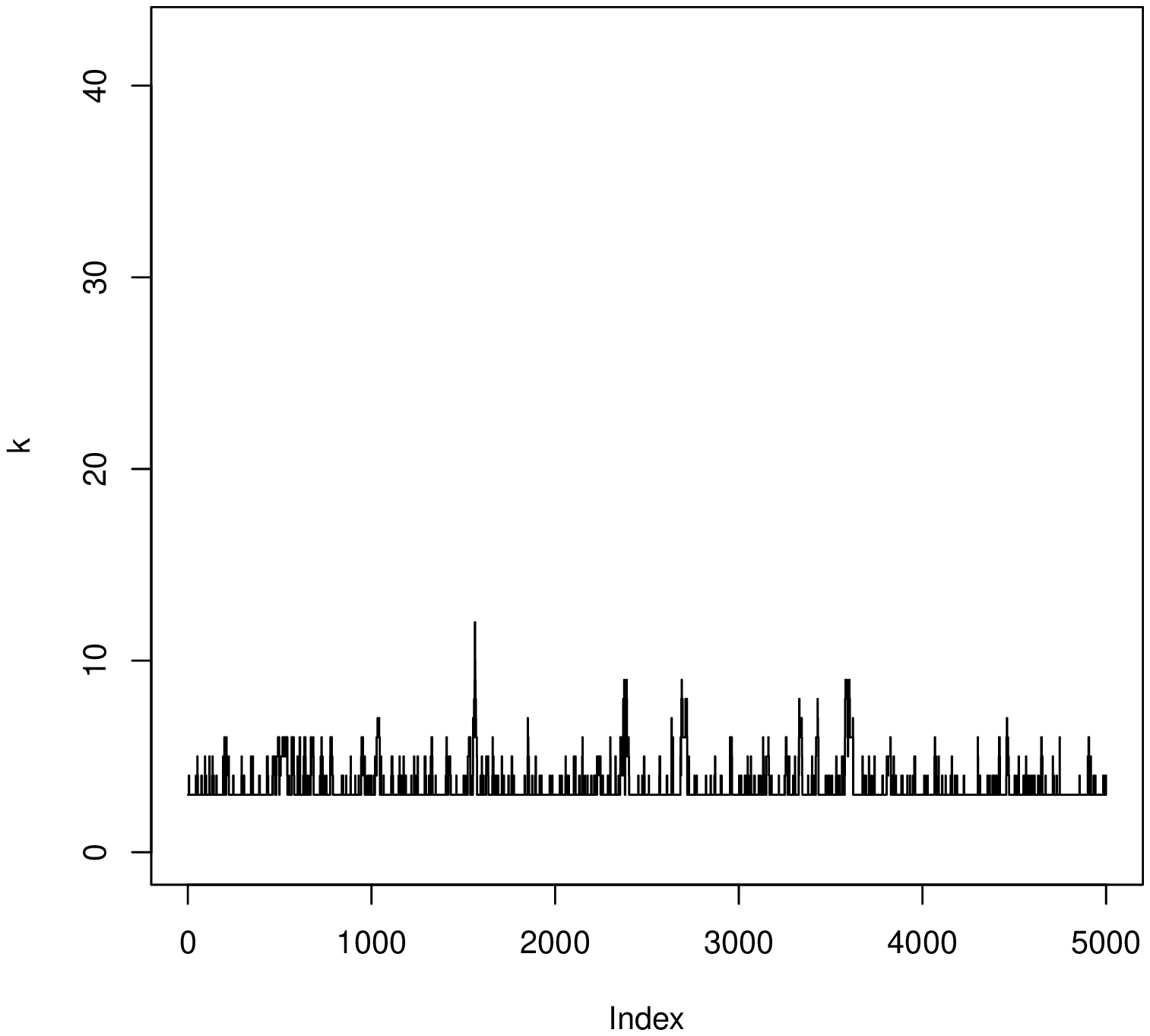}
\end{minipage}
\begin{minipage}{0.33\textwidth}
\includegraphics[width=\textwidth]{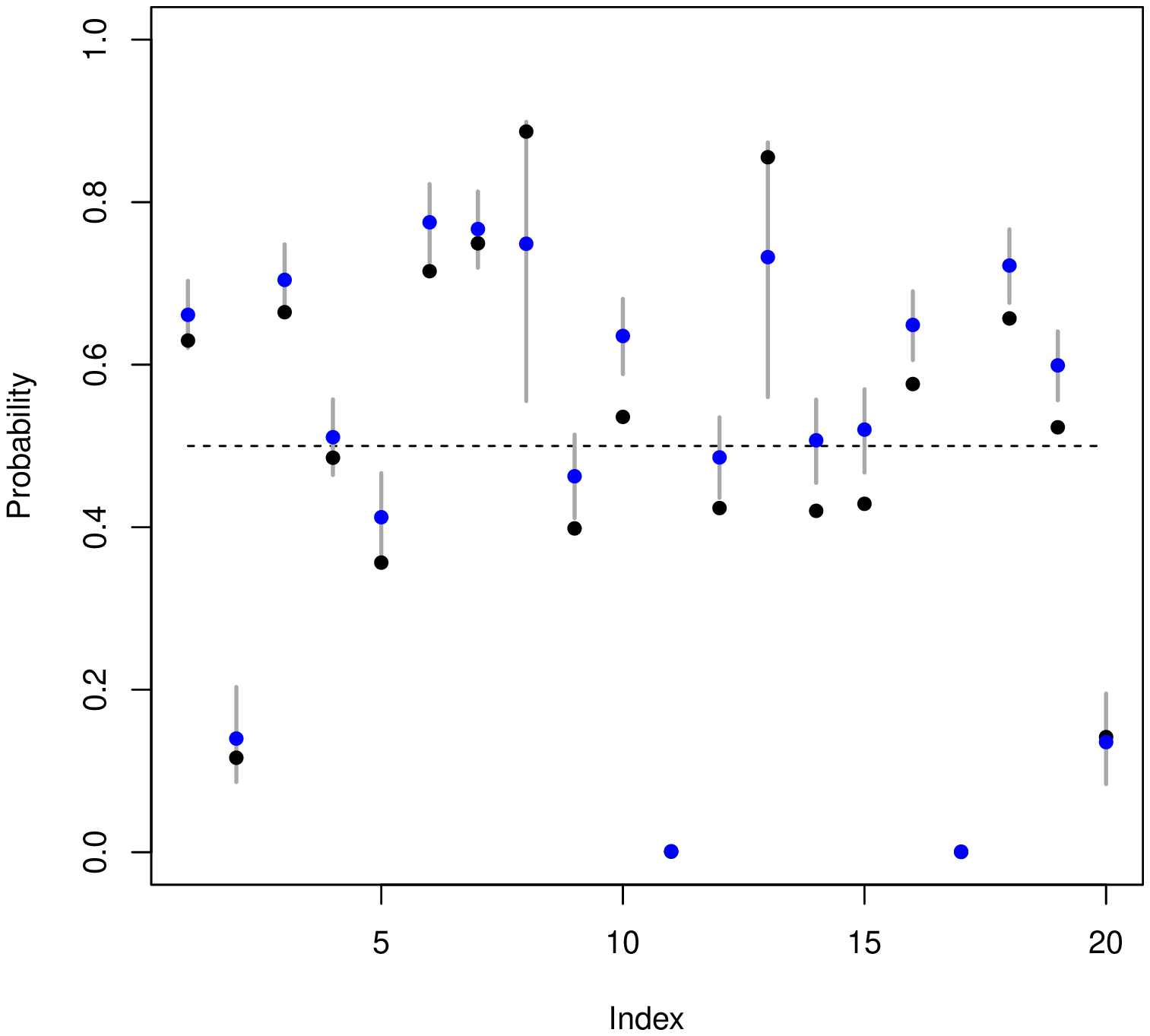}
\includegraphics[width=\textwidth]{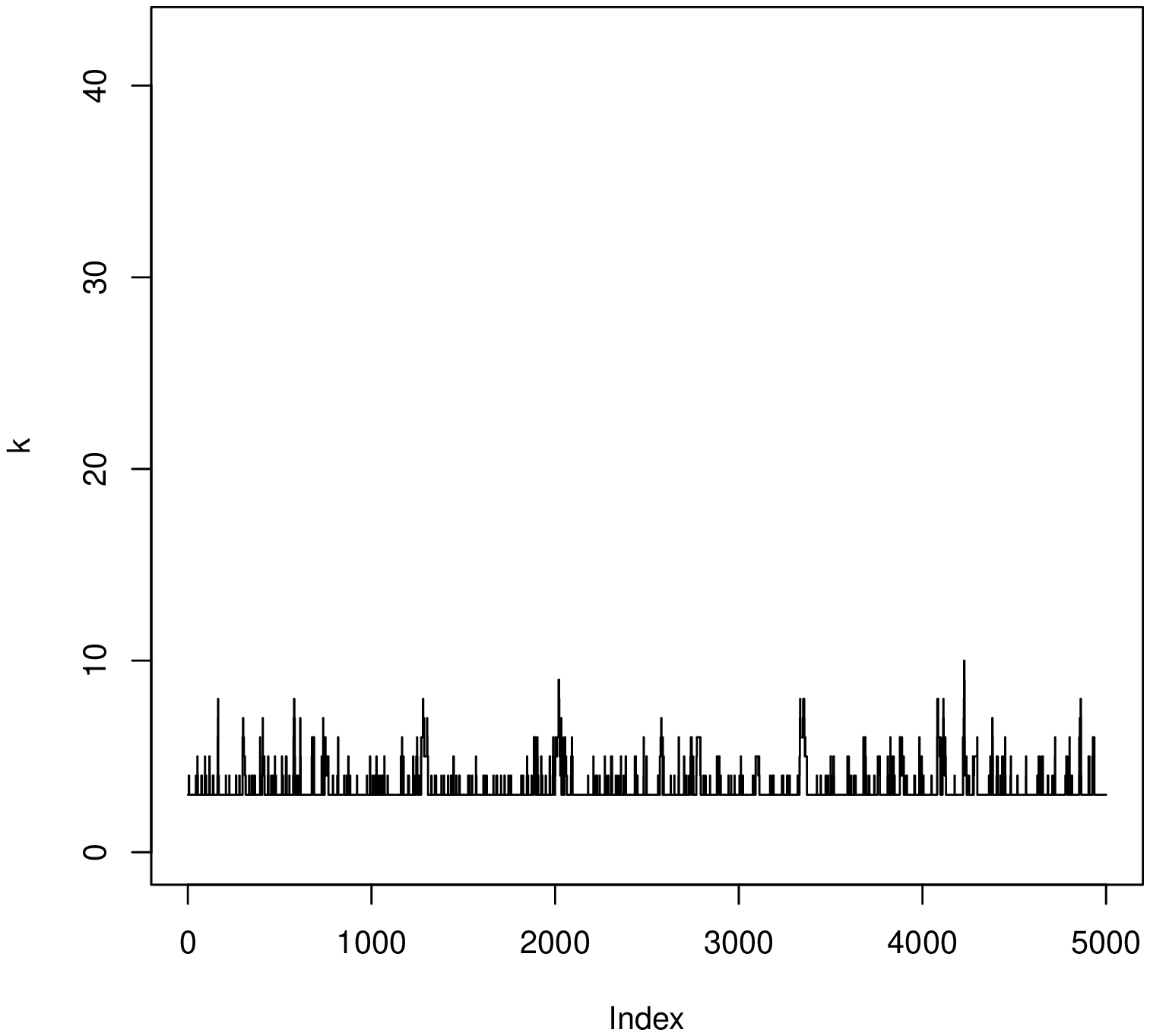}
\end{minipage}
\begin{minipage}{0.33\textwidth}
\includegraphics[width=\textwidth]{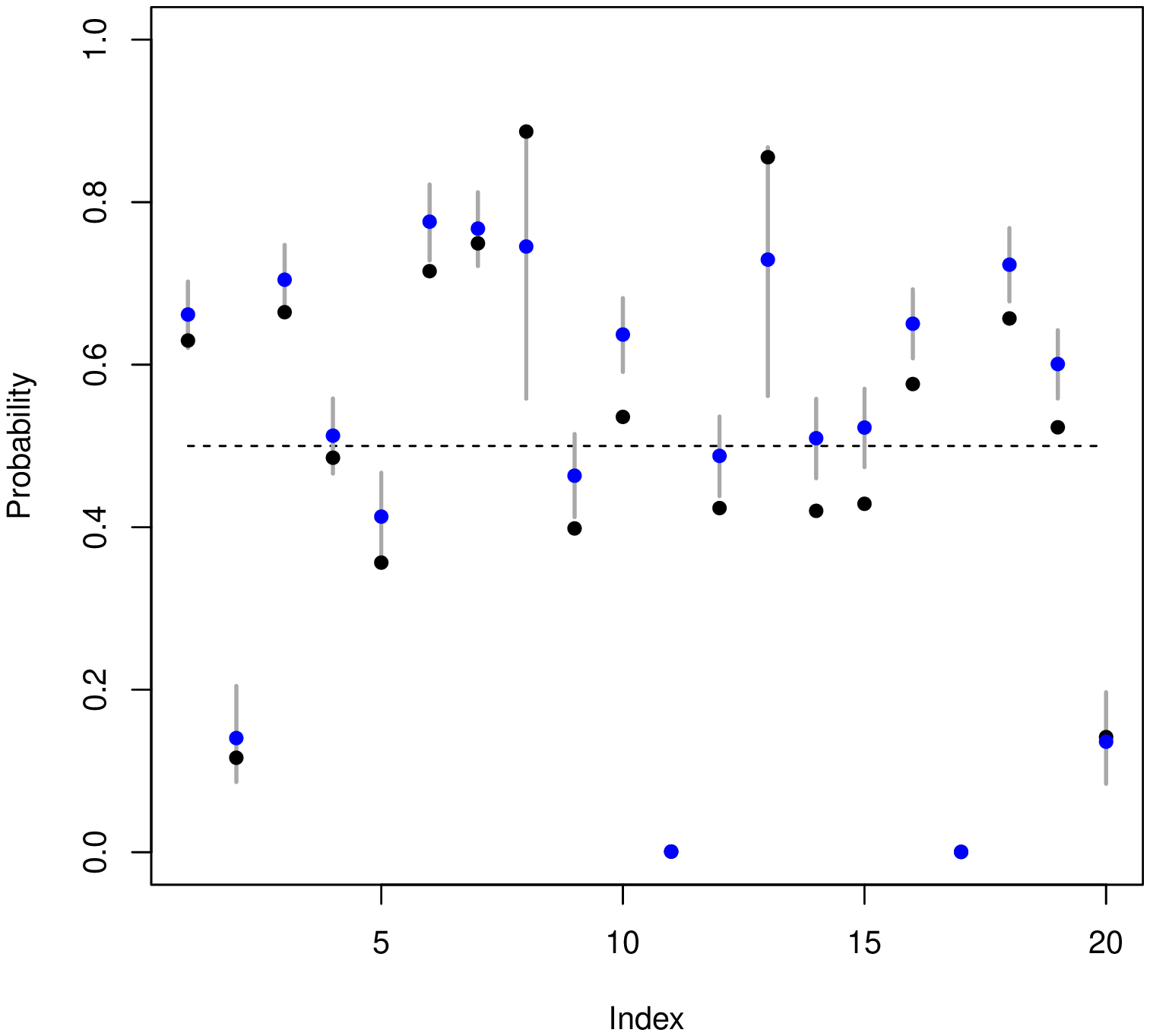}
\includegraphics[width=\textwidth]{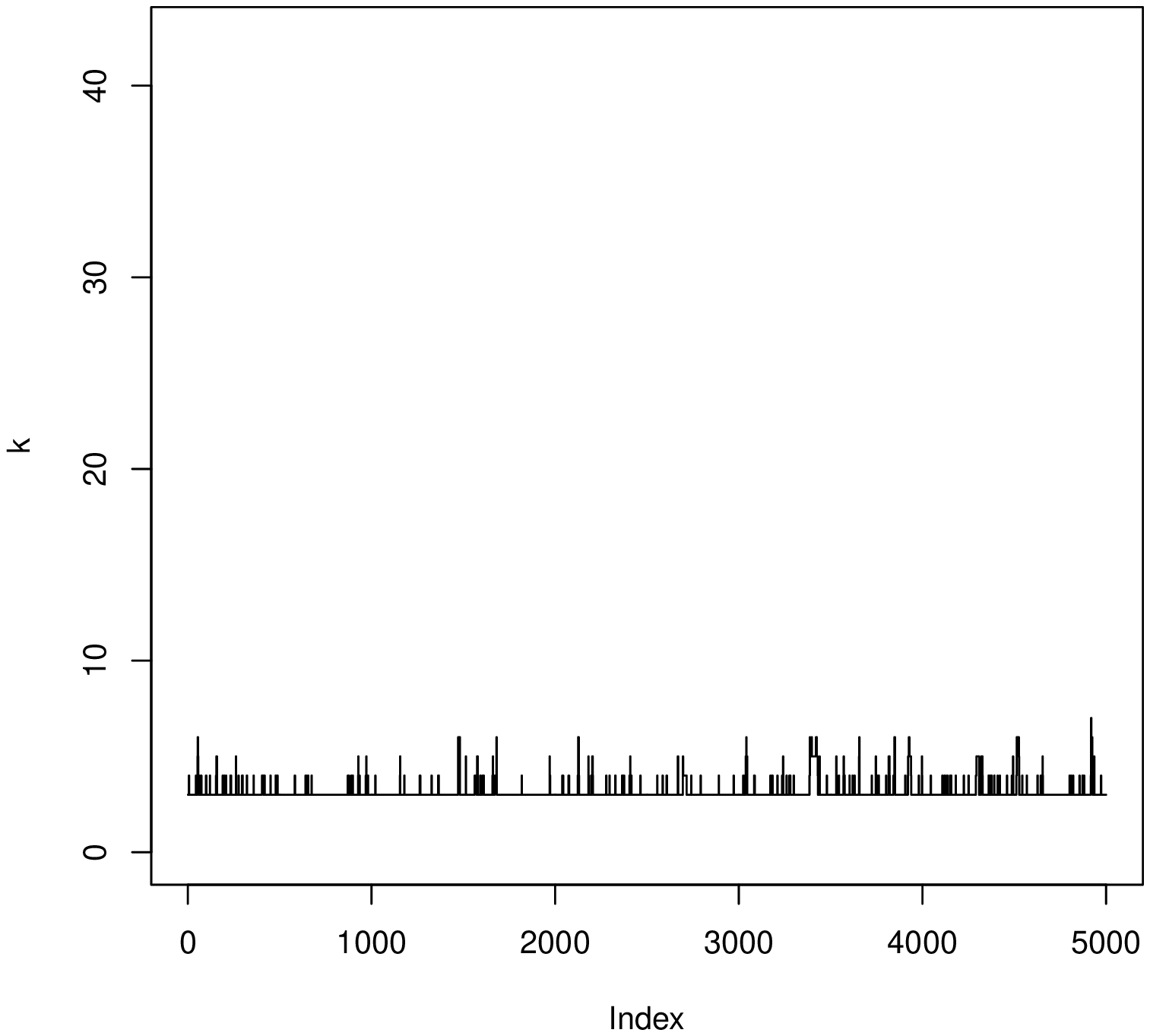}
\end{minipage}
\caption{Top: Posteriors for the soft label function for $\gamma = 0, 0.1, 1$. Bottom: The corresponding draws from the posterior of $k$.}\label{fig:gamma}
\end{figure}

\subsection{Computational gains: MNIST data}

The MNIST dataset consists of images of handwritten digits. The images are size-normalized and centered. The dataset is publicly available at {\tt http://yann.lecun.com/exbd/mnist}. We have selected the images of only the digits $4$ and $9$ from both the test set ($1991$ images) and the training set ($11791$ images). Our goal is to classify the images from the test set using the images from the training set. To turn this into a label prediction problem on a graph we construct a graph with $11791+1991=13782$ nodes representing the images. For each image we determine the $15$ closest images in Euclidean distance between the projections on the first $50$ principal components, similar to \cite{bertozzi2017}, \cite{liang2007} and \cite{belkin2006}. 

We use this example to explore the relative speedup of our proposed method with respect to the method proposed in \cite{jarno} without truncation, where we also compare the difference in prediction accuracy. To this end, we take random subsamples of different sizes of the graph to illustrate what happens when the size of the graph grows. The ratio of $4$'s and $9$'s in the test and train set is kept constant and equal to that of the entire dataset. In Figure \ref{fig:time} we observe a 
dramatic difference in the computational time for the algorithm without truncation versus the algorithm with truncation, while the prediction performance is comparable.

\begin{figure}[H]
\begin{minipage}{0.33\textwidth}
\includegraphics[width=\textwidth]{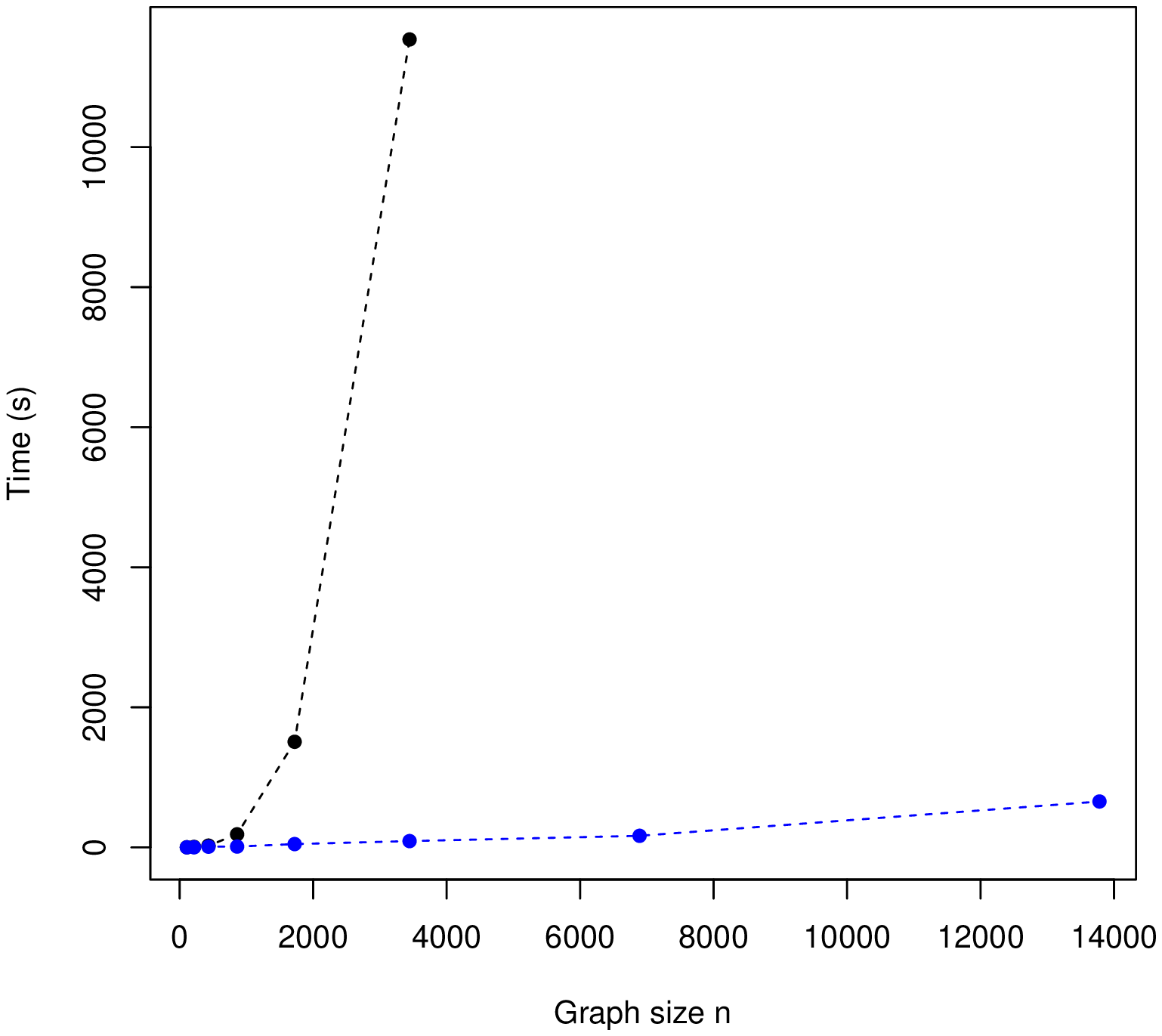}
\end{minipage}
\begin{minipage}{0.33\textwidth}
\includegraphics[width=\textwidth]{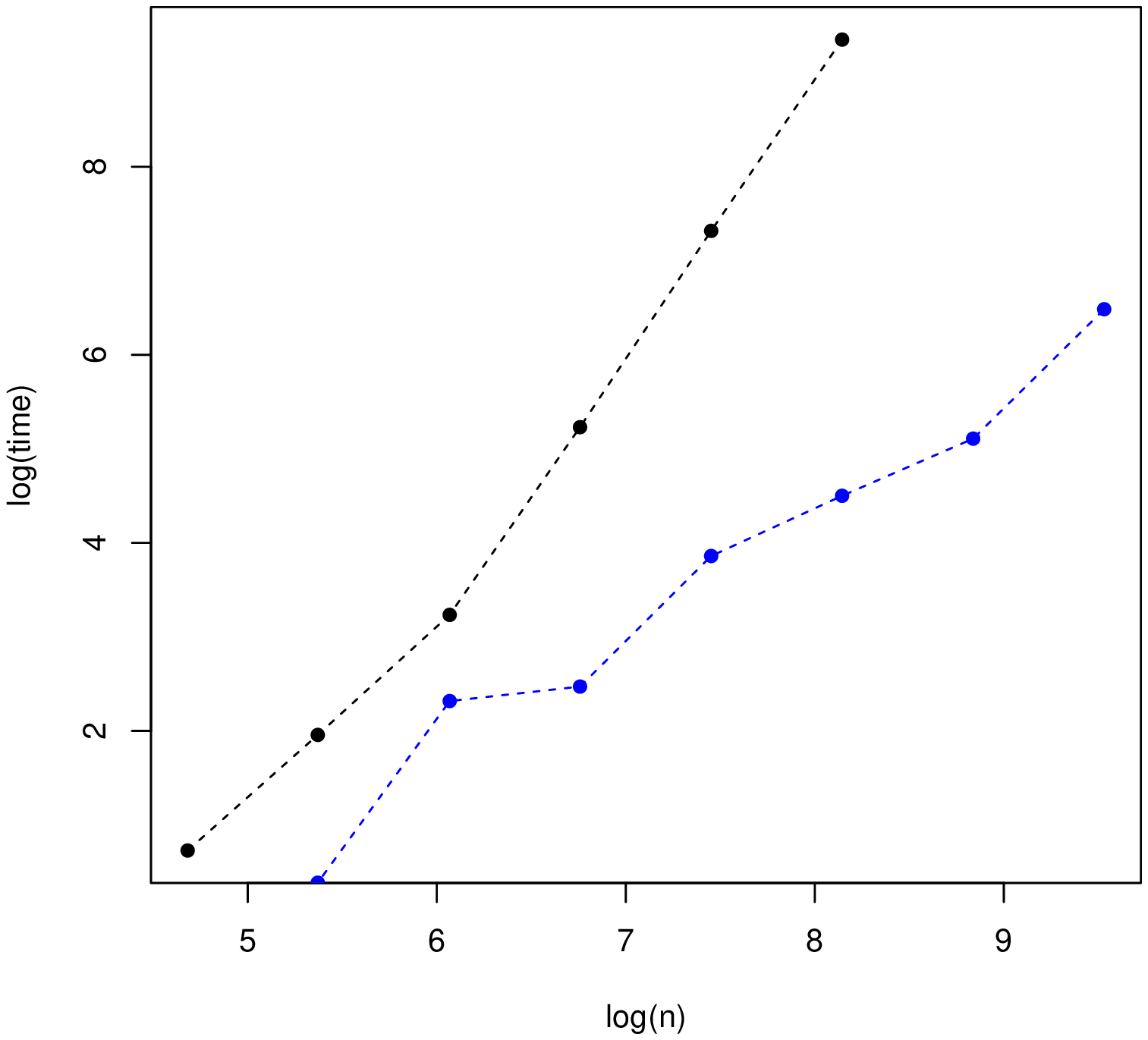}
\end{minipage}
\begin{minipage}{0.33\textwidth}
\includegraphics[width=\textwidth]{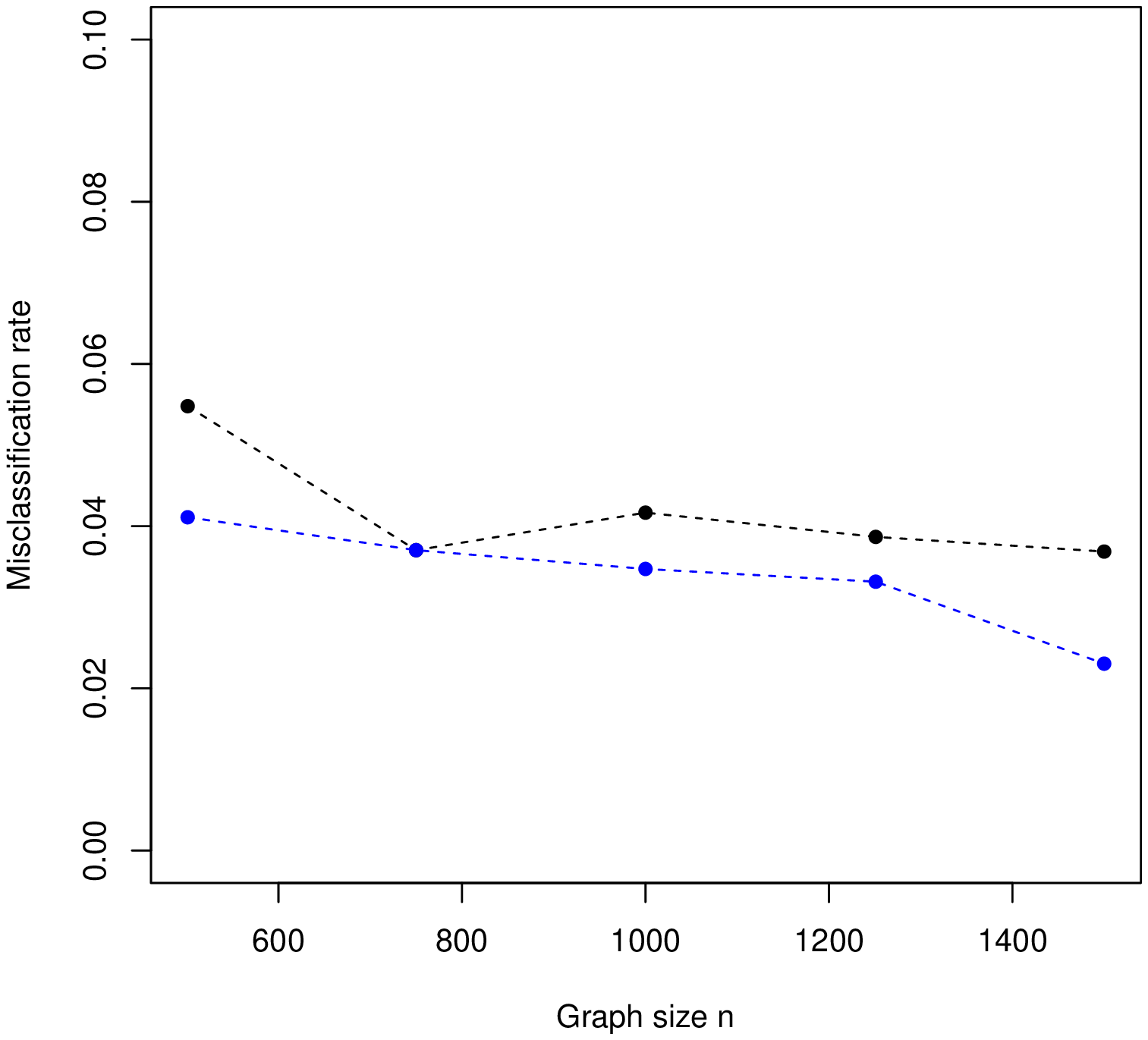}
\end{minipage}
\caption{Left: Computation time versus graph size. Middle: Computation time versus graph size on a log-log scale. Right: The prediction performance on the different subgraphs. The blue line is our proposed truncation method, the black line is the method from \cite{jarno}. In both methods we set $a=b=0$ and we choose $\gamma=20/n$.}\label{fig:time}
\end{figure}

\subsection{Large scale example: object tracking}

To demonstrate the applicability of our proposed method in a large graph, where for example the untruncated method from \cite{jarno} is prohibitive, we use a simulated object tracking application. As ground truth, we use the animation given by the following frames. 

\begin{figure}[H]
\begin{minipage}{0.33\textwidth}
\includegraphics[width=\textwidth]{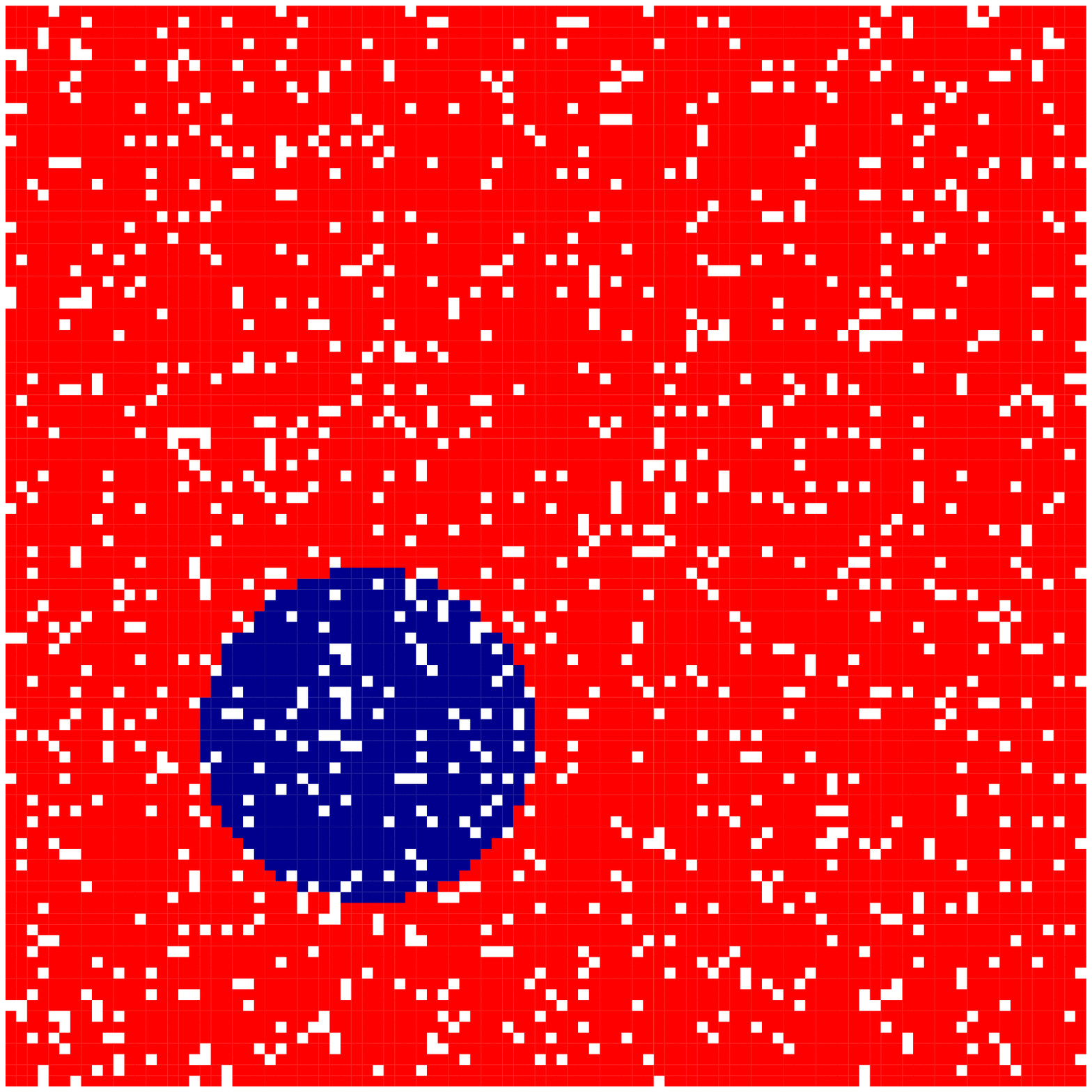}
\includegraphics[width=\textwidth]{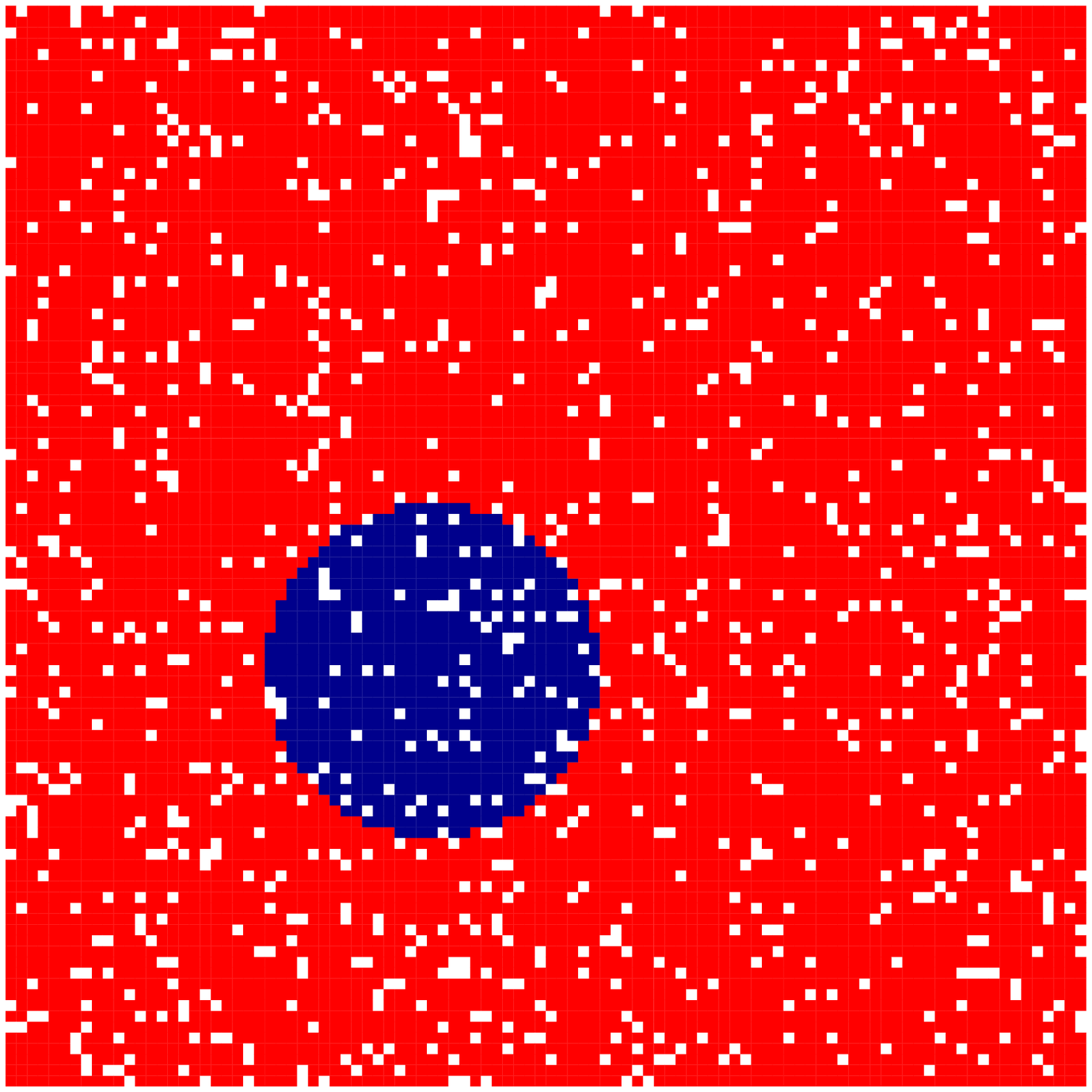}
\includegraphics[width=\textwidth]{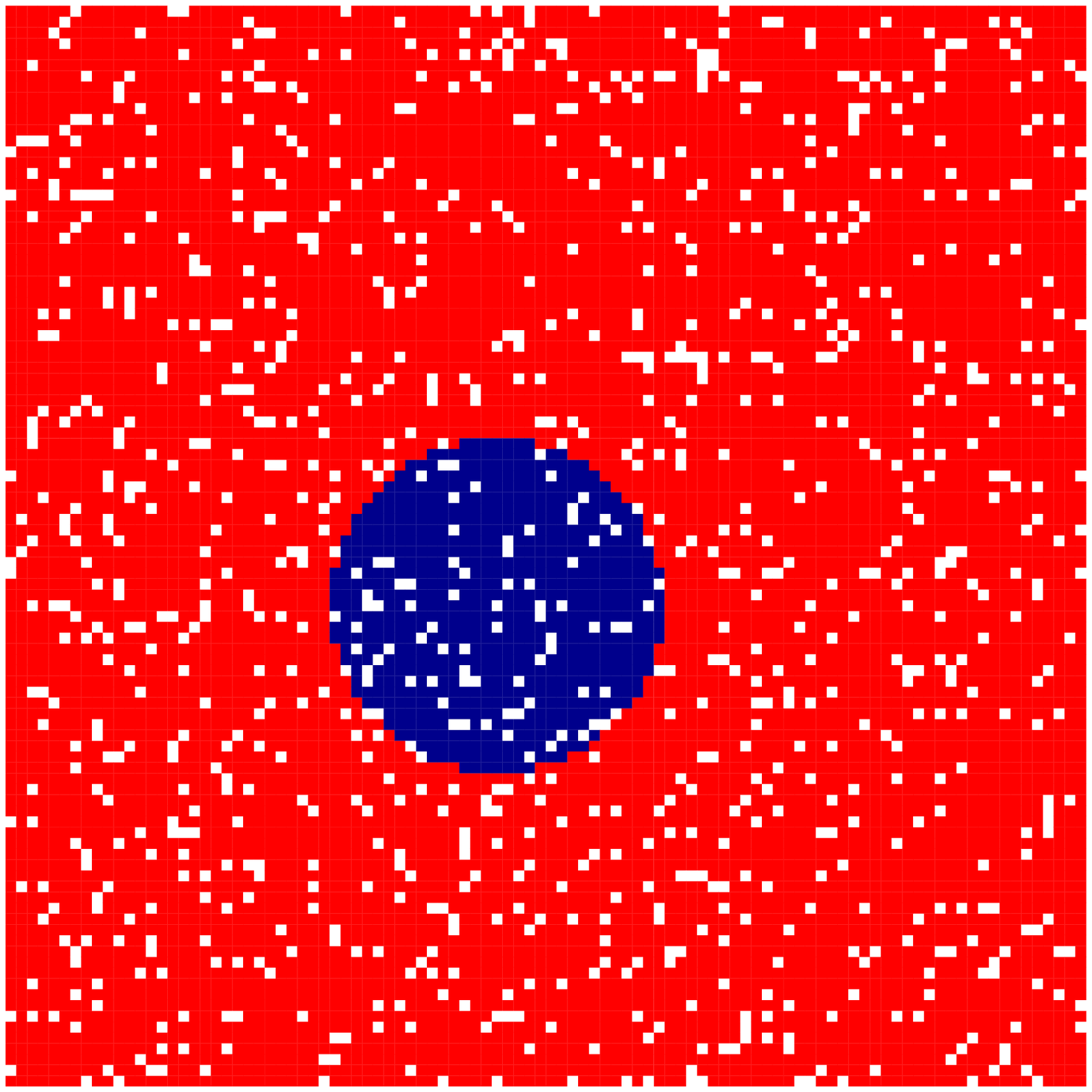}
\end{minipage}
\begin{minipage}{0.33\textwidth}
\includegraphics[width=\textwidth]{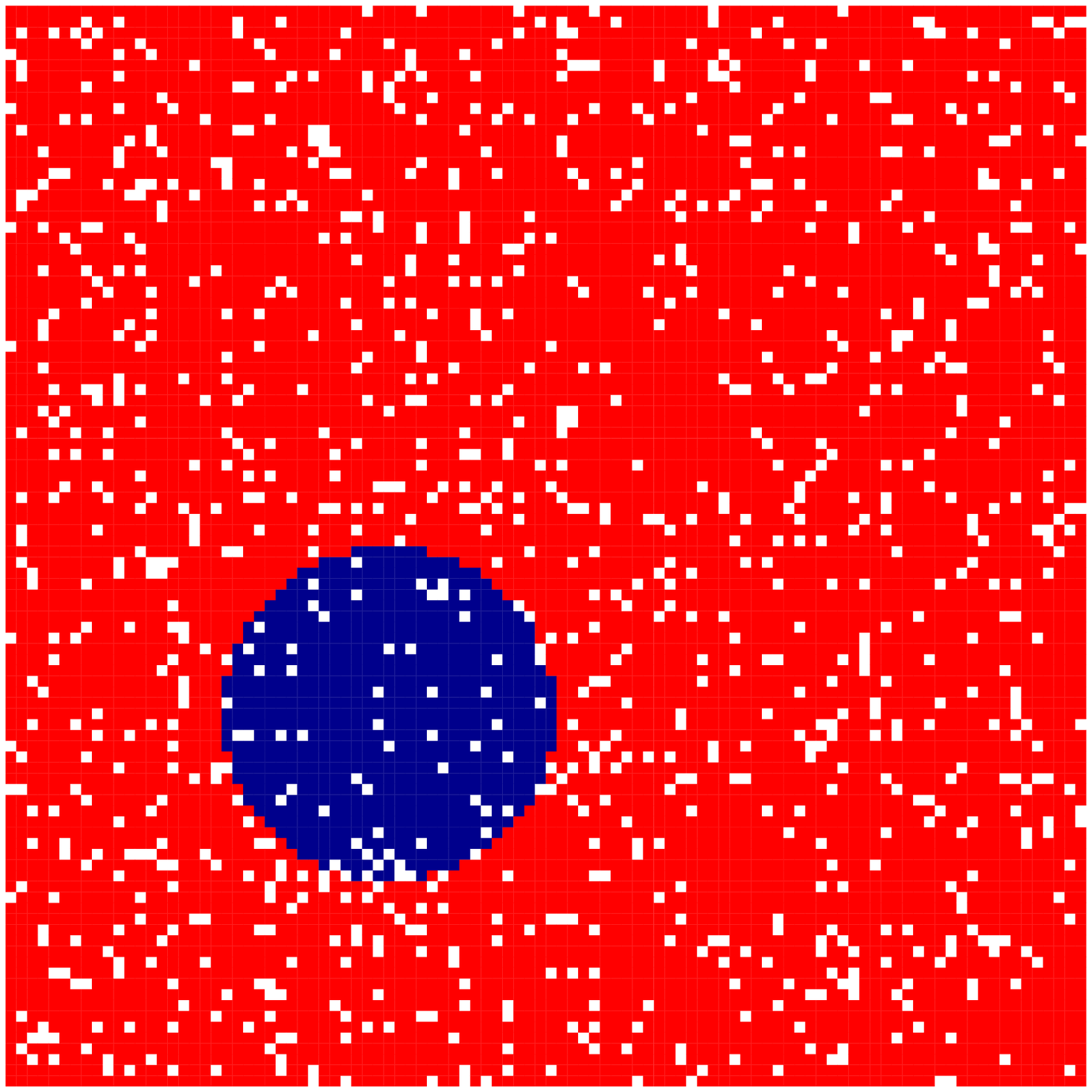}
\includegraphics[width=\textwidth]{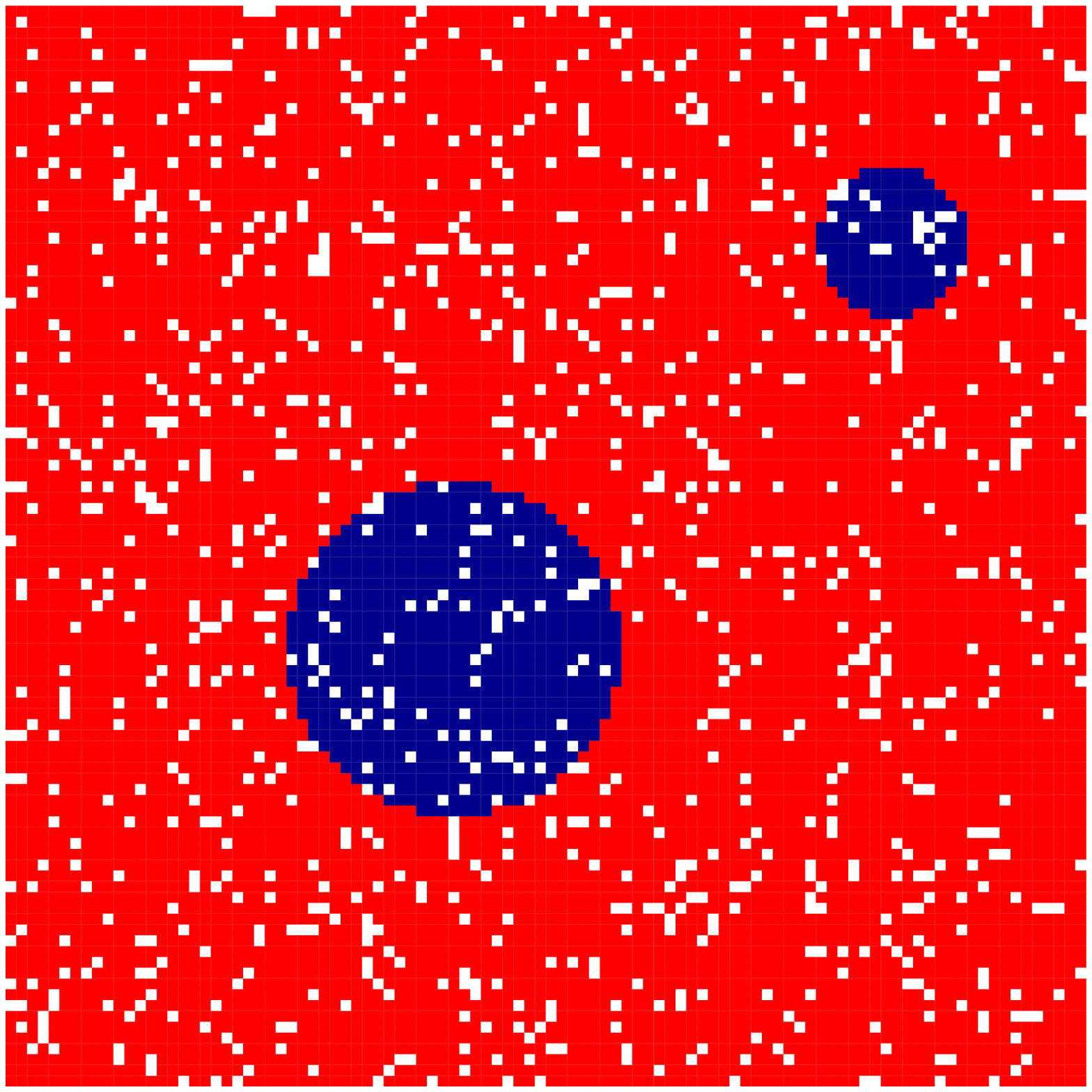}
\includegraphics[width=\textwidth]{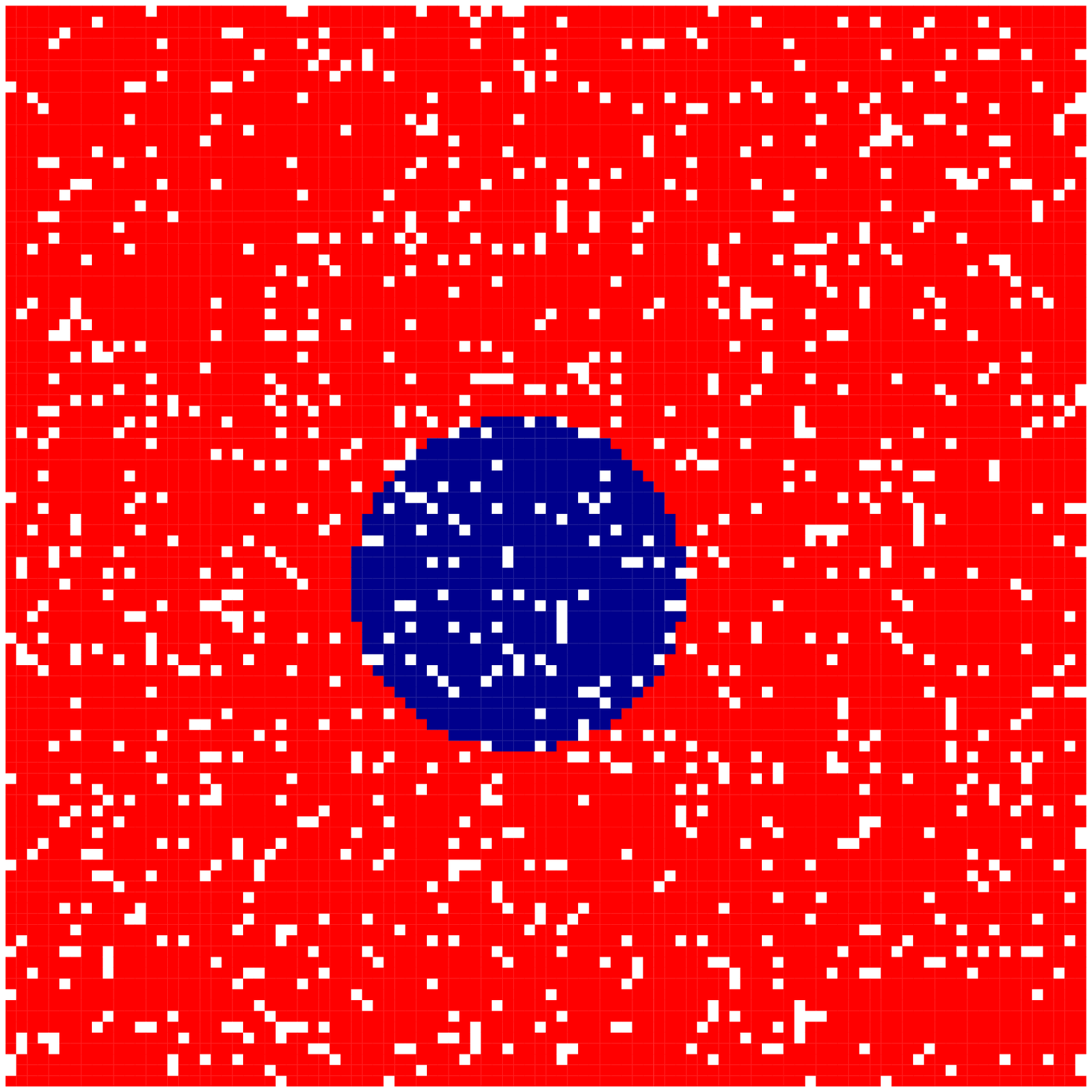}
\end{minipage}
\begin{minipage}{0.33\textwidth}
\includegraphics[width=\textwidth]{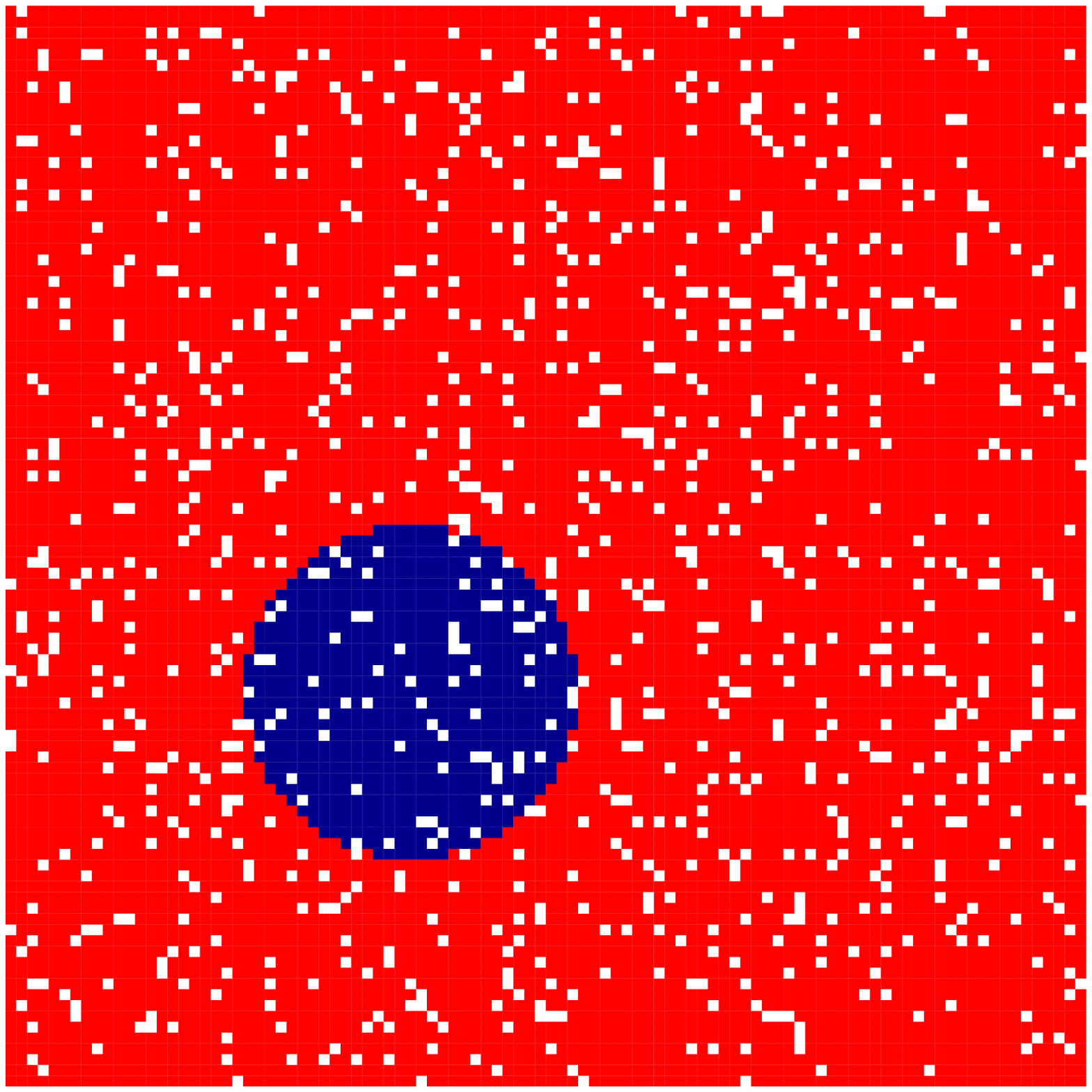}
\includegraphics[width=\textwidth]{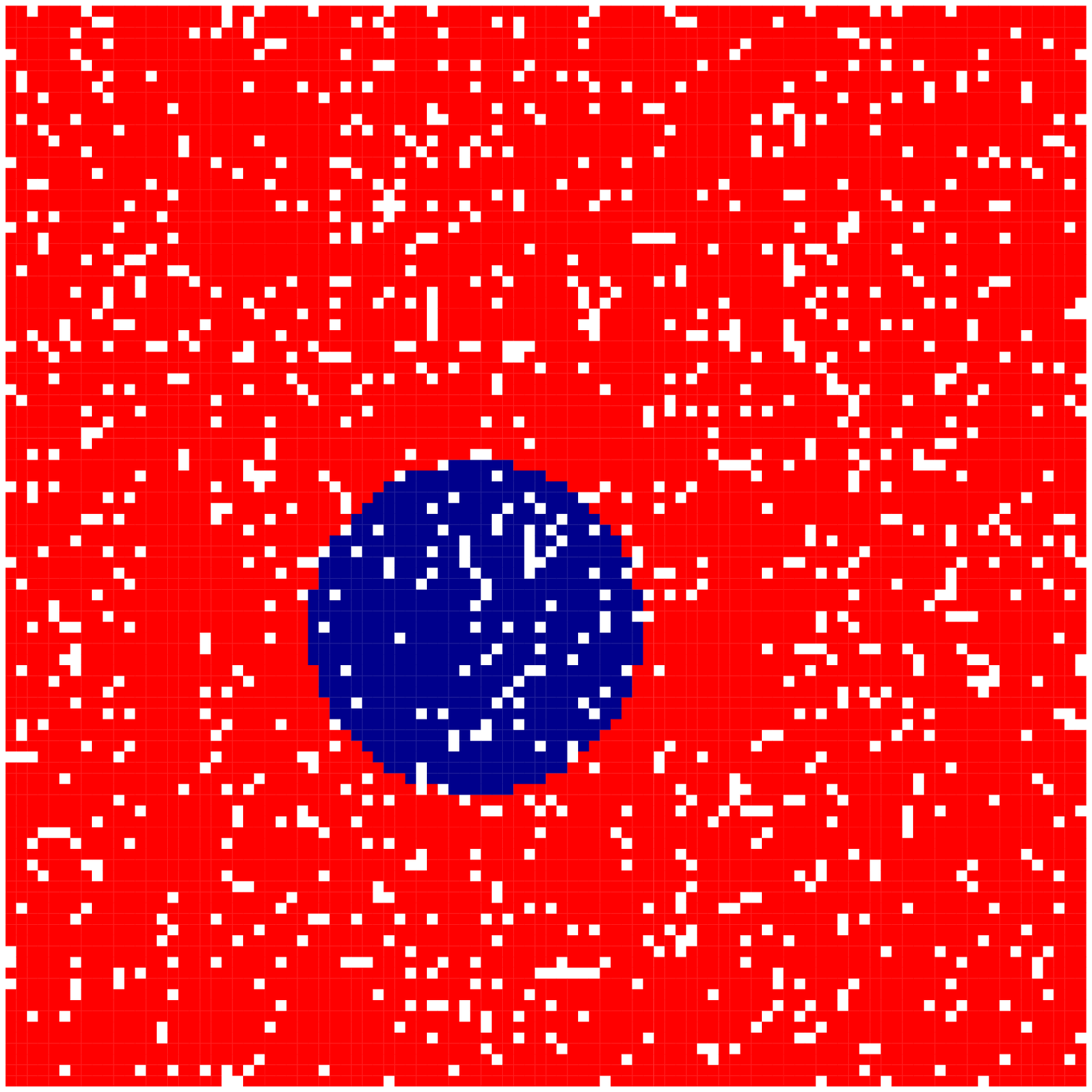}
\includegraphics[width=\textwidth]{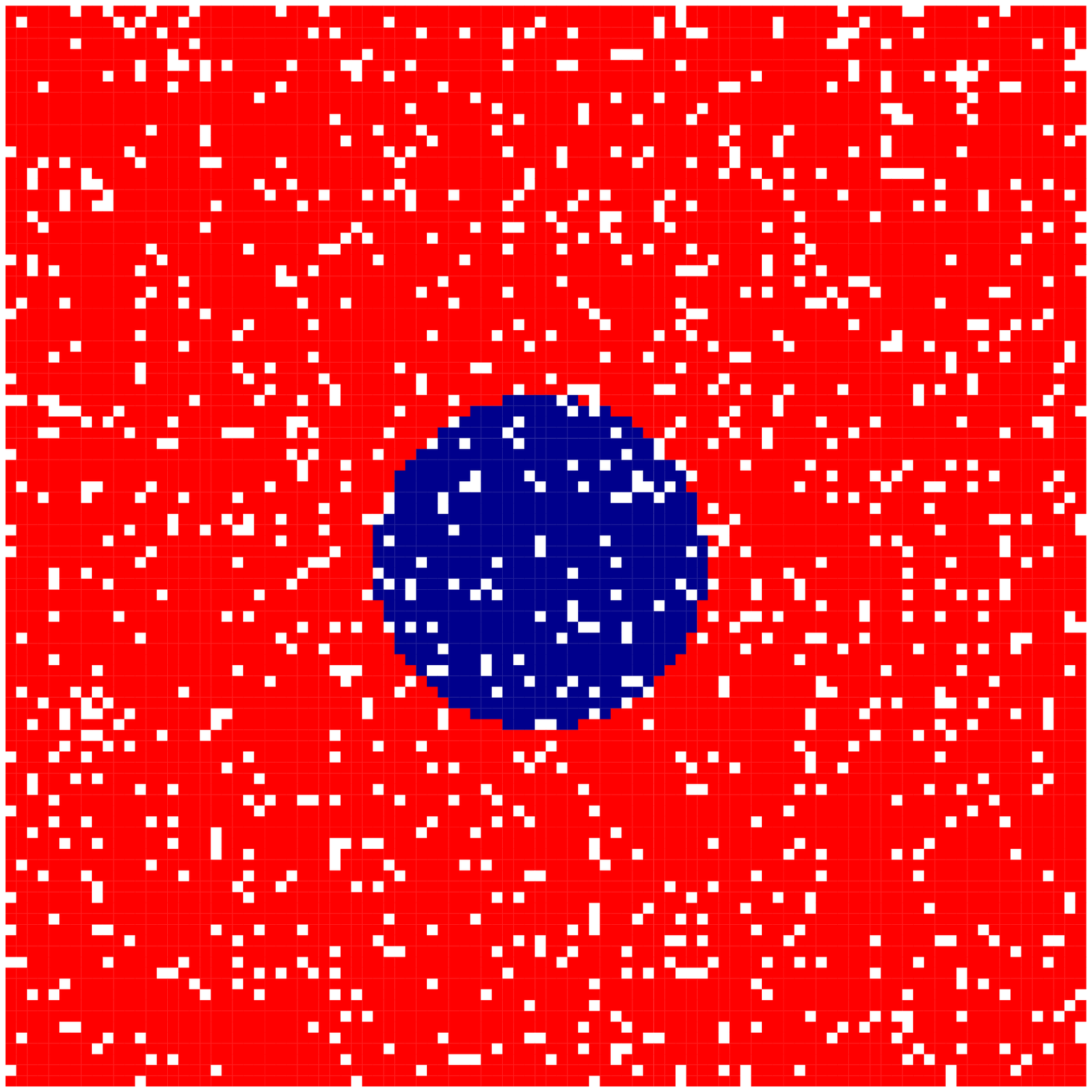}
\end{minipage}
\caption{Moving object in a noisy environment, images are $100\times100$ pixels. The fifth image is corrupted with an additional object which should be removed.}
\end{figure}

The animation consists of $9$ frames of $100\times100$ pixels. It represents a slowly moving blue ball on a red background. We removed the color of $10\%$ of the pixels at random and added an additional ball in the fifth image to represent a corrupted frame. To convert the animation into a graph problem we connect neighboring pixels in each frame and with the corresponding pixels in the previous and next frame, resulting in a $100 \times 100 \times 9$ grid graph on a total of $n = 90000$ nodes as in Figure \ref{fig:grid3}. 

\begin{figure}[H]
\centering
\begin{minipage}{0.5\textwidth}
\includegraphics[width=\textwidth]{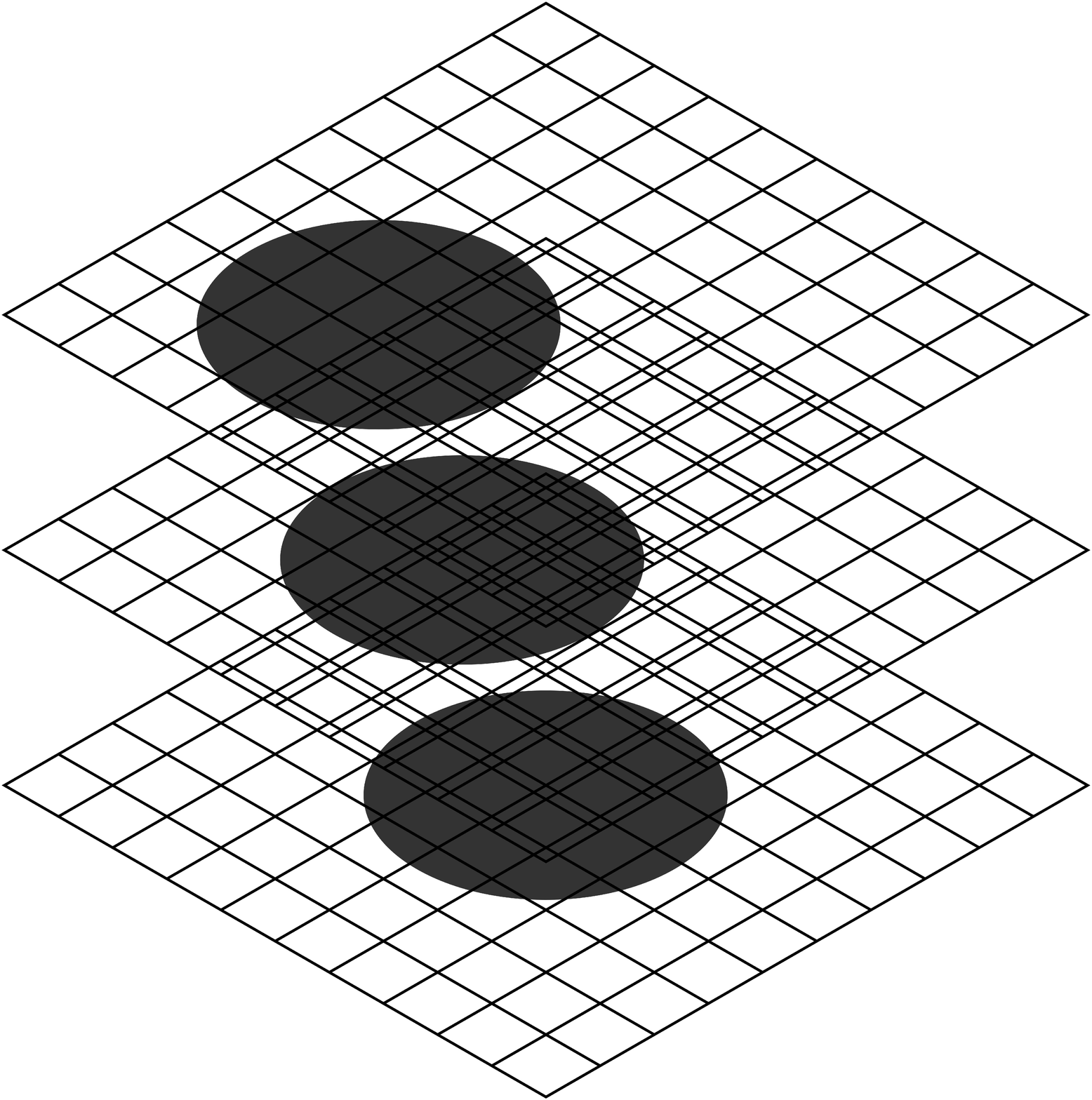}
\end{minipage}
\caption{Schematic representation of the construction of the grid graph. Each pixel is also connected to the pixel at the same location in the frame before and after the current frame (these lines are omitted from the above representation for clarity).}
\label{fig:grid3}
\end{figure}

We can explicitly compute the eigenvalues as $\lambda_i + \mu_j + \nu_k$ (cf. Theorem 3.5 in \cite{mohar1991}), where
\begin{align*}
\lambda_i &=\ 4\sin^2\left(\frac{\pi i}{200}\right), \quad i=0, \ldots, 99, \\
\mu_j &=\ 4\sin^2\left(\frac{\pi j}{200}\right), \quad j=0, \ldots, 99, \\
\nu_k &=\ 4\sin^2\left(\frac{\pi k}{18}\right), \quad k=0, \ldots, 8.
\end{align*}
The corresponding eigenvectors are given by the tensor products $w^{(k)} \otimes v^{(j)} \otimes u^{(i)}$ of eigenvectors of the path graph with sizes $100$ ($u$), $100$ ($v$) and $9$ ($w$) as in Equation (\ref{eq:eigenvector}). Using the noisy images, we estimate the location of the ball as show in Figures \ref{fig:ballp} and \ref{fig:ballest}. We observe that the object is located in all images and that the additional object in frame $5$ adds some noise in the probability estimates, but is ignored when we truncate at probability $0.5$.

\begin{figure}[H]
\begin{minipage}{0.33\textwidth}
\includegraphics[width=\textwidth]{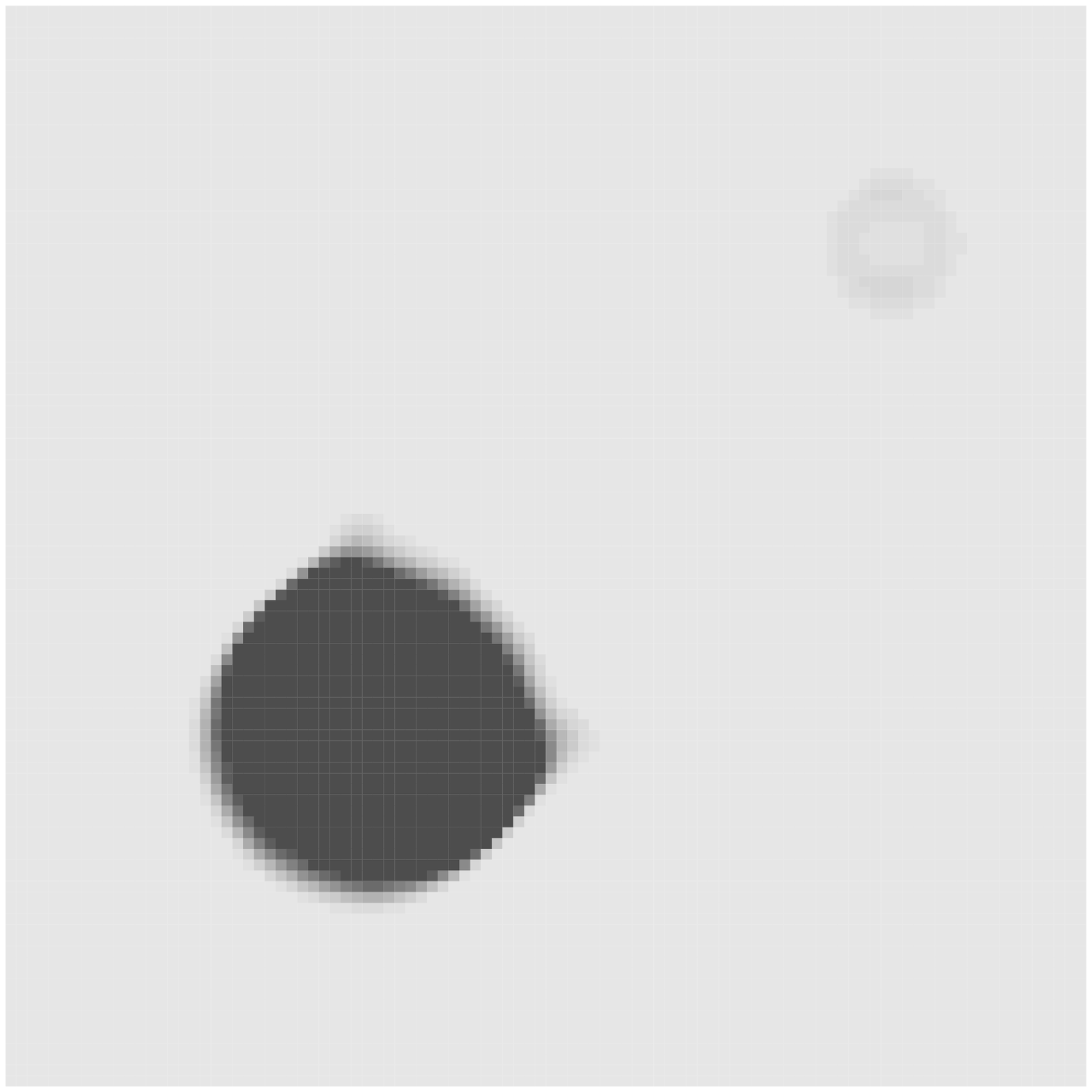}
\includegraphics[width=\textwidth]{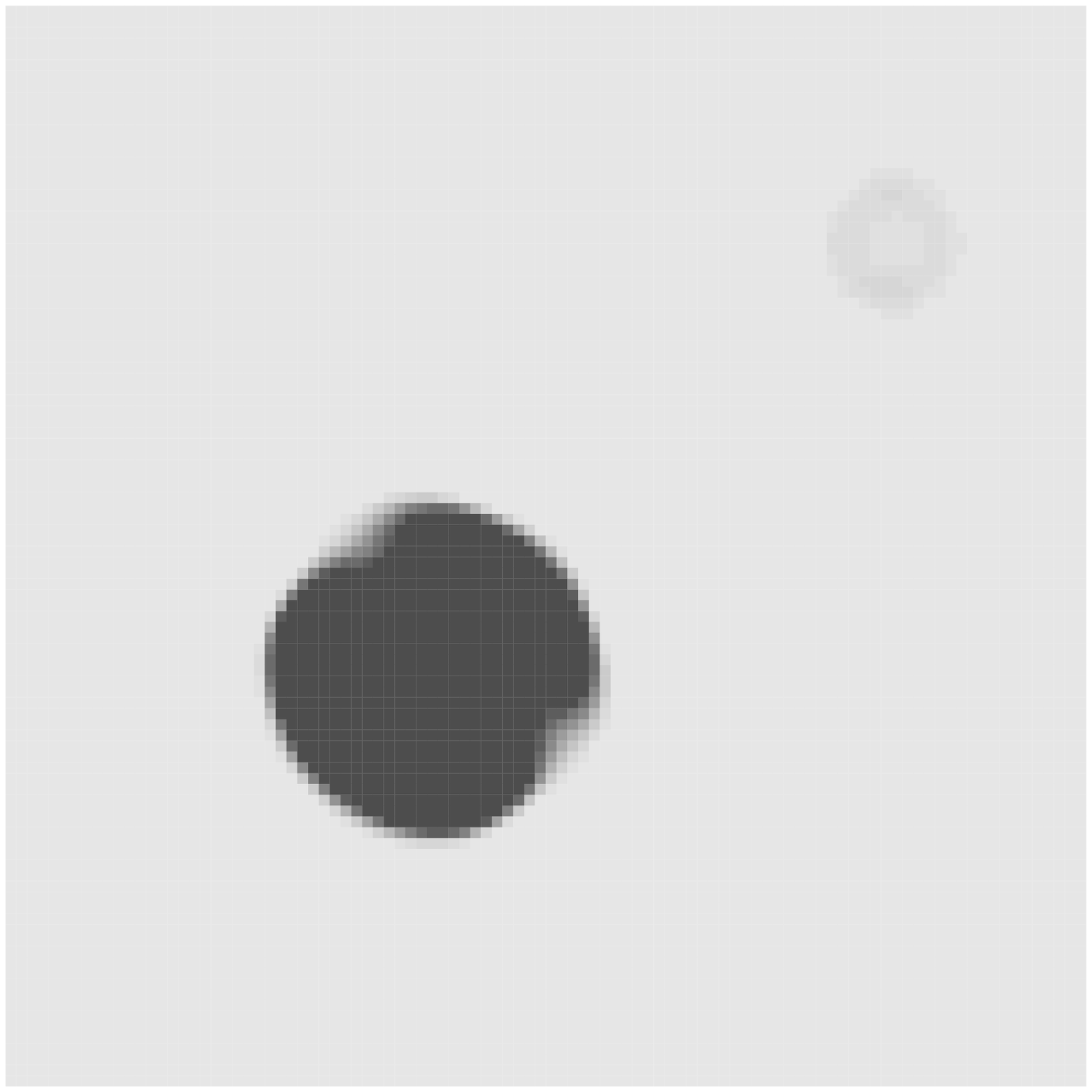}
\includegraphics[width=\textwidth]{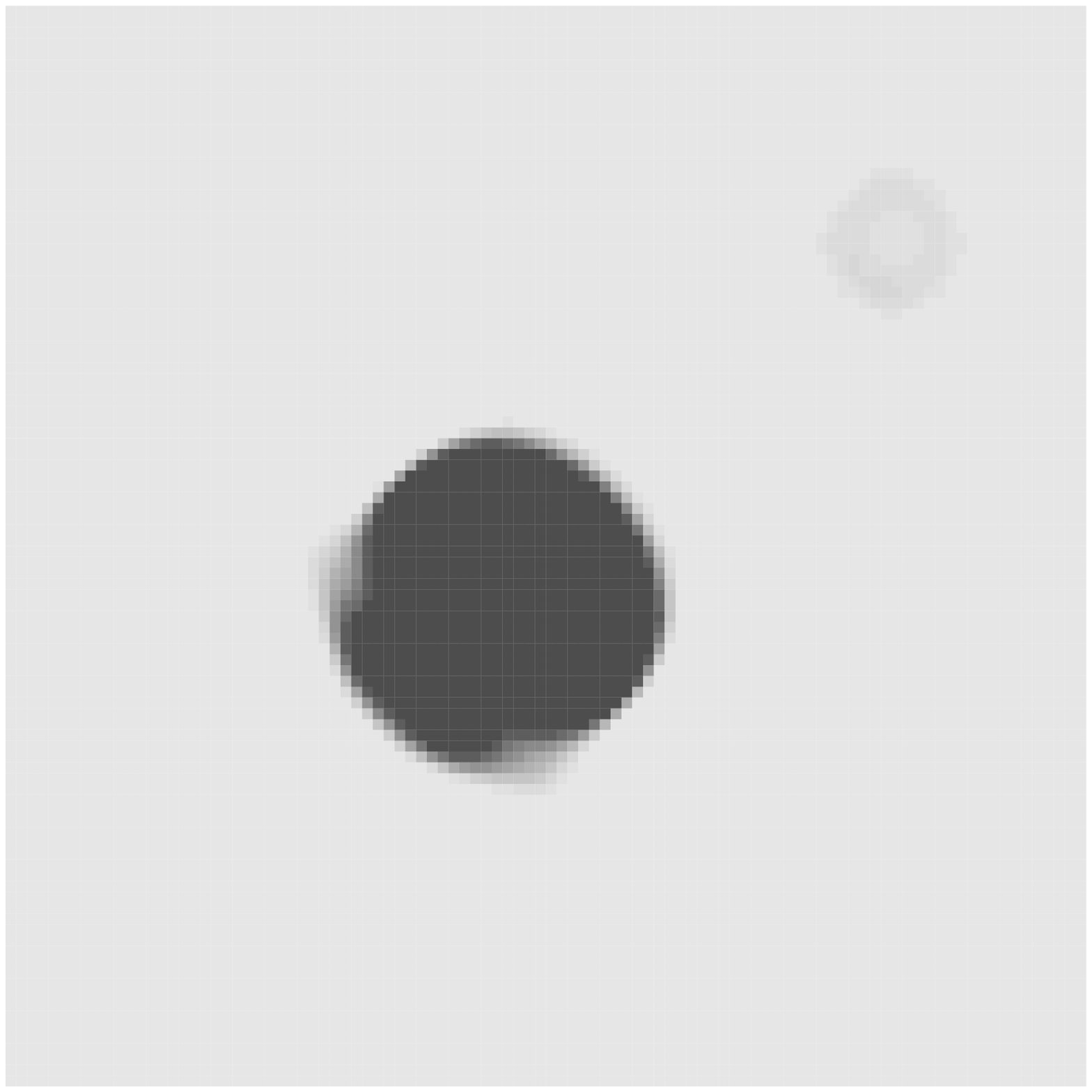}
\end{minipage}
\begin{minipage}{0.33\textwidth}
\includegraphics[width=\textwidth]{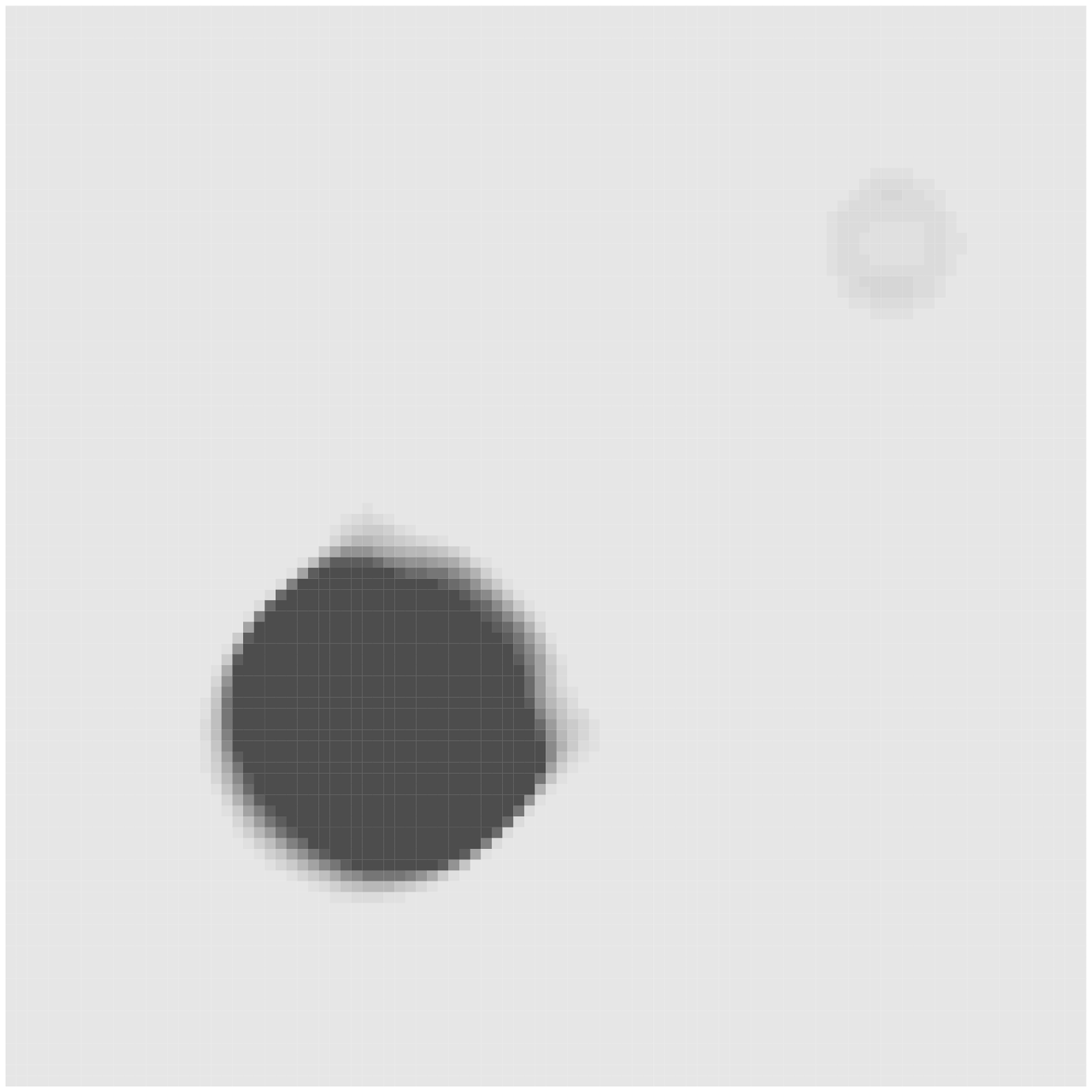}
\includegraphics[width=\textwidth]{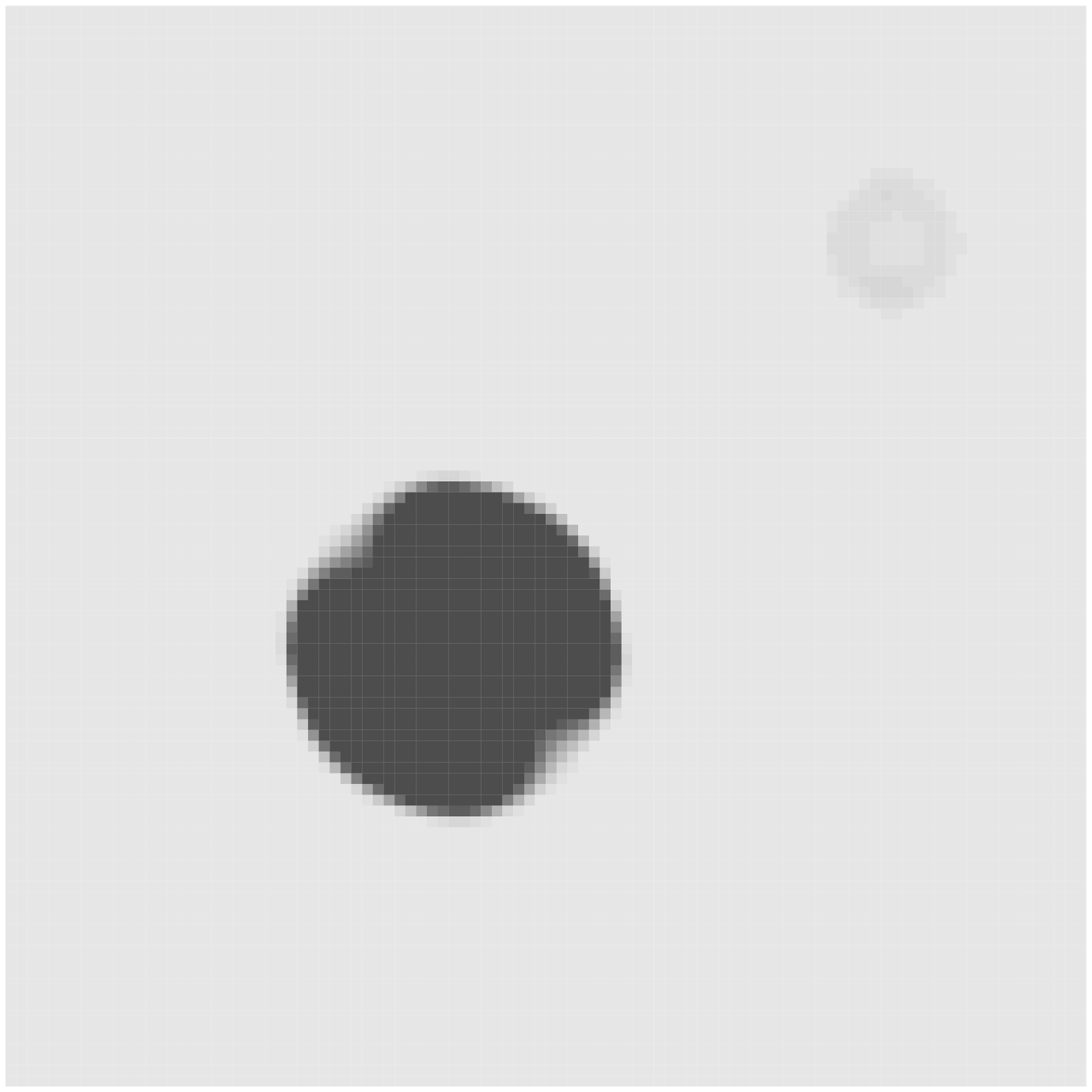}
\includegraphics[width=\textwidth]{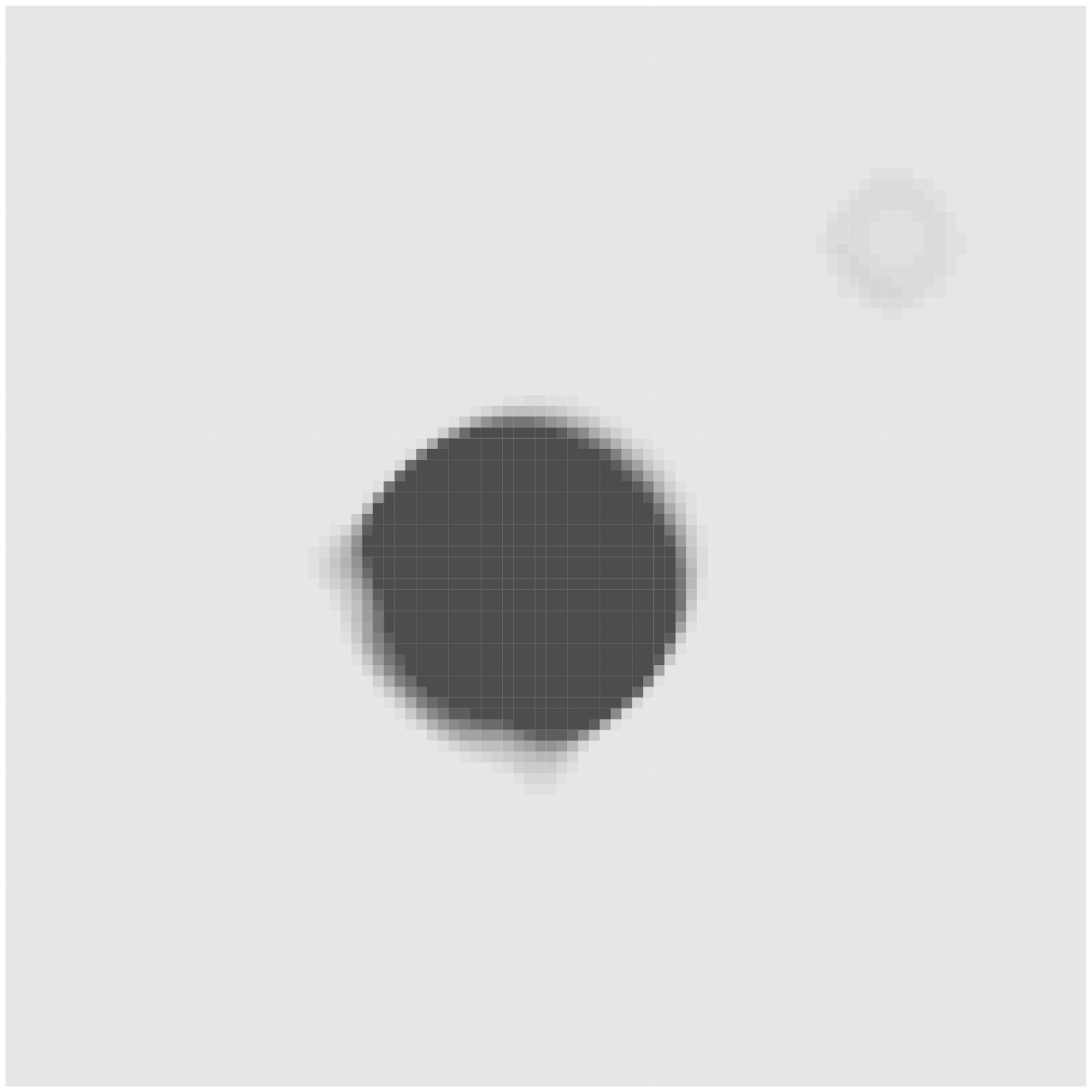}
\end{minipage}
\begin{minipage}{0.33\textwidth}
\includegraphics[width=\textwidth]{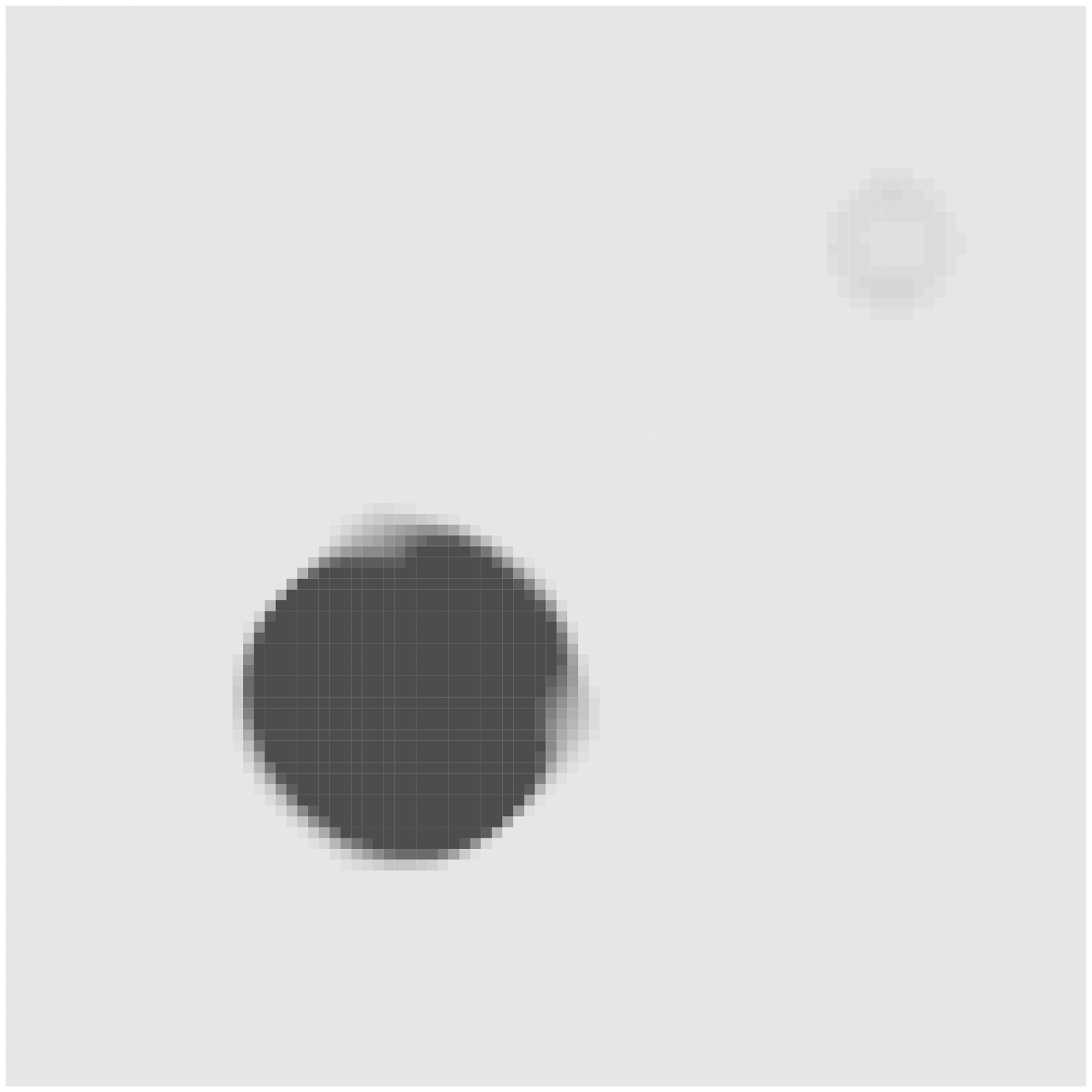}
\includegraphics[width=\textwidth]{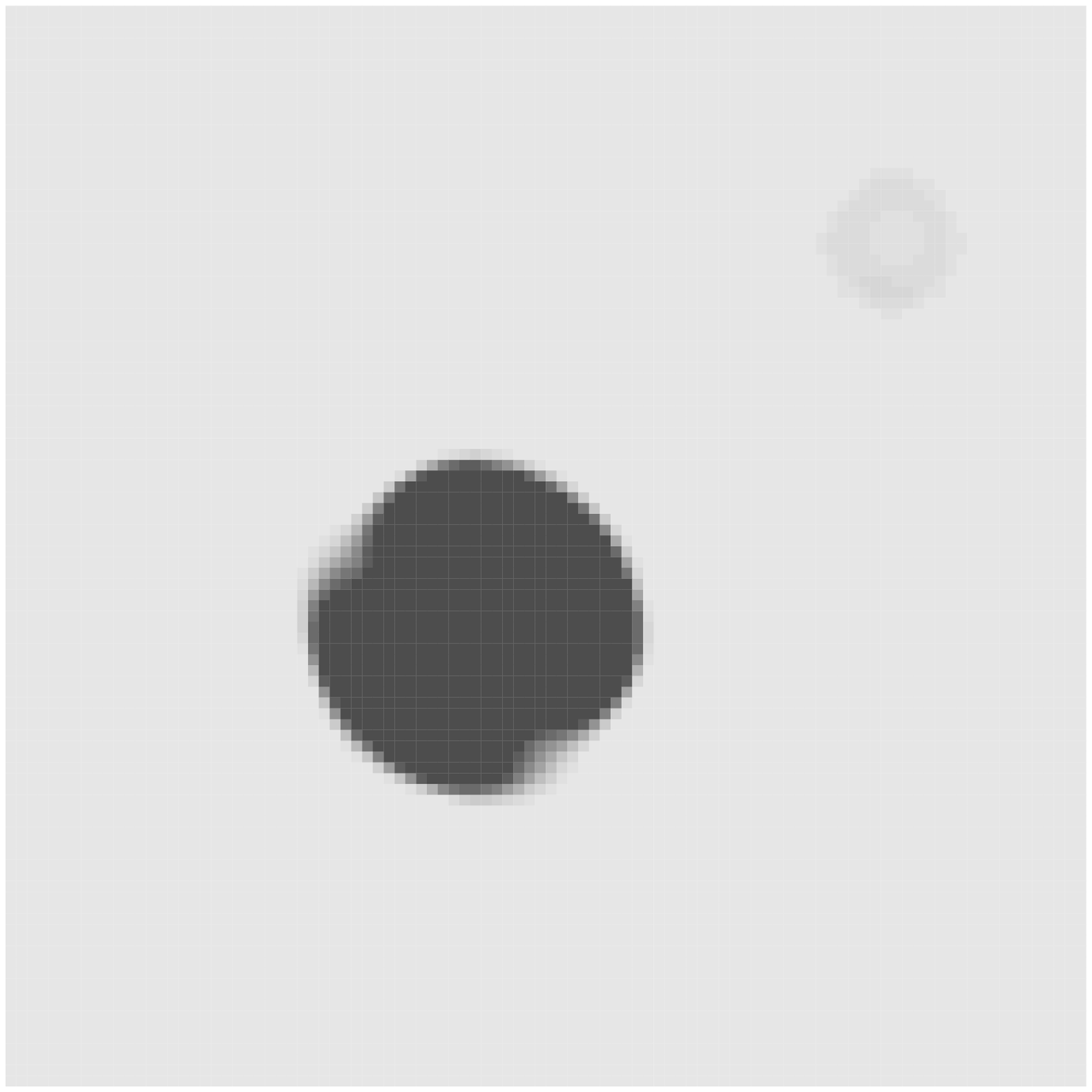}
\includegraphics[width=\textwidth]{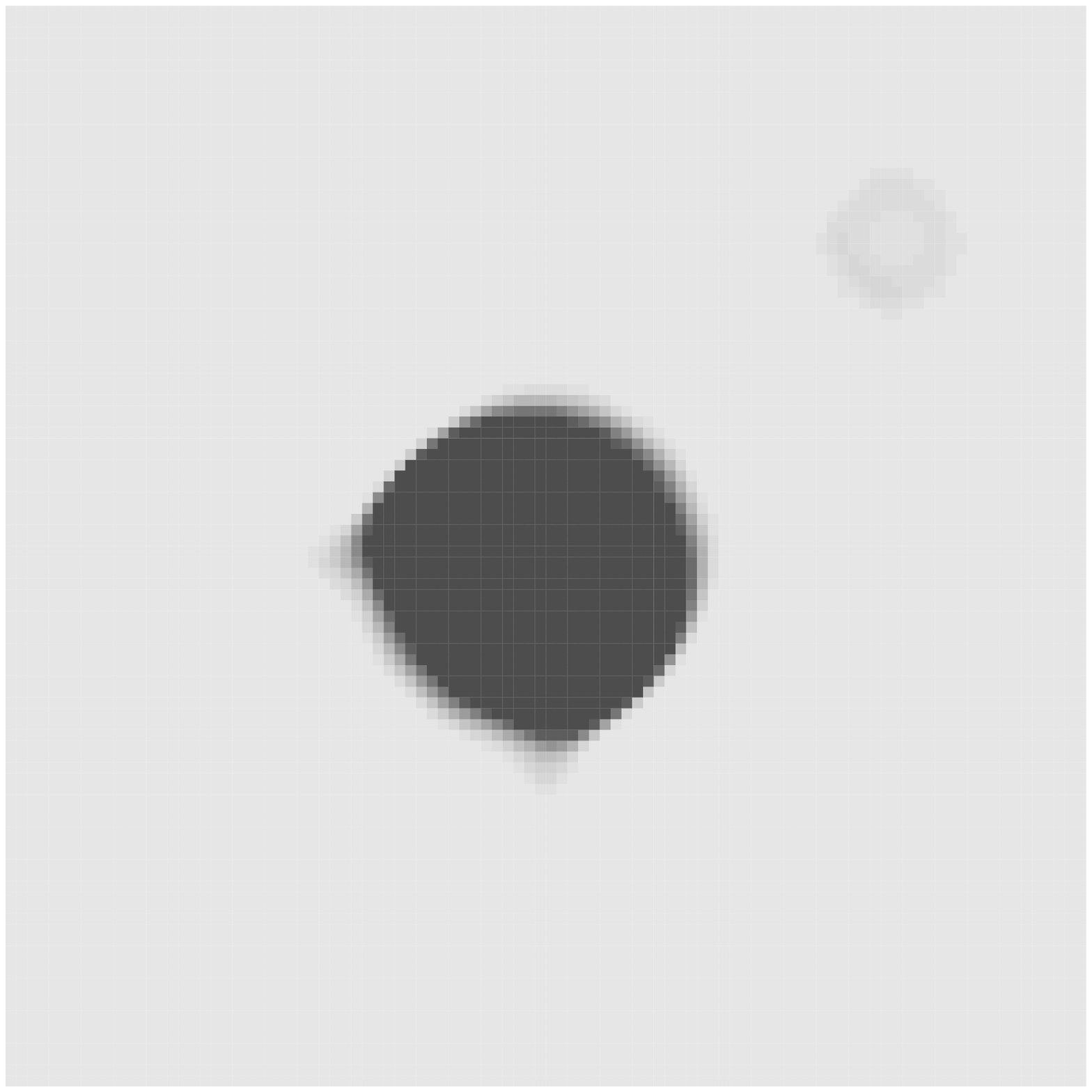}
\end{minipage}
\caption{Estimated probabilities of location of object in a noisy environment. The gray scale represents a probability of being a black pixels where dark is close to $1$ and light is close to $0$.}\label{fig:ballp}
\end{figure}

\begin{figure}[H]
\begin{minipage}{0.33\textwidth}
\includegraphics[width=\textwidth]{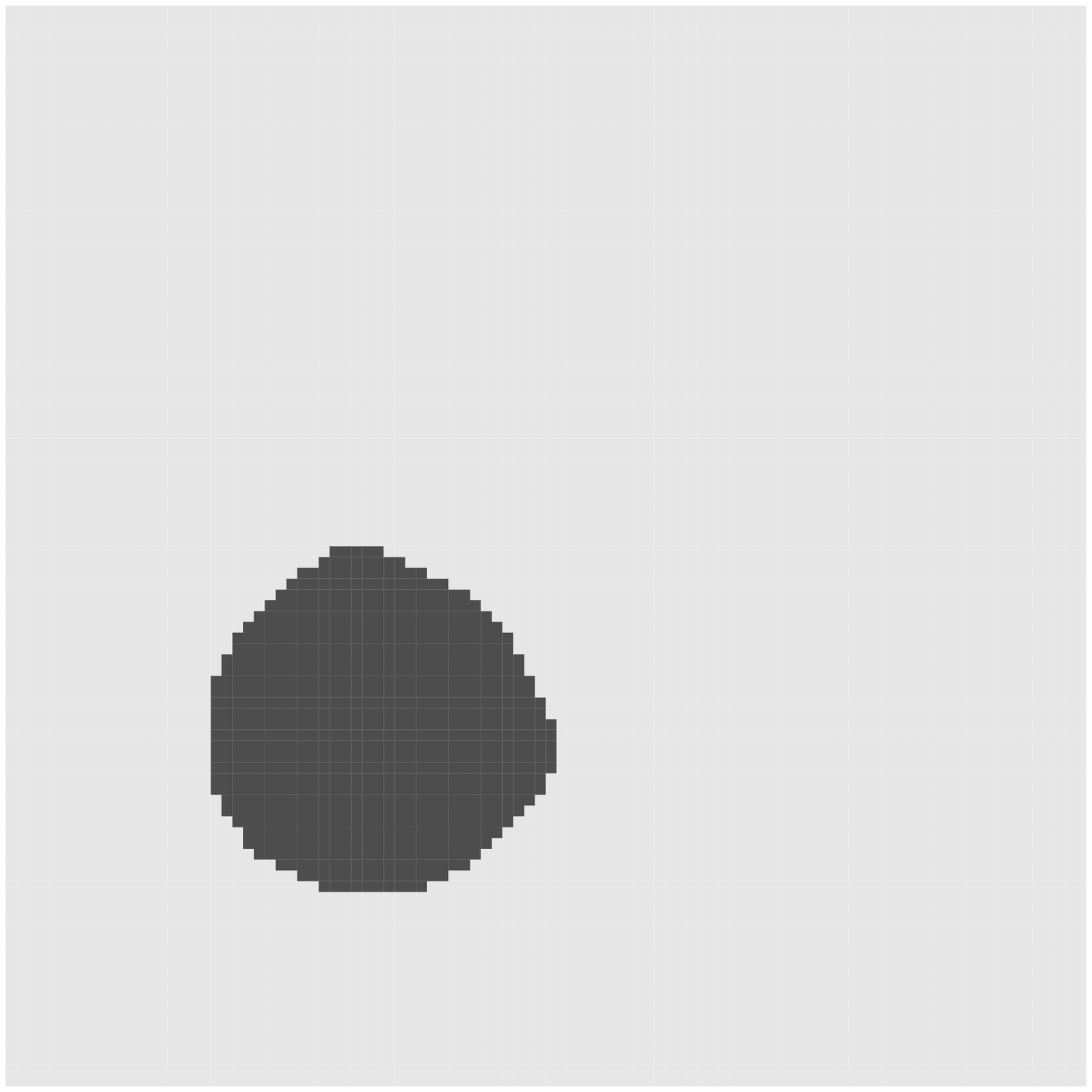}
\includegraphics[width=\textwidth]{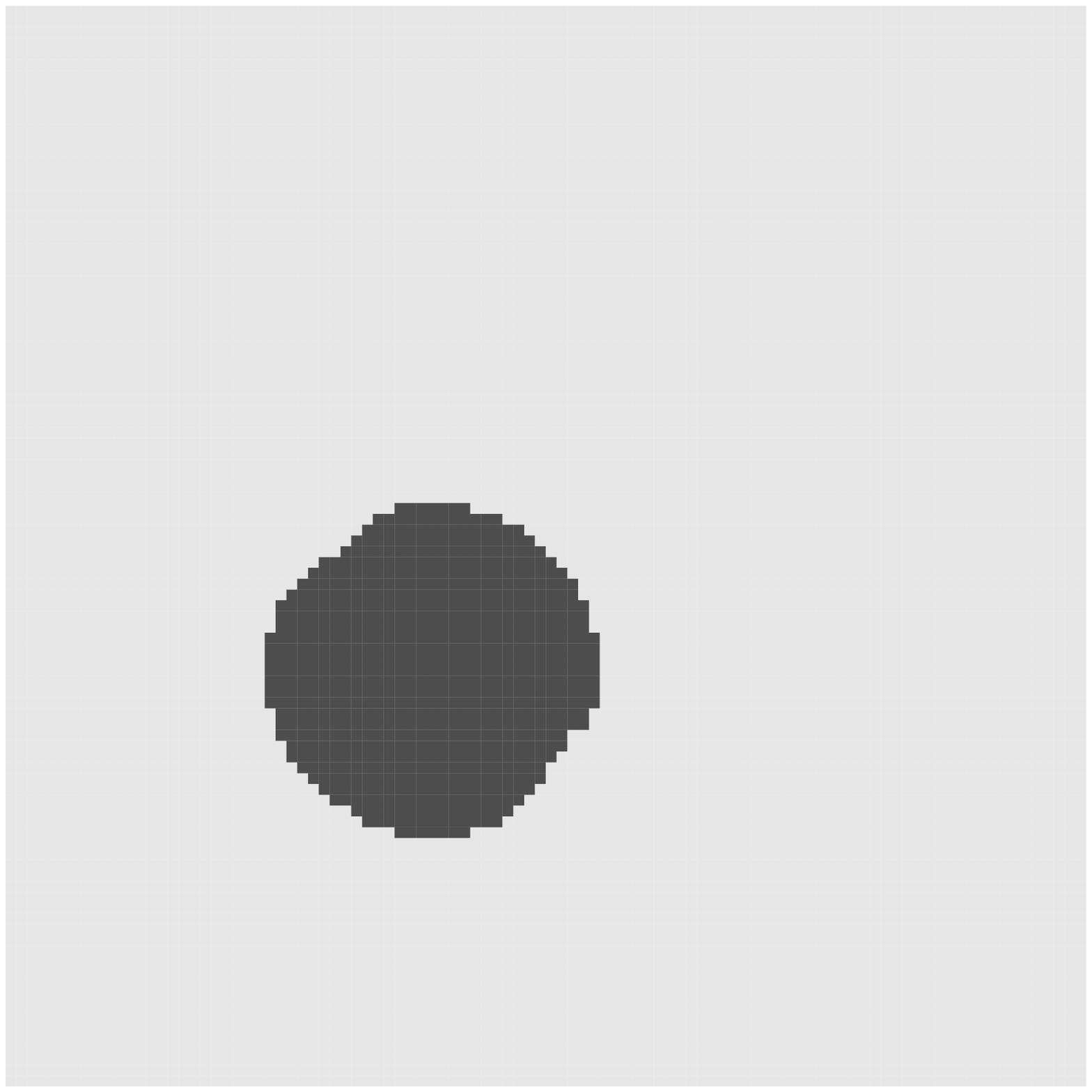}
\includegraphics[width=\textwidth]{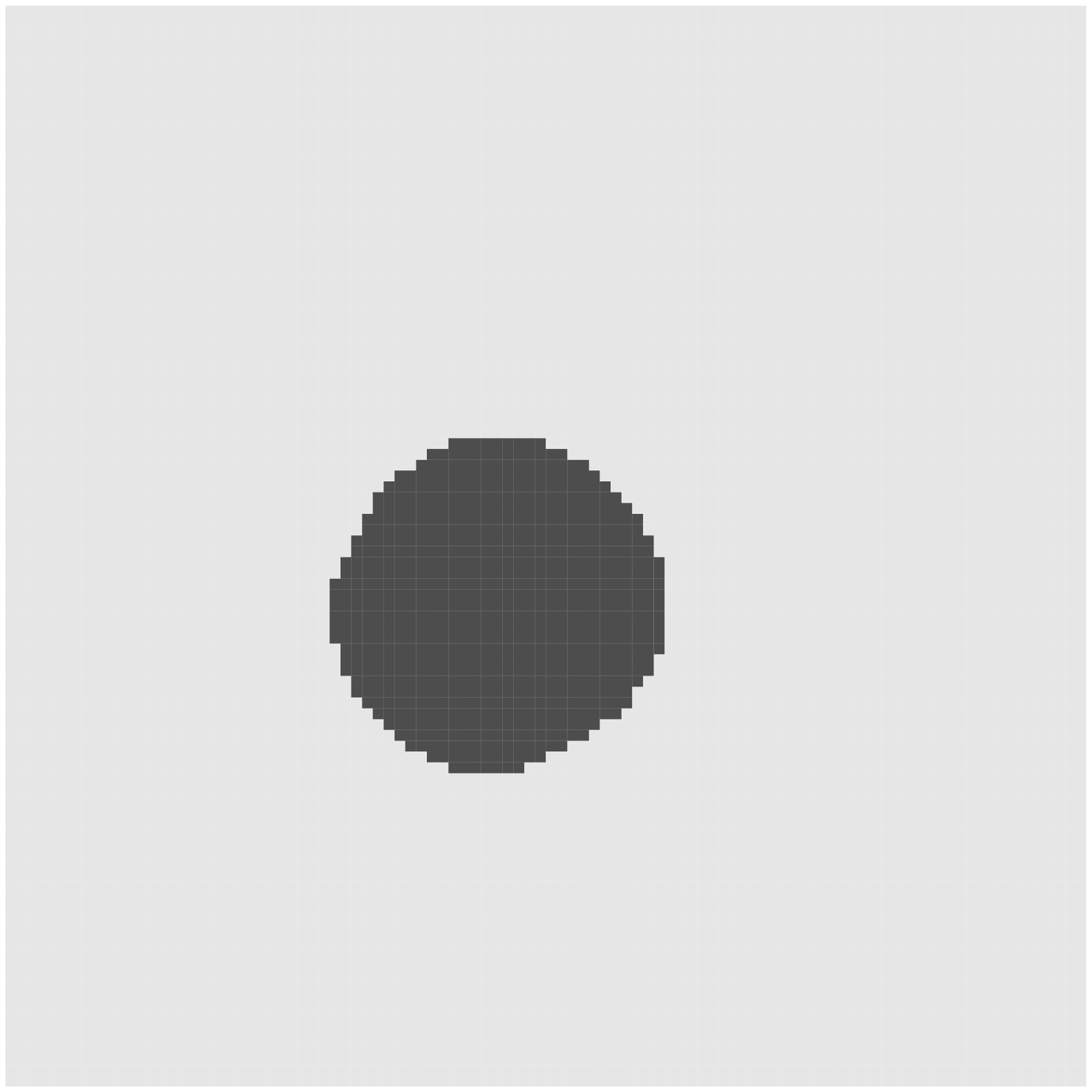}
\end{minipage}
\begin{minipage}{0.33\textwidth}
\includegraphics[width=\textwidth]{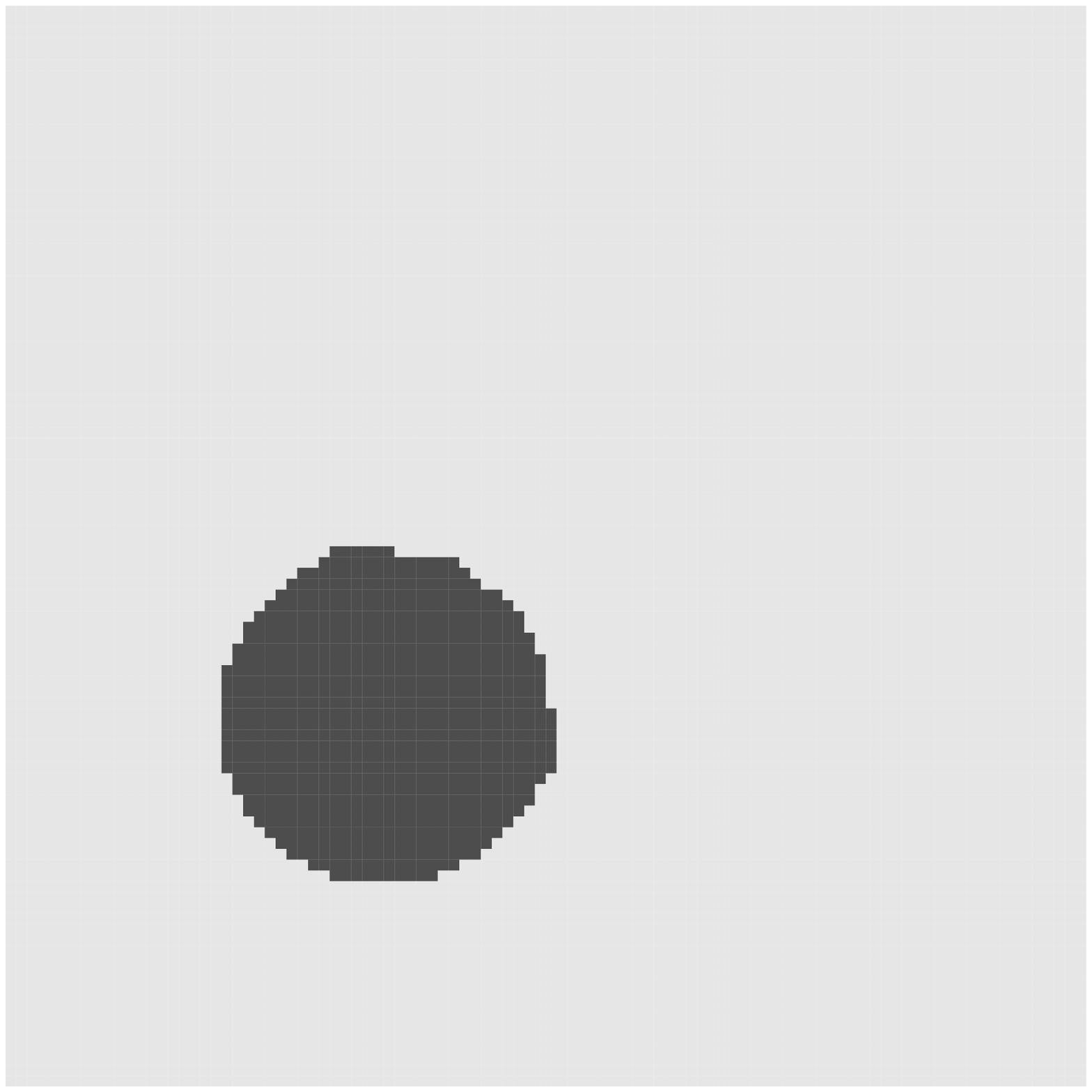}
\includegraphics[width=\textwidth]{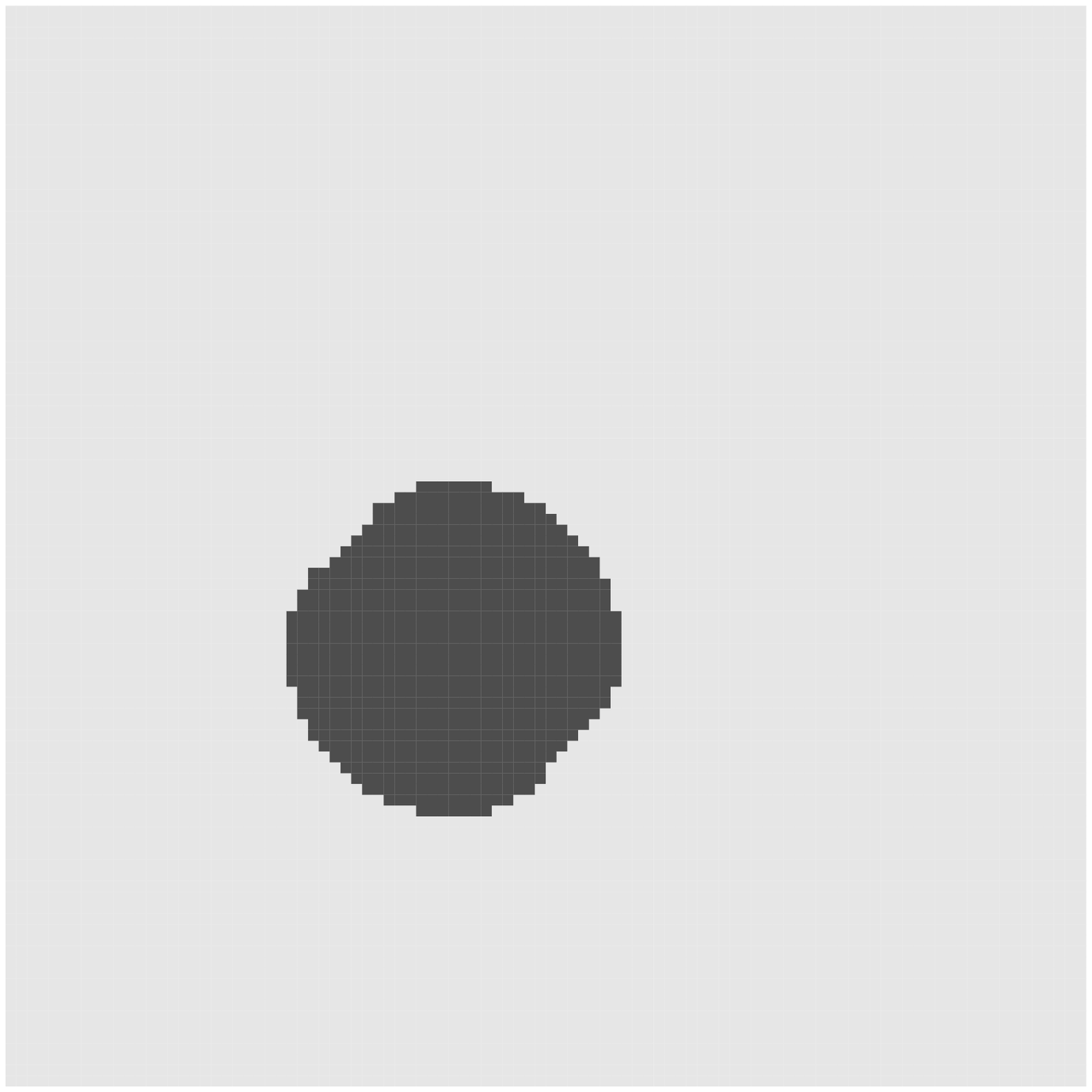}
\includegraphics[width=\textwidth]{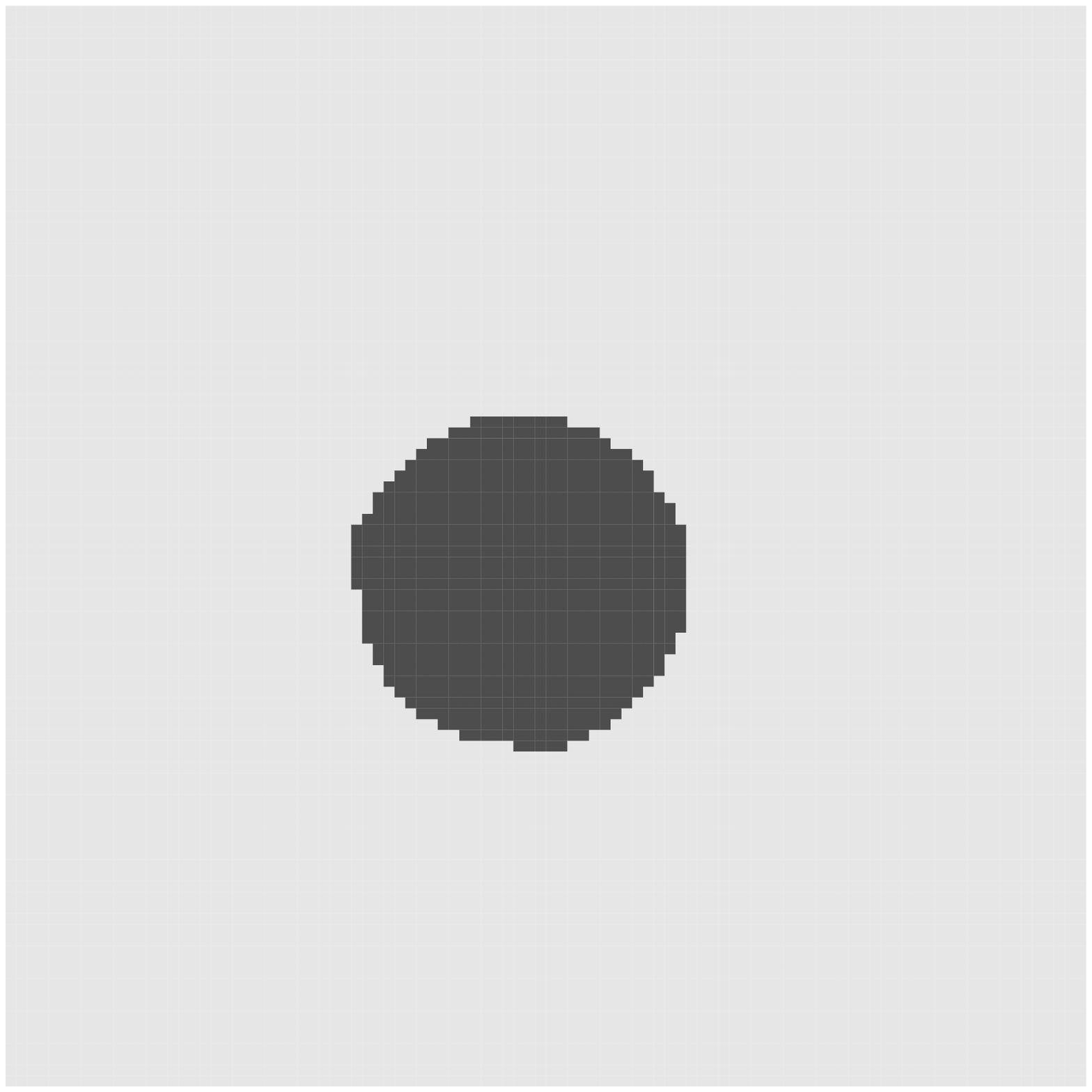}
\end{minipage}
\begin{minipage}{0.33\textwidth}
\includegraphics[width=\textwidth]{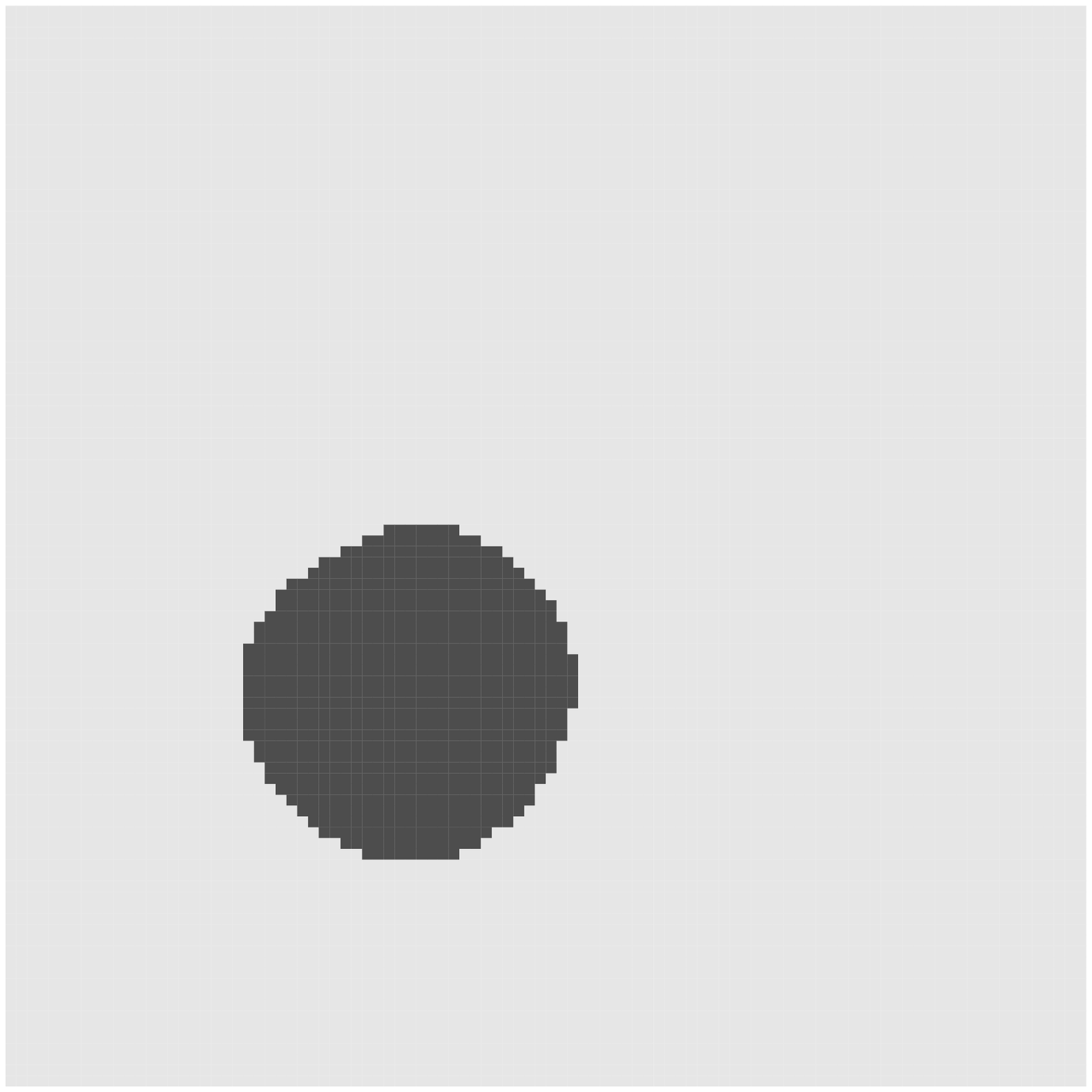}
\includegraphics[width=\textwidth]{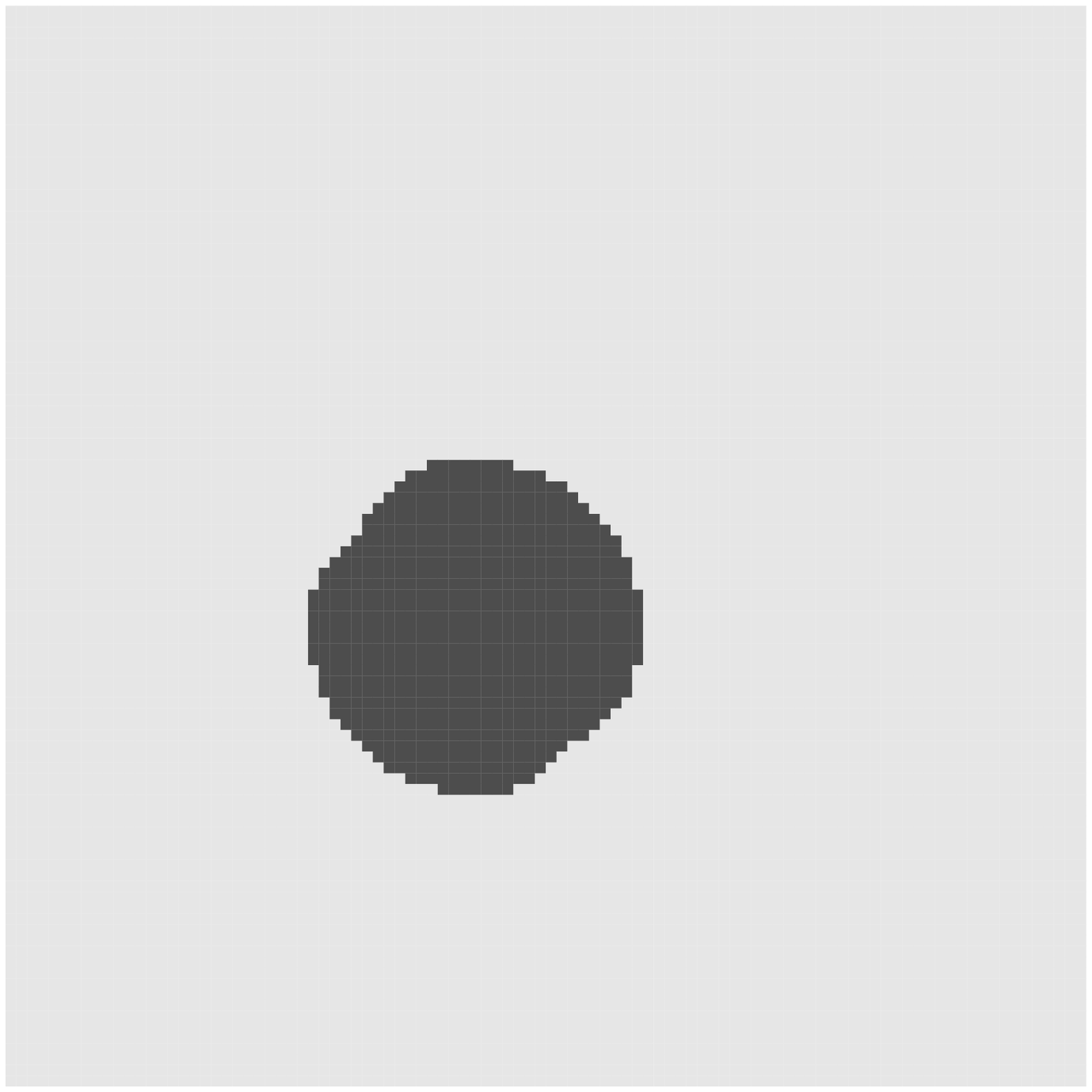}
\includegraphics[width=\textwidth]{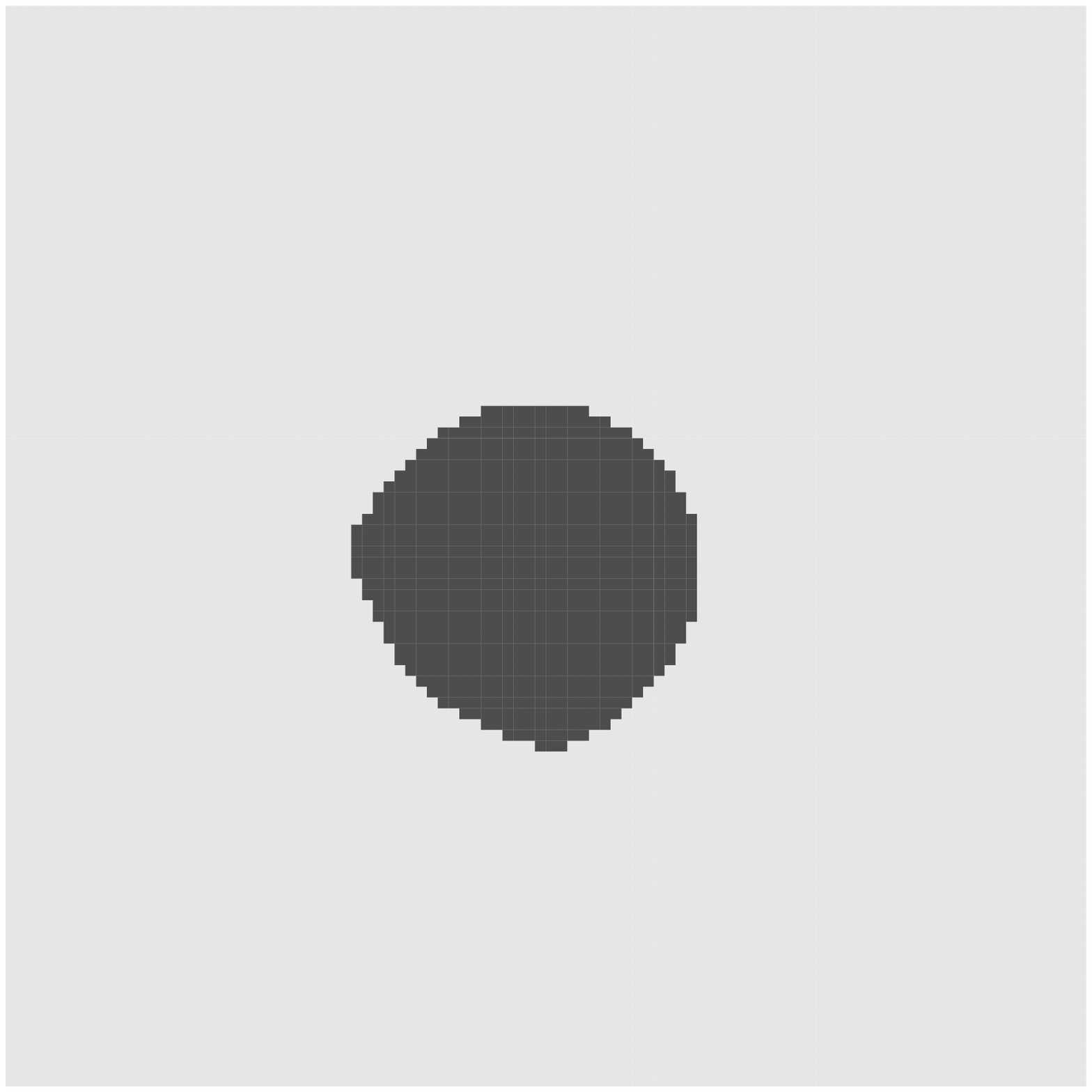}
\end{minipage}
\caption{Estimated location of object in a noisy environment. We truncated the probabilities from Figure \ref{fig:ballp} at $0.5$ to decide whether the pixels belong to the object or not.}\label{fig:ballest}
\end{figure}

An  advantage of the Bayes procedure that we use is that we obtain credible intervals as indicators of uncertainty in our prediction. The width of these intervals per pixel are shown in Figure \ref{fig:ballq}. We observe that the uncertainty around the boundary of the object is relatively high, whereas it is relatively low inside and outside of the object.

\begin{figure}[H]
\begin{minipage}{0.33\textwidth}
\includegraphics[width=\textwidth]{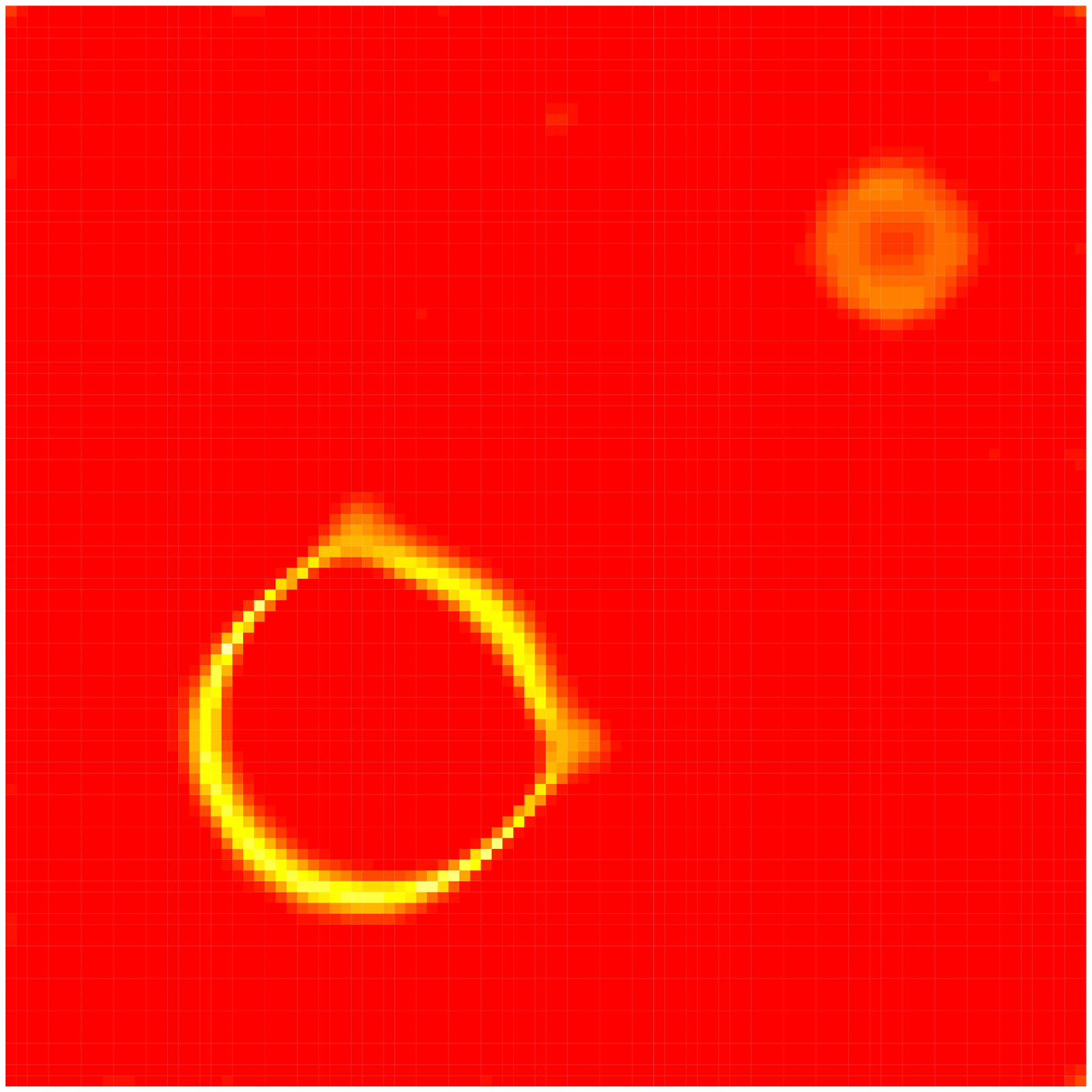}
\includegraphics[width=\textwidth]{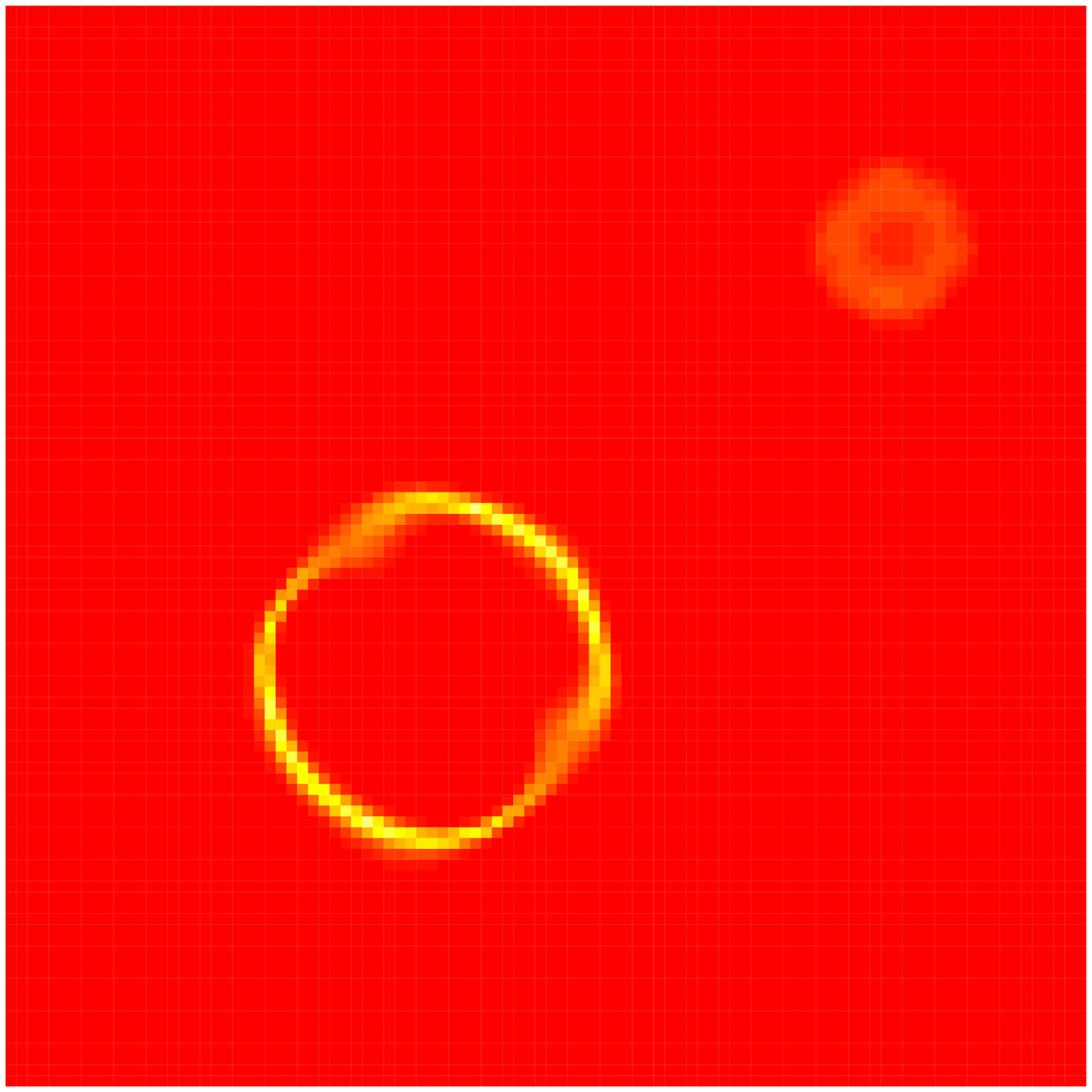}
\includegraphics[width=\textwidth]{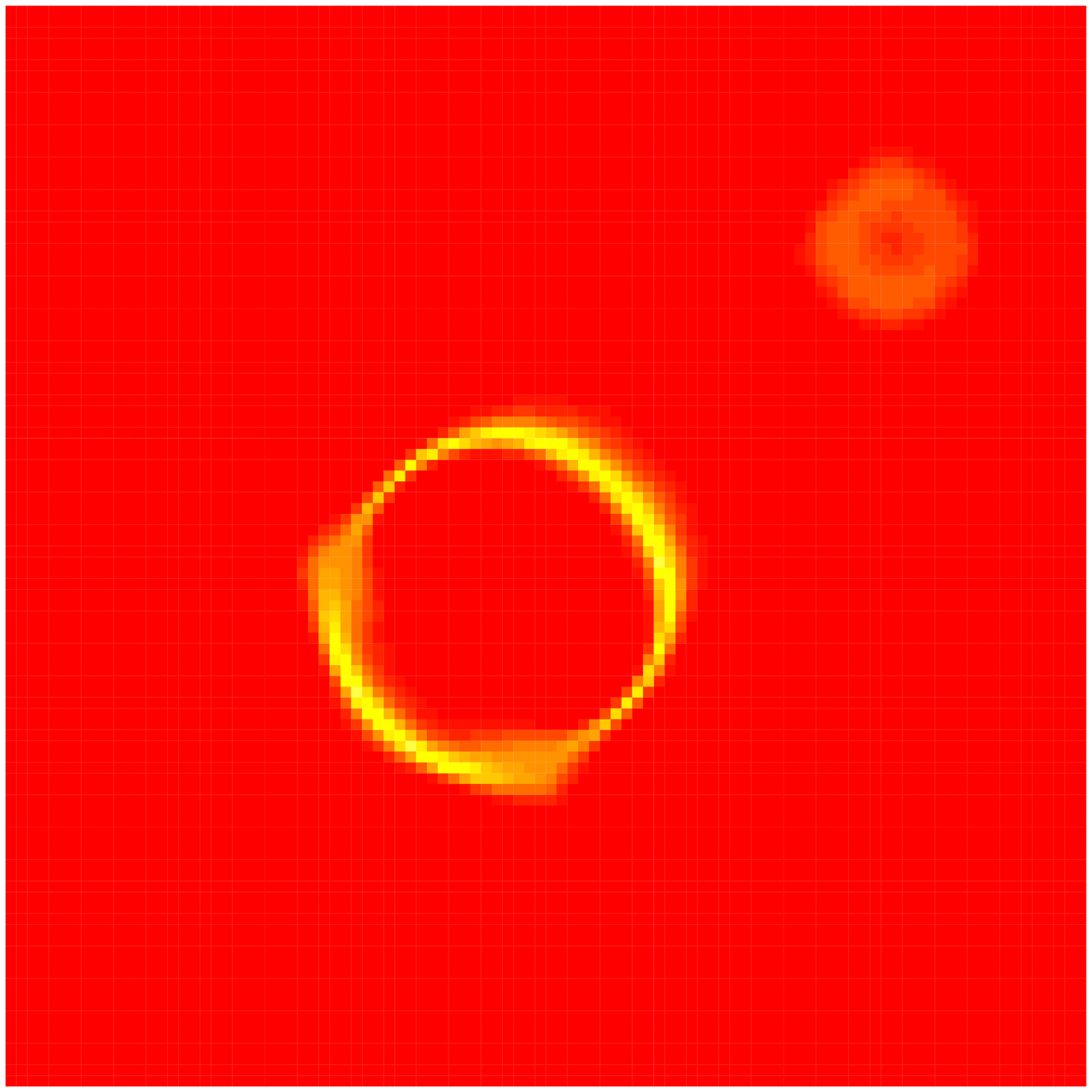}
\end{minipage}
\begin{minipage}{0.33\textwidth}
\includegraphics[width=\textwidth]{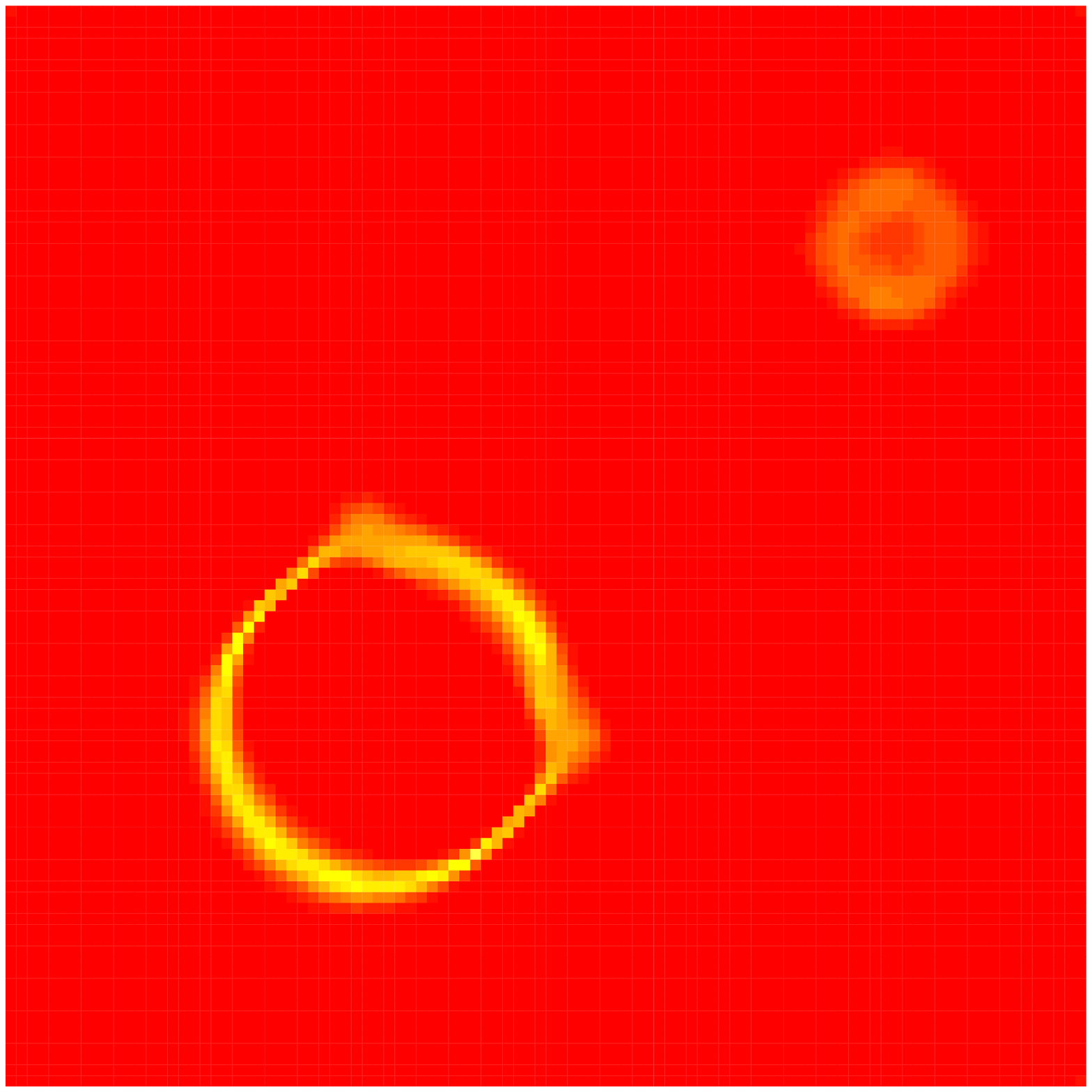}
\includegraphics[width=\textwidth]{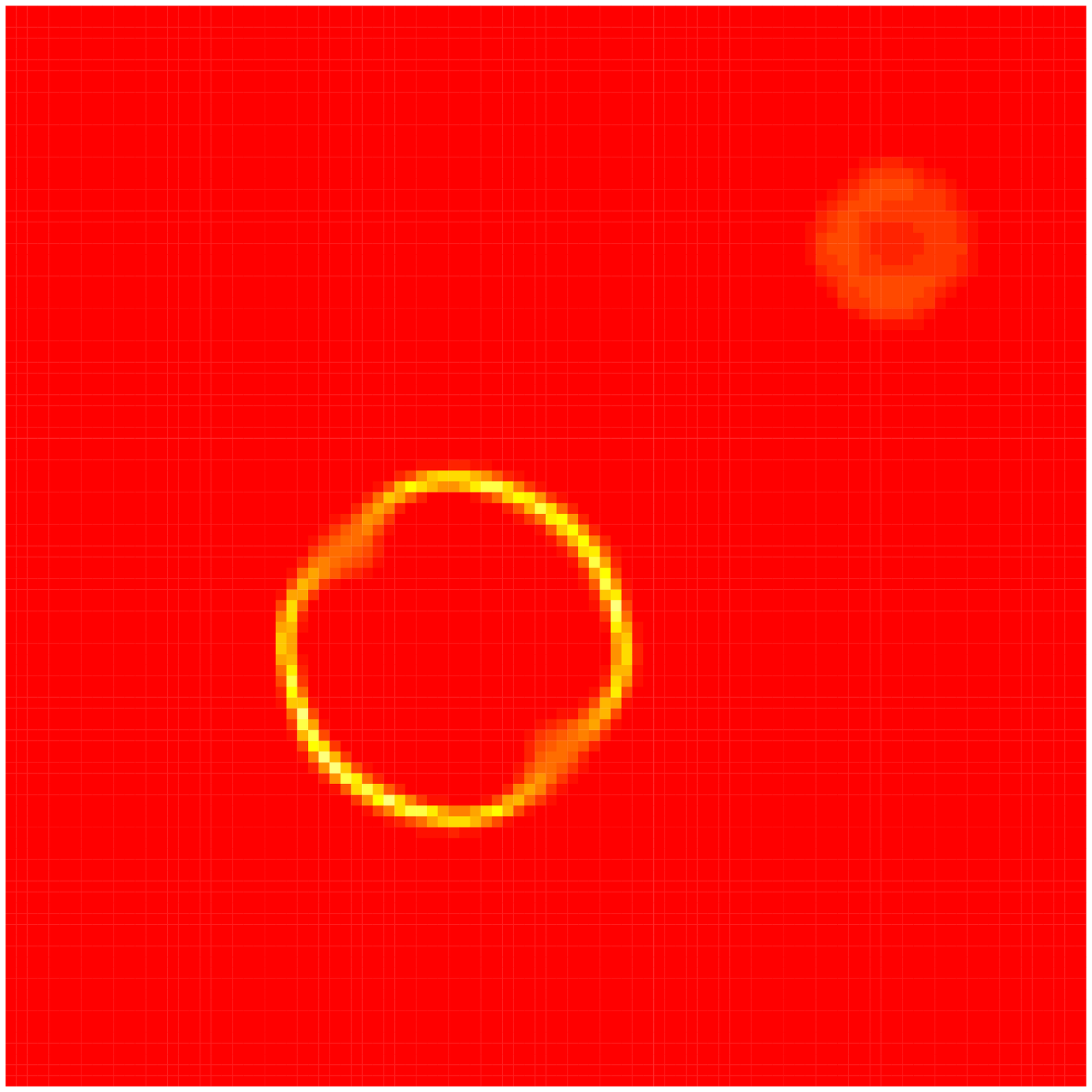}
\includegraphics[width=\textwidth]{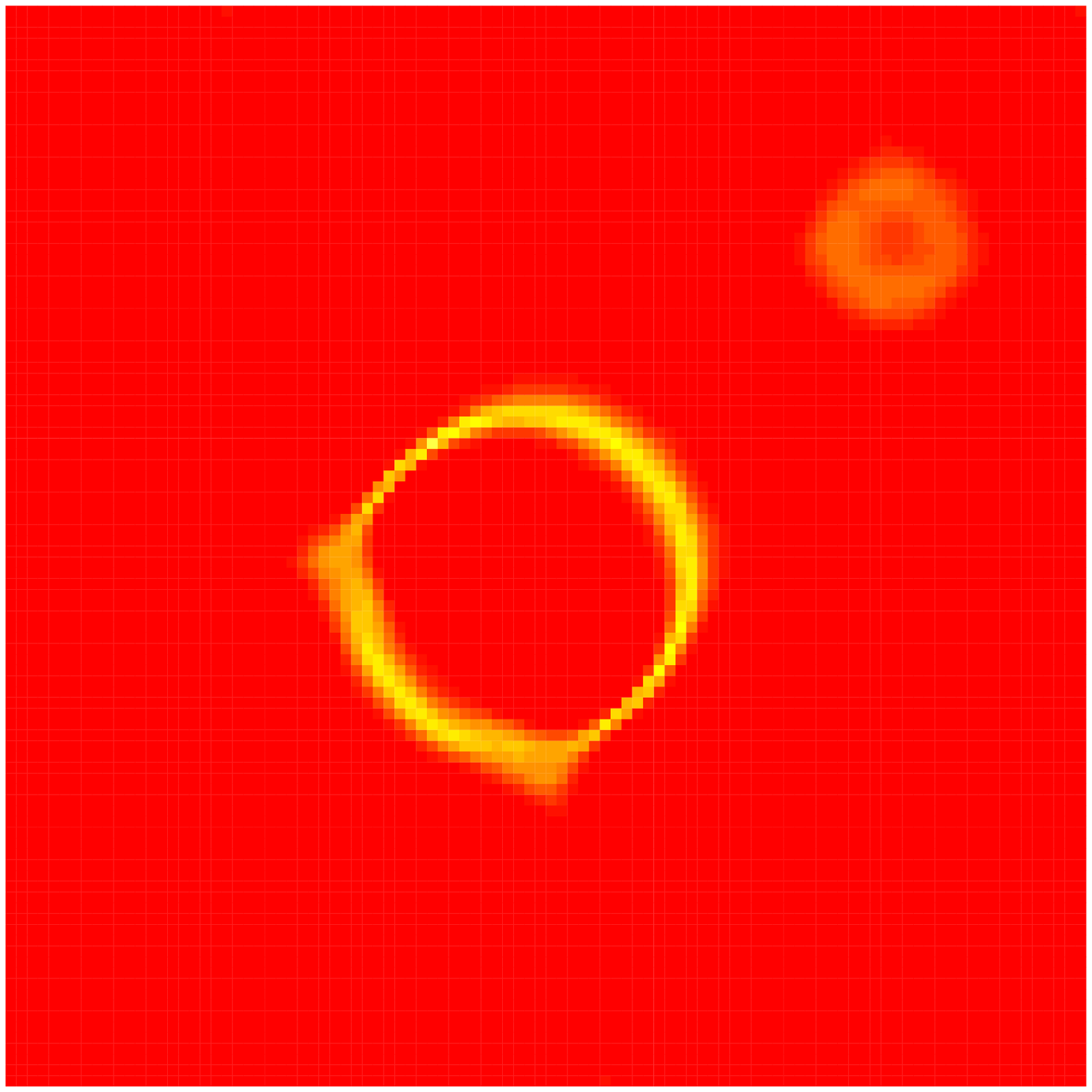}
\end{minipage}
\begin{minipage}{0.33\textwidth}
\includegraphics[width=\textwidth]{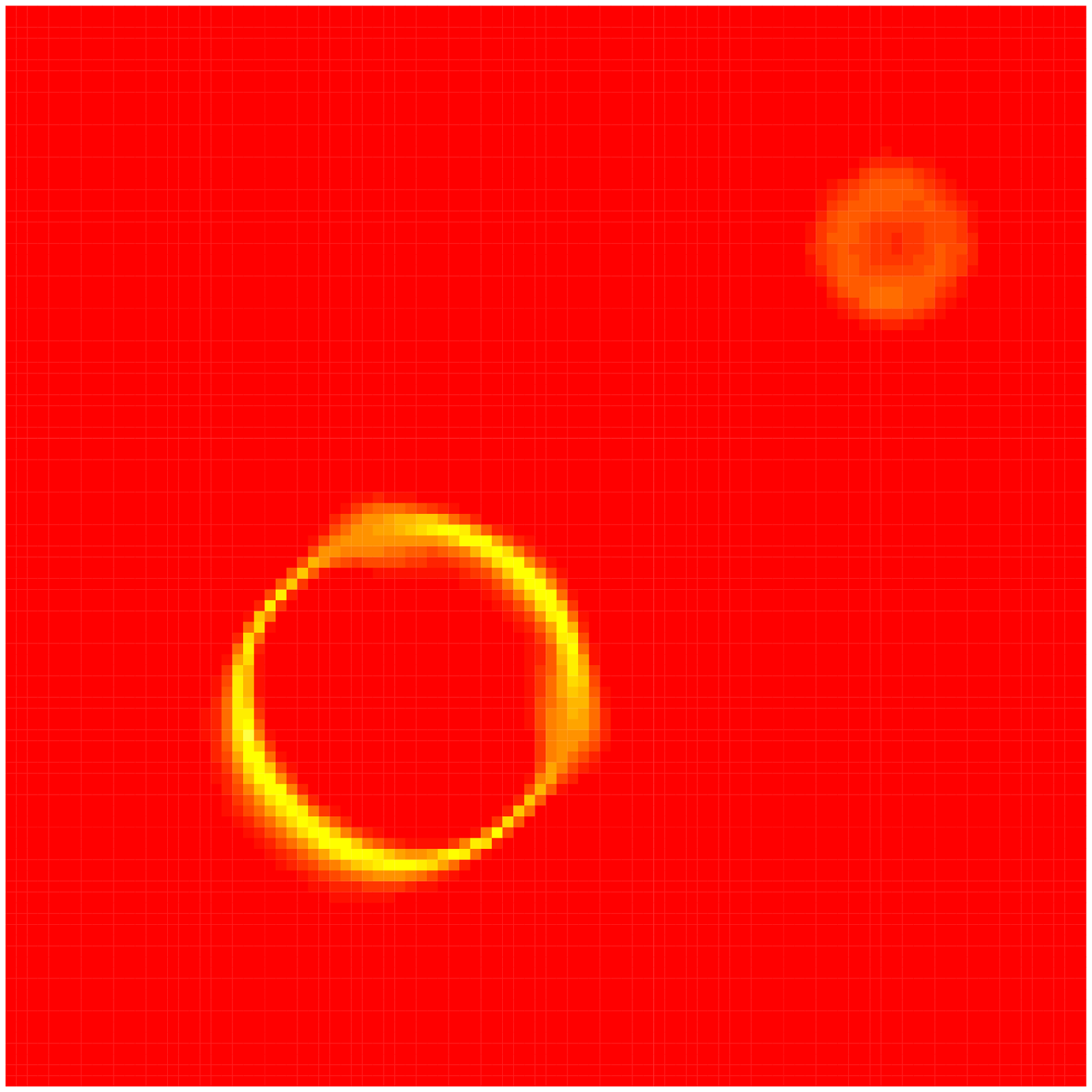}
\includegraphics[width=\textwidth]{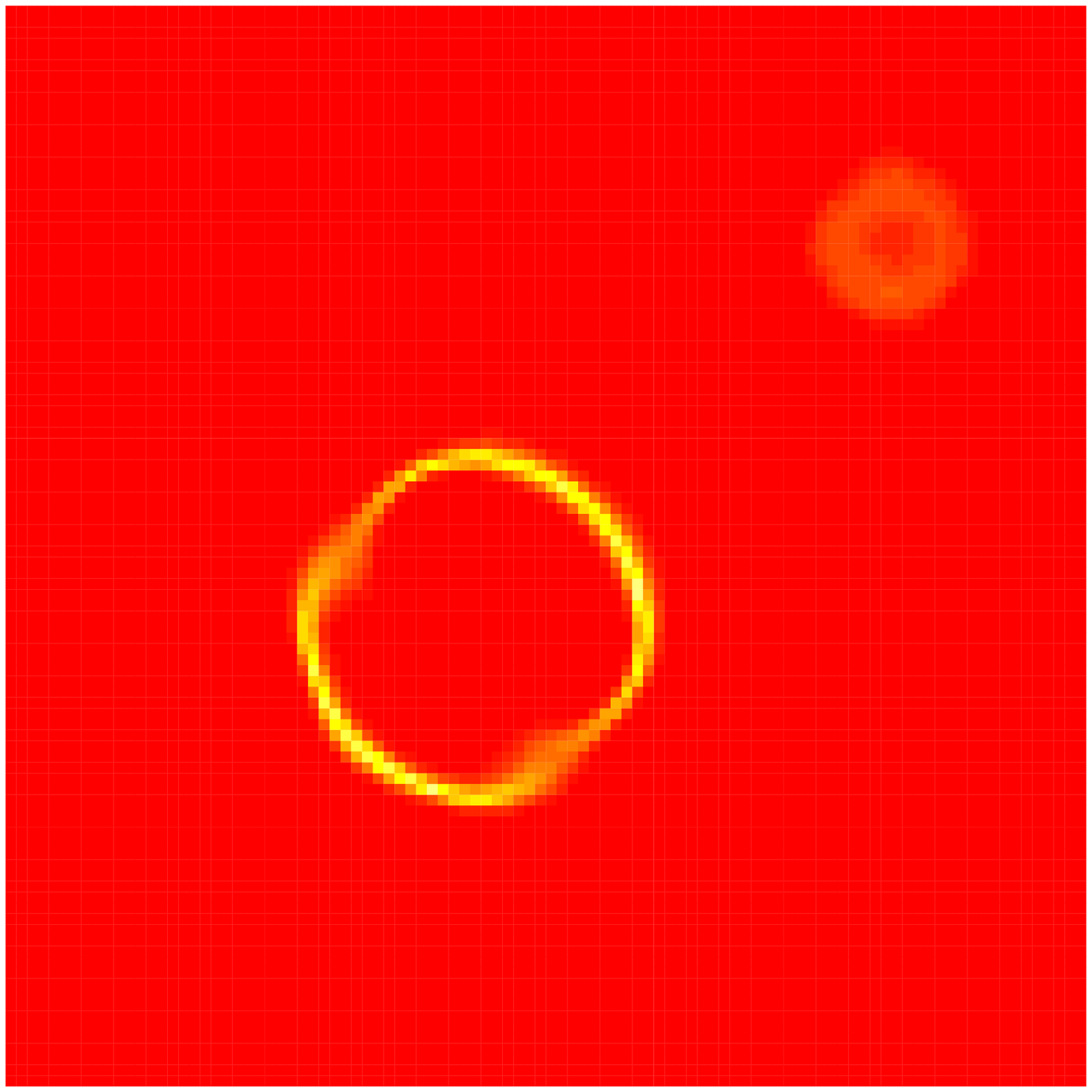}
\includegraphics[width=\textwidth]{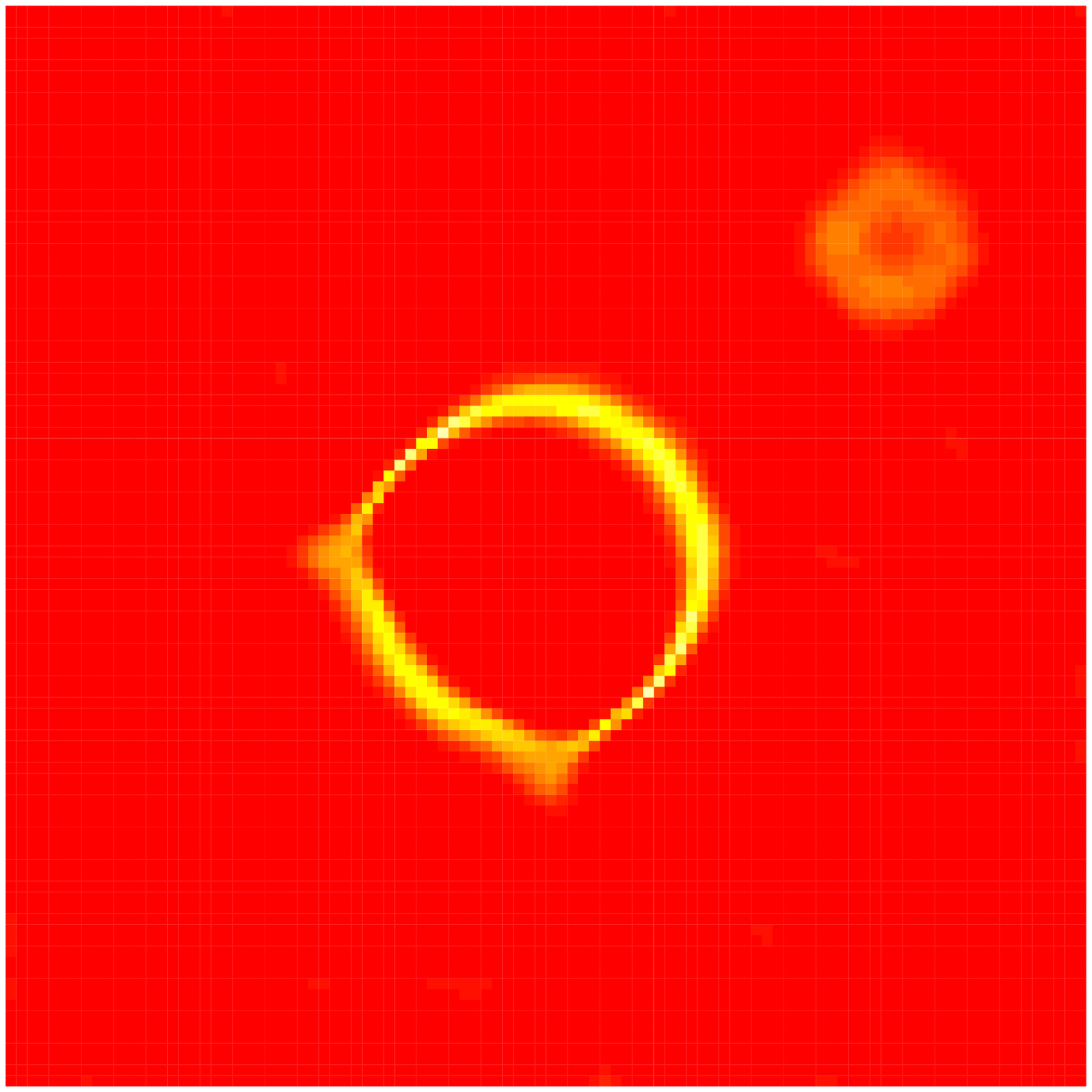}
\end{minipage}
\caption{Uncertainty in of the predicted probability for each pixel computed as the width of the $95\%$ credible interval. The heat map is from red (low uncertainty, width of credible interval close to $0$) to white (high uncertainty, width of credible interval close to $0.25$). }\label{fig:ballq}
\end{figure}

\section{Concluding remarks}

We have described an implementation of a nonparametric Bayesian approach to solving binary classification problems on graphs. We have considered a hierarchical Bayesian approach with a randomly scaled Gaussian series prior as in \cite{jarno}, but  with a random truncation point. We have implemented the procedure using a reversible jump Markov chain Monte Carlo algorithm. 

Our numerical experiments suggest that good results are obtained using Algorithm \ref{alg:mcmc} using hyperparameters $a=b=0$ and $\gamma = 20/n$. We find that in the examples we studies, the random truncation point results in a superior performance compared to the method proposed in \cite{jarno}
in terms of computational effort, while the prediction performance remains 
comparable.  We have also demonstrated that our proposed method is scalable to large graphs.

\end{document}